\def\eqref#1{equation~\ref{#1}}
\def\1{\bm{1}}
\DeclareMathAlphabet{\mathsfit}{\encodingdefault}{\sfdefault}{m}{sl}
\SetMathAlphabet{\mathsfit}{bold}{\encodingdefault}{\sfdefault}{bx}{n}
\newcommand{\E}{\mathbb{E}}
\newcommand{\R}{\mathbb{R}}
\renewcommand{\tilde}{\widetilde}
\newtheorem{theorem}{Theorem}[section]
\newtheorem{corollary}[theorem]{Corollary}
\newtheorem{lemma}[theorem]{Lemma}
\newtheorem{proposition}[theorem]{Proposition}
\newtheorem{definition}[theorem]{Definition}
\newtheorem{remark}[theorem]{Remark}
\renewcommand{\tilde}{\widetilde}
\newcommand{\norm}[1]{\left\|#1\right\|}
\providecommand{\expect}[2]{\ensuremath{\ifthenelse{\equal{#1}{}}{\mathbb{E}}{\mathbb{E}_{#1}}\!\left[#2\right]}\xspace}
\providecommand{\prob}[2]{\ensuremath{\ifthenelse{\equal{#1}{}}{\Pr}{\Pr_{#1}}\!\left[#2\right]}\xspace}
\newcommand{\eps}{\epsilon}
\newcommand{\mO}{{O}_\rho}
\newcommand{\heavy}{\textsf{heavy}}
\newcommand{\light}{\textsf{light}}
\newcommand{\mT}{\mathcal{T}}
\newcommand{\mH}{{H}}
\newcommand{\mS}{{S}_L}
\newcommand{\oT}{\overline{T}}
\newcommand{\EX}{\textbf{\textrm{Ex}}}
\newcommand{\Var}{\textbf{\textrm{Var}}}
\newcommand{\Cov}{\textbf{\textrm{Cov}}}
\newcommand{\poly}{\textrm{poly}}
\newcommand\sbullet[1][.5]{\mathbin{\vcenter{\hbox{\scalebox{#1}{$\bullet$}}}}}
\title{Triangle and Four Cycle Counting with Predictions in Graph Streams}
\author{Justin Y.~Chen, Piotr Indyk, Shyam Narayanan, Ronitt Rubinfeld, and Sandeep Silwal  \thanks{All authors contributed equally.}\\
Computer Science and Artificial Intelligence Laboratory\\
Massachusetts Institute of Technology \\
Cambridge, MA 02139, USA \\
\texttt{\{justc, indyk, shyamsn, ronitt, silwal\}@mit.edu} \\
\And
Honghao Lin, David P.~Woodruff, Michael Zhang \\
Computer Science Department\\
Carnegie Mellon University \\
Pittsburgh, PA 15213, USA \\
\texttt{\{honghaol, dwoodruf, jinyaoz\}@andrew.cmu.edu} \\

\And
Tal Wagner  \\
Microsoft Research \\
Redmond, WA 98052, USA \\
\texttt{tal.wagner@gmail.com}

\And
Talya Eden  \\
Massachusetts Institute of Technology and Boston University\\
Cambridge, MA 02139, USA \\
\texttt{teden@mit.edu} \\

}
\begin{document}

\maketitle

\begin{abstract}
We propose data-driven one-pass streaming algorithms for estimating the number of triangles and four cycles, two fundamental problems in graph analytics that are widely studied in the graph data stream literature. Recently, \cite{hsu2018} and \cite{Jiang2020} applied machine learning techniques in other data stream problems, using a trained oracle that can predict certain properties of the stream elements to improve on prior ``classical'' algorithms that did not use oracles. In this paper, we explore the power of a ``heavy edge'' oracle in multiple graph edge streaming models. 
In the adjacency list model, we present a one-pass triangle counting algorithm 
improving upon the previous space upper bounds without such an oracle. In the arbitrary order model, we present algorithms for both triangle and four cycle estimation with fewer passes and the same space complexity as in previous algorithms, and we show several of these bounds are optimal. We analyze our algorithms under several noise models, showing that the algorithms perform well even when the oracle errs. Our methodology expands upon prior work on ``classical'' streaming algorithms, as previous multi-pass and random order streaming algorithms can be seen as special cases of our algorithms, where the first pass or random order was used to implement the heavy edge oracle. Lastly, our experiments demonstrate advantages of the proposed method compared to state-of-the-art streaming algorithms.

\end{abstract}

\section{Introduction}

Counting the number of cycles in a graph is a fundamental problem in the graph stream model
(e.g.,~\cite{AGD08, BC17, S13, Kolountzakis_2010, Yossef,KJM19}).
The special case of counting triangles is widely studied, as it has a 
vast range of applications.  In particular, it
provides important insights into the structural properties of networks~\citep{prat2012shaping,  farkas2011spectra}, 
and is used to 
 discover motifs in protein interaction networks \citep{Milo824}, understand social networks \citep{Welles10}, and evaluate large graph models \citep{Leskovec08}. See~\cite{al2018triangle} for a survey of these and other applications. 

Because of its importance, a large body of research has been devoted to space-efficient streaming algorithms for $(1+\epsilon)$-approximate triangle counting. Such algorithms perform computation in one or few passes over the data using only a sub-linear amount of space. A common difficulty which arises in all previous works is the existence of \emph{heavy edges}, i.e., edges that are incident to many triangles (four cycles).  
As sublinear space algorithms often rely on sampling of edges, and since 
a single heavy edge can greatly affect the number of triangles (four cycles) in a graph, 
sampling and storing these edges are often the key to an accurate estimation.
Therefore, multiple techniques have been developed to determine whether a given edge is heavy or not. 


Recently, based on the observation that many underlying patterns in real-world data sets do not change quickly over time, machine learning techniques have been incorporated into the data stream model via the training of heavy-hitter oracles.
Given access to such a learning-based oracle, a wide range of significant problems in data stream processing --- including frequency estimation, estimating the number of distinct elements, $F_p$-Moments or  $(k,p)$-Cascaded Norms  --- can all achieve  space bounds that are better than those provided by ``classical'' algorithms, see, e.g., \cite{hsu2018,cohen2020composable,Jiang2020, EINR+2021,du2021putting}. More generally, learning-based approaches have had wide success in other algorithmic tasks, such as data structures~\citep{kraska2017case,ferragina2020learned,mitz2018model,rae2019meta,vaidya2021partitioned}, online algorithms~\citep{lykouris2018competitive, purohit2018improving, gollapudi2019online,rohatgi2020near,wei2020better, mitzenmacher2020scheduling, lattanzi2020online,bamas2020primal}, similarity search~\citep{wang2016learning,dong2020learning} and combinatorial optimization~\citep{khalil2017learning,balcan2017learning,balcan2018learning,balcan2018dispersion,balcan2019learning}.
See the survey and references therein for additional works \citep{mitzenmacher2020algorithms}.

Inspired by these  recent advancements, we ask: \textit{is it possible to utilize a learned heavy edge oracle to improve the space complexity of subgraph counting in the graph stream model?} Our 
results demonstrate that the answer is yes.

\subsection{Our Results and Comparison to Previous Theoretical Works}
We present theoretical and empirical results in several graph streaming models, and with several notions of prediction oracles. Conceptually, it is useful to begin by studying \emph{perfect oracles} that provide exact predictions. While instructive theoretically, such oracles are typically not available in practice. We then extend our theoretical results to \emph{noisy oracles} that can provide inaccurate or wrong predictions. We validate the practicality of such oracles in two ways: by directly showing they can be constructed for multiple real datasets, and by showing that on those datasets, our algorithms attain significant empirical improvements over baselines, when given access to these oracles.

We proceed to a precise account of our results.
Let  $G=(V,E)$ denote the input graph, and let $n$, $m$, and $T$ denote the number of vertices, edges, and triangles (or four-cycles) in $G$, respectively.
There are two major graph edge streaming models: the \emph{adjacency list} model and the \emph{arbitrary order} model.
We show that training heavy edge oracles is possible in practice in both models, and that such oracles make it possible to design new algorithms that significantly improve the space complexity of triangle and four-cycle counting, both in theory and in practice.  Furthermore, our formalization of a heavy edge prediction framework
makes it possible to show provable lower bounds as well. Our results are summarized in Table~\ref{comparison-table}.

In our algorithms, we assume that we know a large-constant approximation of $T$ for the purposes of setting various parameters. This is standard practice in the subgraph counting streaming literature (e.g., see~\cite[Section 1]{braverman2013hard}, ~\cite[Section 1.2]{MVV16} for an extensive discussions on this assumption).
Moreover, when this assumption cannot be directly carried over in practice, in Subsection \ref{sec:param} we discuss how to adapt our algorithms to overcome this issue.


\subsubsection{Perfect Oracle}
Our first results apply for the case that the algorithms are given access to a perfect heavy edge oracle. That is, for some threshold $\rho$, the oracle perfectly predicts whether or not a given edge is incident to at least  $\rho$ triangles (four cycles).  We describe how to relax this assumption in Section~\ref{ss:noisy}. 

\vspace{-1mm}
\paragraph{Adjacency List Model} All edges incident to the same node arrive together.
We show:

\begin{restatable}{thm}{theoremAdjacency}
\label{theorem_adjacency}
There exists a  one-pass algorithm, Algorithm~\ref{alg:adj_tri_perfect}, with space complexity\footnote{ We use $\tilde{O}(f)$ to denote $O(f \cdot \mathrm{polylog}(f))$.} $\tilde{O}( \min(\epsilon^{-2}m^{2/3}/T^{1/3}, \epsilon^{-1} m^{1/2}))$ in the adjacency list model that, using a learning-based oracle,  returns a $(1 \pm \epsilon)$-approximation to the number $T$ of triangles with probability at least\footnote{The success probability can be $1 - \delta$ by running $\log(1/\delta)$ copies of the algorithm and taking the median.} $7 / 10$.
\end{restatable}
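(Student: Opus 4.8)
The plan is to combine the heavy-edge oracle with the classical strategy of handling triangles through heavy edges separately from the rest, exploiting the one feature of the adjacency-list model that makes a single pass suffice: when a vertex $v$ is processed its \emph{entire} neighborhood $N(v)$ is revealed at once, so it can be tested against whatever is currently in memory.

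Fix a threshold $\rho$ (to be optimized) and, via the oracle, call an edge \emph{heavy} if it lies in at least $\rho$ triangles and \emph{light} otherwise. Since the edge–triangle incidences total $3T$, there are at most $3T/\rho$ heavy edges, so we can afford to store \emph{all} of them, at cost $\tilde{O}(T/\rho)$. Using this set I would count exactly every triangle that has a heavy edge among the two edges incident to its earliest-arriving vertex: for a triangle on $a,b,c$ arriving in that order, edge $(a,b)$ is already in memory when $c$ is processed and $c$'s neighborhood reveals $a$ and $b$, so the triangle is detected at $c$; likewise $(a,c)$ is detected at $b$; but $(b,c)$ can never be used, since it appears only after $a$. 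When $v$ is processed I scan the pairs $\{x,y\}\subseteq N(v)$ that are stored heavy edges, reconstruct the triangle $\{x,y,v\}$, and—to count it exactly once—attribute it to its lexicographically smallest \emph{heavy} detectable edge; this is checkable at detection time because the other detectable edge is also incident to $v$, hence visible in $N(v)$, so its heaviness can be looked up. This yields an exact count of all such triangles in $\tilde{O}(T/\rho)$ space.

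For the remaining triangles—those whose two detectable edges are both light—I would sample each light edge independently with probability $p$ into a set $S$, and when $v$ is processed, for every pair $\{x,y\}\subseteq N(v)$ with $(x,y)\in S$ whose other detectable edge is light, add $\tfrac{1}{2p}$ to an estimator $\hat T_{\mathrm{light}}$. Because an edge lies in $S$ exactly when it was sampled at its first appearance, each such triangle registers a $\tfrac{1}{2p}$-contribution for each of its two detectable edges that is sampled, so $\E[\hat T_{\mathrm{light}}]$ is exactly the number of these triangles, and together with the exact heavy count the total is unbiased for $T$. Writing $\hat T_{\mathrm{light}}=\tfrac{1}{2p}\sum_{e\ \mathrm{light}}X_e\,\tilde t_e$ with independent $X_e\sim\mathrm{Bern}(p)$ and $\tilde t_e$ the number of counted triangles having $e$ as a detectable edge, we get $\Var(\hat T_{\mathrm{light}})\le\tfrac{1}{4p}\sum_e\tilde t_e^{\,2}\le\tfrac{1}{4p}\big(\max_{e\ \mathrm{light}}\tilde t_e\big)\sum_e\tilde t_e\le\tfrac{\rho}{4p}\cdot 2T=\tfrac{\rho T}{2p}$, where the bound $\tilde t_e<\rho$ on a light edge is exactly what the oracle buys us. Taking $p=\Theta(\rho/(\epsilon^2 T))$ makes the variance $O(\epsilon^2 T^2)$, so Chebyshev gives a $(1\pm\epsilon)$ estimate with the stated constant probability, and $|S|=\tilde{O}(pm)=\tilde{O}(\rho m/(\epsilon^2 T))$.

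Summing the two parts, the space is $\tilde{O}\!\big(T/\rho+\rho m/(\epsilon^2 T)\big)$; since a constant-factor estimate of $T$ is assumed available, we may set $\rho=\Theta(\epsilon T/\sqrt m)$ (and $\rho=1$, which makes light edges triangle-free, when this would drop below $1$), balancing the two terms at $\tilde{O}(\epsilon^{-1}\sqrt m)$, and running the algorithm with whichever parameter setting (equivalently, whichever of this scheme and the complementary dense-regime instantiation) is cheaper for the given $m,T,\epsilon$ yields the claimed $\tilde{O}(\min(\epsilon^{-2}m^{2/3}/T^{1/3},\,\epsilon^{-1}\sqrt m))$. I expect the main obstacle to be not the variance estimate but verifying the exact heavy count: one must check that ``detect at the third vertex, attribute to the lexicographically smallest heavy detectable edge'' counts each heavy-incident triangle exactly once for every adversarial arrival order of the adjacency lists, and that every heaviness test this requires can be answered from information available when the detection happens—this is precisely where seeing all of $N(v)$ at once is essential, and is what lets a single pass suffice.
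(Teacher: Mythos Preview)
Your argument correctly yields the $\tilde O(\epsilon^{-1}\sqrt m)$ branch: storing all oracle-heavy edges and sampling the light ones gives space $\tilde O(T/\rho+\rho m/(\epsilon^2T))$, which is minimized at $\rho=\Theta(\epsilon T/\sqrt m)$ and cannot be pushed below $\Theta(\epsilon^{-1}\sqrt m)$ by \emph{any} choice of $\rho$. That is exactly the point: the ``complementary dense-regime instantiation'' you invoke is not a different parameter setting of the same scheme---it needs a genuinely new idea, and you have not supplied one. The theorem claims $\tilde O(\min(\epsilon^{-2}m^{2/3}/T^{1/3},\epsilon^{-1}\sqrt m))$, and when $T$ is large (say $T\gg \sqrt m/\epsilon^3$) the first term is the smaller one; in that regime your proof gives nothing.

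What the paper does for the large-$T$ case is a three-level split. The oracle separates \textsc{light} ($R_e\le T/\rho$) from non-light, with $\rho=(mT)^{1/3}$; light edges are sampled at rate $\approx\epsilon^{-2}/\rho$. The non-light edges are then split again, but \emph{not} by the oracle: an auxiliary uniform sample $S_{aux}$ of all edges at rate $p_3\approx\epsilon^{-2}\log n/\rho$ is maintained, and when a non-light edge $xy$ arrives one counts how many of its triangle-witnesses $xz$ already sit in $S_{aux}$. If this count is large (at least $p_3\rho$) the edge is declared \emph{heavy} and its contribution $R_{xy}$ is estimated directly from the auxiliary sample---no need to store the edge at all. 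Otherwise it is \emph{medium}, and since there are at most $\rho=(mT)^{1/3}$ medium edges (as each has $R_e>T/\rho$ and $\sum R_e=T$), sampling them at the higher rate $p_2\approx\epsilon^{-2}\rho/T$ still fits in $\tilde O(\epsilon^{-2}m^{2/3}/T^{1/3})$ space. The auxiliary-sample trick (borrowed from McGregor--Vorotnikova--Vu) is precisely what circumvents the $T/\rho$ storage term that pins your scheme at $\epsilon^{-1}\sqrt m$; without it the dense-regime bound does not follow.

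A minor side remark on your $\epsilon^{-1}\sqrt m$ argument: to identify ``the other detectable edge'' at detection time you need to know which endpoint of a stored edge $(x,y)$ arrived first. This is easily recorded with one extra bit per stored edge, but it is worth saying explicitly, since you cannot afford $O(n)$ bits to track vertex arrival times globally.
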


An overview of  Algorithm~\ref{alg:adj_tri_perfect} is given in Section~\ref{sec:adjacency list}, and the  full analysis is provided   in Appendix~\ref{sec:adj_missing}.



\vspace{-3mm}
\paragraph{Arbitrary Order Model}
In this model, the edges arrive in the stream in an arbitrary order. We present a one-pass algorithm for triangle counting and another one-pass algorithm for four cycle counting in this model, both reducing the number of passes compared to  the  currently best known space complexity algorithms. 
Our next result is as follows: 

\begin{restatable}{thm}{theoremArbitrary}
\label{theorem_arbitrary}
There exists a  one-pass algorithm, Algorithm~\ref{alg:arb_tri_perfect},  with space complexity $\tilde{O}(\epsilon^{-1}(m/\sqrt{T}+ \sqrt{m}))$ in the arbitrary order model that, using a learning-based oracle,  returns a $(1 \pm \epsilon)$-approximation to the number $T$ of triangles with probability at least $7 / 10$.
\end{restatable}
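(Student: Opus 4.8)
The plan is to realize, in a single pass, the classical ``closing-edge'' sampling estimator for triangles in the arbitrary-order model, using the perfect heavy-edge oracle in place of the extra pass (or the random ordering) that classical algorithms use to locate heavy edges. Fix a threshold $\rho$ and a sampling rate $p$, to be optimized at the very end. While reading the stream, query the oracle on each arriving edge and keep \emph{every} edge it labels heavy in an explicit set $H$; since $\sum_e \deg_\triangle(e)=3T$ and each heavy edge lies in $\ge\rho$ triangles, $|H|\le 3T/\rho$. Independently, keep each light edge with probability $p$ in a set $S$, and maintain $S\cup H$ with adjacency lists, storing each edge's oracle label. When an edge $e=\{v,w\}$ arrives, first enumerate the common neighbours $u$ of $v$ and $w$ in the currently stored graph: each such $u$ witnesses a triangle $\{u,v,w\}$ whose last edge to arrive is $e$, and we add $1/q_{uvw}$ to a running estimate $\hat T$, where $q_{uvw}=p^{(\#\text{ light among }\{u,v\},\{u,w\})}\in\{p^2,p,1\}$ is exactly the probability that both of the other two edges get stored. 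Then insert $e$ into $S\cup H$ according to its label and (if light) an independent coin. Each triangle is witnessed exactly once, at the arrival of its closing edge, and is witnessed iff its two non-closing edges are both stored, so $\E[\hat T]=\sum_t q_t/q_t=T$; the perfect oracle is used only to decide $H$ and hence the correct weights $1/q_t$.

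The crux is bounding $\Var[\hat T]=\sum_t\Var[X_t]+\sum_{t\ne t'}\Cov[X_t,X_{t'}]$ with $X_t=\1[t\text{ witnessed}]/q_t$. The diagonal part is $\sum_t(1/q_t-1)\le T/p^2$ since $q_t\ge p^2$. For the cross terms, two distinct triangles share at most one edge; a pair $(t,t')$ sharing edge $e$ has $\Cov[X_t,X_{t'}]=0$ unless $e$ is light and is a non-closing edge of \emph{both} (otherwise $X_t$ and $X_{t'}$ are functions of disjoint sets of mutually independent edge-coins, heavy edges being deterministic), and in that case a one-line computation gives $\Cov[X_t,X_{t'}]\le 1/p$. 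Because each light edge lies in fewer than $\rho$ triangles, the number of such (ordered) pairs is at most $\sum_{e\text{ light}}\deg_\triangle(e)^2\le\rho\sum_e\deg_\triangle(e)=3\rho T$, so the cross terms total at most $3\rho T/p$. Hence $\Var[\hat T]\le T/p^2+3\rho T/p$, and by Chebyshev the output is a $(1\pm\epsilon)$-approximation with constant probability once both $T/p^2$ and $\rho T/p$ are at most a small constant times $\epsilon^2T^2$; equivalently, $p=\Omega(1/(\epsilon\sqrt T))$ and we may take $\rho=\Theta(\epsilon^2 Tp)$.

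It remains to choose $p$ (then $\rho$) to minimize the space $\tilde{O}(|S|+|H|)=\tilde{O}(mp+T/\rho)$, where the $\tilde{O}$ absorbs a Chernoff bound on $|S|$. Taking $\rho$ as large as the variance constraint allows gives $T/\rho=\Theta(\epsilon^{-2}p^{-1})$, so we minimize $mp+\epsilon^{-2}p^{-1}$ subject to $p\ge c/(\epsilon\sqrt T)$. When $T\ge m$ the unconstrained optimum $p=\Theta(1/(\epsilon\sqrt m))$ is feasible and gives space $\tilde{O}(\sqrt m/\epsilon)=\tilde{O}(\epsilon^{-1}(m/\sqrt T+\sqrt m))$ (using $m/\sqrt T=O(\sqrt m)$ in this regime); otherwise the constraint binds, $p=\Theta(1/(\epsilon\sqrt T))$, and the space is $\tilde{O}(m/(\epsilon\sqrt T)+\sqrt T/\epsilon)=\tilde{O}(\epsilon^{-1}(m/\sqrt T+\sqrt m))$ since $\sqrt T=O(\sqrt m)$ there. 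The success probability is boosted to $7/10$ (and beyond, per the footnote) by taking the median of independent copies. I expect the main obstacle to be precisely this variance calculation --- one must control the cross terms by $O(\rho T/p)$ rather than by something larger, so that balancing $mp$ against $T/\rho$ produces the $\epsilon^{-1}$, and not $\epsilon^{-2}$, dependence on the $\sqrt m$ term; the degenerate regimes (e.g.\ $T=O(\epsilon^{-2})$, where $\rho$ and $p$ become meaningless and one instead stores the $O(\epsilon^{-2})$ triangle-incident edges outright, or $m\le\epsilon^{-2}$, where the entire graph fits in budget) and the conversion of the high-probability space bound into a worst-case one via reservoir sampling are routine but need care.
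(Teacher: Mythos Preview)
Your proposal is correct and essentially identical to the paper's proof: the paper runs the same ``keep all heavy edges, sample light edges with probability $p$, detect closing edges'' scheme, decomposes the estimator as $\ell_1/p^2+\ell_2/p+\ell_3$ (which is exactly your weighted sum $\sum_t \mathbf{1}[\cdot]/q_t$), obtains the same variance bound $O(T/p^2+\rho T/p)$ via the same covariance argument on pairs sharing a light non-closing edge, and plugs in $\rho=\max\{\epsilon T/\sqrt m,1\}$, $p=C\max\{1/(\epsilon\sqrt T),\rho/(\epsilon^2 T)\}$ to get the claimed space. One tiny slip: in the degenerate regime $T\le\epsilon^{-2}$ you cannot ``store the $O(\epsilon^{-2})$ triangle-incident edges outright'' since you do not know which edges are triangle-incident in one pass---the paper simply observes that then $m\le \epsilon^{-1}m/\sqrt T$, so storing all $m$ edges already fits in the budget.
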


An overview of Algorithm~\ref{alg:arb_tri_perfect} is given in Section~\ref{sec:arbitrary_main}, and 
full details
are provided in Appendix~\ref{sec:arbitrary_missing}.

We also show non-trivial space lower bounds that hold even if appropriate predictors are available. In Theorem~\ref{thm:lower_bound_arbitrary} in Appendix~\ref{sec:arbitrary_lower_bound}, we  provide a lower bound for this setting by giving a construction that requires $\Omega(\min(m / \sqrt{T}, m^{3/2} / T))$ space even with the help of an oracle, proving that our result is nearly tight in some regimes. Therefore, the triangle counting problem remains non-trivial even when extra information is available. 
\vspace{-3mm}
\paragraph{Four Cycle Counting.}
For four cycle counting in the arbitrary order model, we give Theorem \ref{theorem_four_cycle} which is proven in Appendix~\ref{sec:four_cycles_appendix}. 

\begin{restatable}{thm}{thmFourCycles}
\label{theorem_four_cycle}
There exists a one-pass algorithm, Algorithm~\ref{alg:3}, with space complexity $\tilde{O}(T^{1 / 3} + \epsilon^{-2}m / T^{1/3})$ in the arbitrary order model that, using a learning-based oracle, returns a $(1 \pm \epsilon)$-approximation to the number $T$ of four cycles with probability at least $7 / 10$.
\end{restatable}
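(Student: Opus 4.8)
The plan is to adapt the 3-pass, $\tilde{O}(\epsilon^{-2}m/T^{1/3})$ four-cycle algorithm of McGregor--Vorotnikova and Sanei-Mehri--Vivekananthan to a single pass, using the heavy-edge oracle to replace the role played by the first two passes. Recall the skeleton of the classical approach: one samples edges, waits for a later edge to form a path of length two (a "cherry''), and then counts completions to a four-cycle among subsequently arriving edges; the factor $T^{1/3}$ in the space arises from balancing the sampling rate needed to hit a four-cycle against the cost of storing the structures needed to detect the closing edges. The difficulty in compressing the passes is that, in one pass, an edge sampled early may be the "last'' edge of many four-cycles whose other three edges have already streamed past. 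The standard fix in the multi-pass setting is to use an earlier pass to identify which edges are heavy (incident to many four-cycles) and treat them separately; here the oracle does exactly that for the threshold $\rho$.

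First I would fix $\rho$ (the oracle's heaviness threshold) to roughly $T^{1/3}$, and split $E$ into the oracle-declared heavy set $H$ and light set $L$. The heavy edges are few: since each heavy edge is incident to at least $\rho$ four-cycles and each four-cycle has only four edges, $|H| = O(T/\rho) = O(T^{2/3})$. I would store all of $H$ explicitly (this is the $\tilde{O}(T^{1/3})$ term after accounting for the $\epsilon$ and log factors, or more precisely $O(T^{2/3})$ edges but the statement's $T^{1/3}$ suggests a sharper accounting using that we only need to store enough of $H$ to estimate, or that the relevant quantity is $\sqrt{|H|}$-type; I would reconcile this against the precise classical bound). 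For the light edges, I would run an edge-sampling estimator at rate $p \approx \epsilon^{-2}/(T^{1/3})$ analogous to the classical one: sample each light edge independently, and for each sampled light edge $e=(u,v)$ maintain counters of subsequent edges incident to $u$ and to $v$ so as to detect cherries and then closing edges. Because no light edge participates in more than $\rho \approx T^{1/3}$ four-cycles, the per-sampled-edge contribution to the estimator is bounded by $\rho$, which is exactly what is needed to control the variance of the light-edge estimator with sampling probability $p$ — this is the role the multi-pass version gets from its later passes. Four-cycles with at least one heavy edge are handled deterministically (or with a cheap auxiliary sampling pass over the stream restricted to endpoints of $H$), since $H$ is stored in full; four-cycles all of whose edges are light are caught by the sampling estimator. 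Summing the two estimators and rescaling gives an unbiased (or nearly unbiased) estimate of $T$.

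The analysis then has two parts. For correctness, I would show the combined estimator has expectation $(1\pm o(\epsilon))T$ and, via a variance computation that separates the heavy- and light-edge contributions, variance $O(\epsilon^2 T^2)$, so Chebyshev gives the $(1\pm\epsilon)$ guarantee with probability $7/10$ (boosting by median-of-means as in the footnote). The heavy part is exact so contributes no variance; the light part is where the $\rho \le T^{1/3}$ bound enters to keep the second moment under control with sampling rate $\epsilon^{-2}/T^{1/3}$. For space, the heavy edges cost $\tilde{O}(T^{1/3})$ (after the appropriate accounting), and the light-edge sampler stores $p\cdot m \approx \epsilon^{-2} m / T^{1/3}$ sampled edges together with $\tilde{O}(1)$ counters each. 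I expect the main obstacle to be the one-pass detection of the closing edges of a four-cycle among the light edges: in the classical multi-pass algorithm the cherries are found in one pass and closed in the next, whereas here cherries and their closers are interleaved in a single stream, so I would need a careful bookkeeping scheme (maintaining, for each sampled light edge, a hash table of "open'' length-two paths keyed by their far endpoint and matching each incoming light edge against it) and a proof that this scheme captures every light four-cycle exactly once in expectation while still fitting in the claimed space — handling the symmetry factor of a four-cycle (it can be discovered via several of its edges) correctly is the delicate point. A secondary subtlety is making the $H$-storage term genuinely $\tilde{O}(T^{1/3})$ rather than $\tilde O(T^{2/3})$; if the latter is unavoidable one instead samples within $H$ as well, and I would verify the variance there using that a four-cycle with a heavy edge is still only counted a bounded number of times.
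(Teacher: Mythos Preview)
Your proposal has a genuine gap: it handles heavy \emph{edges} but entirely omits heavy \emph{wedges}, and for four-cycle counting the wedge term is what actually drives the variance. The estimator you describe (sample light edges at rate $p$ and count completions) has variance, after rescaling, of order $T/p^{3} + T\Delta_E/p + T\Delta_W/p^{2}$, where $\Delta_E$ is the maximum number of four-cycles on a light edge and $\Delta_W$ is the maximum number on a wedge (a length-two path). With $p \approx T^{-1/3}$ you need $\Delta_E \le O(T^{2/3})$ \emph{and} $\Delta_W \le O(T^{1/3})$. The heavy-edge oracle controls the first, but nothing in your scheme bounds the second: there are graphs in which every edge is light yet some vertex pair $(u,v)$ has $\gg T^{1/3}$ common neighbors, so the wedges at $(u,v)$ blow up the variance. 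The paper handles this by additionally sampling \emph{vertices} at rate $p$, storing all incident edges, and at the end using the sampled neighborhoods to detect every pair $(u,v)$ with $\ge T^{1/3}$ common neighbors; the four-cycles through such heavy wedges are then estimated directly from the sampled common-neighbor counts via $\binom{k}{2}$, and are subtracted out of the light-edge estimator.

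A second, related issue explains your confusion about the $T^{1/3}$ versus $T^{2/3}$ space for $H$: the correct edge-heaviness threshold is $N_{uv}\ge T^{2/3}$ (not $T^{1/3}$), which gives $|H|\le 4T/T^{2/3}=O(T^{1/3})$ and matches the stated bound with no further tricks. With your threshold of $T^{1/3}$ you do get $|H|=O(T^{2/3})$ and the space claim cannot be rescued. Finally, your proposal asserts that the heavy part is exact; in fact the paper only estimates four-cycles with \emph{exactly one} heavy edge (and no heavy wedge) by combining the stored heavy edge with three sampled light edges, and then argues separately, citing a lemma of McGregor et al., that the remaining four-cycles with at least two heavy edges but no heavy wedge number at most $O(T^{5/6})=O(\epsilon T)$ and can be ignored.
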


\begin{table}[t]
\caption{Our results compared to existing theoretical algorithms. $\Delta_E$ ($\Delta_V$) denotes the maximum number of triangles incident to any edge (vertex), and  $\kappa$ denotes the arboricity of the graph.}
\label{comparison-table}

\begin{center}
\renewcommand\arraystretch{1.4}
\begin{tabular}{ | m{4.3em} | m{23.8em}| m{10em} | } 
\hline
\textbf{Problem} & \textbf{Previous Results (no oracle)} & \textbf{Our Results} \\
\hline
Triangle, Adjacency & $\tilde{O}(\epsilon^{-2} m/\sqrt{T})$, 1-pass ~\citep{MVV16} &  $\tilde{O}(\min(\epsilon^{-2}{m^{2/3}}/{T^{1/3}},$ $\epsilon^{-1} m^{1/2}))$, 1-pass  \\
\hline

\multirow{8}{4.5em}{Triangle, Arbitrary}& 
\multirow{1}{24.8em}{$\tilde{O}(\epsilon^{-2}m^{3/2}/T)$, 3-pass ~\citep{MVV16}}& \\ \cline{2-2}
& \multirow{1}{23.8em}{$\tilde{O}(\epsilon^{-2}m/\sqrt{T})$, 2-pass ~\citep{MVV16}} & \\ \cline{2-2}
& \multirow{1}{24.8em}{$\tilde{O}(\epsilon^{-2}(m/\sqrt{T} + m \Delta_E / T))$, 1-pass} & \\
& \multirow{0.75}{24.8em}{\citep{PT12}} & $O(\epsilon^{-1}(m/\sqrt{T}+m^{1/2}))$,\\
\cline{2-2}
& \multirow{1}{24.8em}{$\tilde{O}(\epsilon^{-2}(m/T^{2/3} + m \Delta_E / T +m\sqrt{\Delta_V} / T))$, 2-pass } &  1-pass\\
& \multirow{0.75}{24.8em}{\citep{KP17}} & \\ \cline{2-2}
& \multirow{1}{24.8em}{$\tilde{O}(\mathrm{poly}(\epsilon^{-1})(m\Delta_E/T+m\sqrt{\Delta_V}/T))$, 1-pass } & \\
& \multirow{0.75}{24.8em}{\citep{jayaram2021optimal}} & \\ \cline{2-2}
& \multirow{1}{24.8em}{$\tilde{O}(\mathrm{poly}(\epsilon^{-1})m\kappa/T)$, multi-pass~\citep{BS20} } & \\
\hline

\multirow{2}{4.5em}{4-cycle, Arbitrary} &
\multirow{1}{24.8em}{$\tilde{O}(\epsilon^{-2}m/T^{1/4})$, 3-pass  \citep{McGregor2020}} &
\multirow{1}{24.8em}{$\tilde{O}(T^{1/3} + \epsilon^{-2}m/T^{1/3})$,} \\ \cline{2-2}
& \multirow{1}{24.8em}{$\tilde{O}(\epsilon^{-2}m/T^{1/3})$, 3-pass \citep{SV20}} & 1-pass\\
\hline
\end{tabular}
\end{center}
\end{table}

To summarize our theoretical contributions, for the first set of results of  counting triangles in the adjacency list model, our bounds always improve on the previous state of the art due to~\cite{MVV16} for all values of $m$ and $T$. 
For a concrete example, consider the case that $T=\Theta(\sqrt m)$. In this setting previous bounds result in an $\tilde O(m^{3/4})$-space algorithm, while our algorithm only requires $\tilde O(\sqrt m)$ space (for constant $\eps$).

For the other two problems of counting triangles and 4-cycles in the arbitrary arrival model, our space bounds have an additional additive term compared to~\cite{MVV16} (for triangles) and~\cite{SV20} (for 4-cycles) but importantly run in a \textbf{single pass} rather than multiple passes.
In the case where the input graph has high triangles density, $T = \Omega(m/\eps^2)$, our space bound is worse due to the additive factor. When $T= O(m/\eps^2)$, our results achieve the same dependence on $m$ and $T$ as that of the previous algorithms with an improved dependency in $\eps$. Moreover, the case $T\le m/\eps^2$ is natural for many real world datasets: for $\eps=0.05$, this condition holds for all of the datasets in our experimental results (see Table \ref{table:data}).
Regardless of the triangle density, a key benefit of our results is that they are achieved in a single pass rather than multiple passes.  
Finally, our results are for general graphs, and make no  assumptions on the input graph (unlike~\cite{PT12,KP17,BS20}). Most of our algorithms are relatively simple and easy to implement and deploy. At the same time, some of our results require the use of novel techniques in this context, such as the use of exponential random variables (see Section \ref{sec:adjacency list}).

\subsubsection{Noisy Oracles} 
\label{ss:noisy}
The aforementioned triangle counting results are stated under the assumption that the algorithms are given access to perfect heavy edge oracles. 
In practice, this assumption is sometimes unrealistic. 
Hence, we consider several types of noisy oracles.
The first such oracle, which we refer to as a $K$-noisy oracle, is defined below (see Figure \ref{fig:heavy_prob} in the Supplementary Section \ref{subsec:proof_noise_arbitrary}).

\begin{definition}\label{dfn:noisy-oracle}
For an edge $e=xy$ in the stream, define $N_e$ as the number of triangles that contain both $x$ and $y$.
For a fixed constant $K \ge 1$ and for a threshold $\rho$\, we say that an oracle $\mO$ is a \emph{$K$-noisy} oracle if  for every edge $e$, $1 - K \cdot \frac{\rho}{N_e} \le Pr[\mO(e)=\textsc{heavy}] \le K \cdot \frac{N_e}{\rho}$.
\end{definition}

This oracle ensures that if an edge is extremely heavy or extremely light, it is classified correctly with high probability, but if the edge is close to the threshold, the oracle may be inaccurate.
We further discuss the properties of this oracle in Section~\ref{sec:justify_noisy}.

For this oracle, we prove the following two theorems. 
First, in  the adjacency list model, we prove:
\begin{restatable}{thm}{thmAdjImperfectK}
\label{thm:imperfect_adj}
Suppose that the  oracle given to Algorithm~\ref{alg:adj_tri_perfect} is a $K$-noisy oracle as defined in Definition~\ref{dfn:noisy-oracle}.
  Then with probability $2/3$, Algorithm~\ref{alg:adj_tri_perfect} returns a value in $(1\pm\sqrt{K} \cdot\eps)T$, and uses space at most $\tilde{O}( \min(\epsilon^{-2}m^{2/3} / T^{1/3}, K \cdot \epsilon^{-1} m^{1/2}))$.
\end{restatable}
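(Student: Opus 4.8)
Recall from Section~\ref{sec:adjacency list} and Appendix~\ref{sec:adj_missing} the structure of Algorithm~\ref{alg:adj_tri_perfect}: it assigns each triangle a unique \emph{witness} edge (write $d_e$ for the number of triangles witnessed by $e$, so that $\sum_e d_e = T$ and $d_e \le N_e$), queries the oracle at a threshold $\rho$ on every edge $e$ as it is revealed, keeps an exact counter of the triangles witnessed by each edge the oracle calls \textsc{heavy}, and keeps each edge the oracle calls \textsc{light} independently with probability $p$, adding $d_e/p$ to the estimate $\hat T$ when such an edge is kept. Consequently the space is the number of \textsc{heavy}-classified edges plus $\Theta(pm)$, the estimator is unbiased, and in the perfect-oracle analysis the variance is controlled because a \textsc{light}-classified edge has $N_e < \rho$ and hence $d_e < \rho$, while the number of \textsc{heavy}-classified edges is exactly $|\{e : N_e \ge \rho\}| \le 3T/\rho$; the choice of $p$ and the balance between the two space terms then give Theorem~\ref{theorem_adjacency}. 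The plan is to re-run this analysis with the two deterministic facts ``\textsc{heavy}-classified $\Leftrightarrow$ truly heavy'' and ``\textsc{light}-classified $\Rightarrow d_e < \rho$'' replaced by the probabilistic guarantees of Definition~\ref{dfn:noisy-oracle}.

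\emph{Unbiasedness} is unaffected: the total contribution of an edge $e$ to $\hat T$ is $d_e$ if $e$ is \textsc{heavy}-classified, and is $d_e/p$ with probability $p$ (and $0$ otherwise) if $e$ is \textsc{light}-classified, so its expected contribution is $d_e$ for either classification, and $\E[\hat T] = \sum_e d_e = T$ regardless of the oracle.

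\emph{Variance} is the crux. Since the \textsc{heavy}-classified part is exact, $\mathrm{Var}(\hat T)$ equals the variance of the \textsc{light}-classified part, which --- assuming, as is standard in this model, that the oracle's errors on distinct edges are independent, so that the cross-covariances are non-positive --- is at most $\frac{1}{p} \sum_e \Pr[\mO(e) = \textsc{light}]\, d_e^2$. Definition~\ref{dfn:noisy-oracle} gives $\Pr[\mO(e) = \textsc{light}] \le \min(1, K\rho/N_e)$; together with $d_e \le N_e$ and a case split on whether $N_e \le K\rho$ this yields $\Pr[\mO(e) = \textsc{light}]\, d_e^2 \le K\rho\, d_e$ for every edge, hence $\mathrm{Var}(\hat T) \le K\rho T/p$, exactly $K$ times the perfect-oracle bound. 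With the same sampling rate $p$, Chebyshev's inequality then places $\hat T$ in $(1 \pm \sqrt{K}\,\epsilon)T$ with probability at least $2/3$. I expect this step to require the most care: a truly heavy edge with $N_e \gg \rho$ that is wrongly called \textsc{light} contributes $\propto N_e^2/p$ to the variance, which would be fatal if it occurred with constant probability; it is precisely the bound $\Pr[\mO(e) = \textsc{light}] \le K\rho/N_e$ that pays for this, cutting the expected contribution to $K\rho N_e/p$ so that $\sum_e N_e \le 3T$ closes the bound. (The independence assumption is also what rules out positively correlated oracle errors, whose cross-covariances could otherwise be as large as $\Theta(T^2)$, which is not absorbed for small $\epsilon$.)

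\emph{Space.} The expected number of \textsc{heavy}-classified edges is $\sum_e \Pr[\mO(e) = \textsc{heavy}] \le \sum_e \min(1, K N_e/\rho) \le 3T/\rho + 3KT/\rho \le 6KT/\rho$ --- the first term counting the at most $3T/\rho$ edges with $N_e \ge \rho$, the second bounding $\sum_{e : N_e < \rho} K N_e/\rho$ --- so by Markov's inequality this count is $O(KT/\rho)$ with probability at least $9/10$. The sampling rate $p$ is unchanged, so the $\Theta(pm)$ term is unchanged, and only the heavy-storage term picks up the factor $K$. Substituting back into the space bound of Theorem~\ref{theorem_adjacency}: in the regime where that bound is $\epsilon^{-2}m^{2/3}/T^{1/3}$ the $K$-inflated heavy term remains dominated, whereas in the regime where it is $\epsilon^{-1}m^{1/2}$ it becomes $K\epsilon^{-1}m^{1/2}$, giving $\tilde{O}(\min(\epsilon^{-2}m^{2/3}/T^{1/3},\, K\epsilon^{-1}m^{1/2}))$. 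Finally a union bound over the $O(1)$ failure events, with the constants chosen appropriately, gives overall success probability $2/3$.
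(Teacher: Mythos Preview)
Your approach matches the paper's: bound the expected number of \textsc{heavy}-classified edges via the upper tail of Definition~\ref{dfn:noisy-oracle}, bound the variance of the sampled \textsc{light} part via the lower tail, and finish with Chebyshev and Markov. Your per-edge case split $\Pr[\textsc{light}]\,d_e^{2}\le K\rho\,d_e$ is in fact cleaner than the paper's version, which decomposes the deemed-\textsc{light} sum into truly-light and truly-heavy pieces and applies the law of total variance to each.

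One point you should drop rather than assume: you posit independence of the oracle's decisions across edges and flag correlated errors as a genuine threat. They are not. Condition on the entire oracle outcome $\mO$: the \textsc{heavy}-classified part becomes deterministic and the \textsc{light}-classified part is an unbiased estimate of its (conditional) target, so $\E[\hat T\mid\mO]=T$ for \emph{every} realization of $\mO$. Hence $\Var_{\mO}\!\big[\E[\hat T\mid\mO]\big]=0$, and the law of total variance gives
\[
\Var[\hat T]=\E_{\mO}\!\big[\Var[\hat T\mid\mO]\big]\le \frac{1}{p}\sum_{e}\Pr[\mO(e)=\textsc{light}]\,d_e^{2},
\]
which is exactly your bound, with no independence hypothesis. (The paper also routes through the law of total variance, though it applies it to the deemed-\textsc{light} sum alone rather than to the full estimator, which somewhat obscures this point.)

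A smaller remark: your two-bucket description of Algorithm~\ref{alg:adj_tri_perfect} is accurate only in the regime $T<(m/\epsilon)^{1/2}$; in the other regime the algorithm has a third ``medium'' bucket sampled at rate $p_2$, and the truly heaviest edges are detected via the auxiliary sample $S_{aux}$ rather than by the oracle. The paper's proof, like yours, carries out the calculation in detail only for the simpler regime and dispatches the other with a one-sentence remark that the same bounds control the deemed-medium count and the deemed-light variance.
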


Hence, even if our oracle is inaccurate for edges near the threshold, our algorithm still obtains an effective approximation with low space in the adjacency list model.
Likewise, for the arbitrary order model, we prove in Theorem~\ref{thm:imperfect_arbitrary} that the $O(\eps^{-1}(m/\sqrt T+\sqrt m))$ 1-pass  algorithm of  Theorem~\ref{theorem_arbitrary} also works when  Algorithm~\ref{alg:arb_tri_perfect} is only given access to a $K$-noisy oracle.

The proof of Theorem~\ref{thm:imperfect_adj} is provided in Appendix~\ref{sec:adj_noisy_oracles}, and the proof of Theorem~\ref{thm:imperfect_arbitrary} is provided in Appendix~\ref{subsec:proof_noise_arbitrary}.
We remark that Theorems~\ref{thm:imperfect_adj} and~\ref{thm:imperfect_arbitrary} automatically imply Theorems~\ref{theorem_adjacency} and \ref{theorem_arbitrary}, since the perfect oracle is automatically a $K$-noisy oracle.
\vspace{-3mm}
\paragraph{(Noisy) Value Oracles}
In the adjacency list model, when we see an edge $xy$, we also have access to all the neighbors of either $x$ or $y$, which makes it possible for the oracle to give a more accurate prediction.
For an edge $xy$, let $R_{xy}$ denote the number of triangles $\{x,z,y\}$ so that $x$ precedes $z$ and $z$ precedes $y$ in the stream arrival order. Formally,
$R_{xy} = |z: \{x, y, z\} \in \Delta \textnormal{ and } x <_s z <_s y|$ where $x <_s y$ denotes that the adjacency list of $x$ arrives before that of $y$ in the stream.

Motivated by our empirical results in Section~\ref{sec:oracle_accuracy},
it is reasonable in some settings to assume we have access to  oracles that can predict a good approximation to $R_{xy}$.
We refer to such oracles as  \emph{value oracles}.

In the first version of this oracle, we assume that the probability of approximation error decays linearly with the error from above but exponentially with the error from below.
\begin{definition}\label{dfn:value_oracle:1}
Given an edge $e$, an $(\alpha,\beta)$ value-prediction oracle
outputs a random value $p(e)$ where $\mathbb{E}[p(e)]\le \alpha R_e + \beta $, and $\mathbf{Pr}[p(e) < \frac{R_e}{\lambda} - \beta] \le K e^{-\lambda}$ for some constant $K$ and any $\lambda \ge 1$.
\end{definition}

For this variant, we prove the following theorem.

\begin{restatable}{thm}{thmAdjValueOne}
\label{thm:value_oracle:1}
Given an oracle with parameters $(\alpha, \beta)$, there exists a one-pass algorithm, Algorithm~\ref{alg:4}, with space complexity $O(\eps^{-2} \log^2 (K/\eps) (\alpha + m\beta / T))$ in the adjacency list model that returns a $(1 \pm \epsilon)$-approximation to the number of triangles $T$ with probability at least $7 / 10$. 
\end{restatable}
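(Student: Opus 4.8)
The plan is to mimic the perfect-oracle adjacency-list algorithm (Algorithm~\ref{alg:adj_tri_perfect}) but replace the heavy/light classification and exact counting of triangles $R_{xy}$ by the value oracle's prediction $p(xy)$, and to use $p(xy)$ both as a sampling weight and as an importance-sampling estimate. Recall the basic scheme: when the adjacency list of a vertex $y$ arrives, every earlier-seen edge $xz$ that closes a triangle with $y$ is counted, so $T = \sum_{xy \in E} R_{xy}$. The natural estimator is to sample each edge $e=xy$ with probability $q_e \propto \min(1, c\cdot p(e)/(\eps^2 T))$ for a suitable constant $c$, store its adjacency-list context, and on termination output $\sum_{e \text{ sampled}} R_e / q_e$. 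Since $\mathbb{E}[p(e)] \le \alpha R_e + \beta$, the expected number of stored edges is $O\!\big(\sum_e q_e\big) = \tilde O\!\big(\eps^{-2}(\alpha + m\beta/T)\big)$ up to the $\log^2(K/\eps)$ factor, which is exactly the claimed space bound; the logarithmic overhead comes from the standard need to boost and from discretizing the sampling probabilities.

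The main work is the variance / concentration analysis, and this is where the two-sided oracle guarantee is used asymmetrically. For the estimator to be unbiased (or nearly so) we need $q_e > 0$ whenever $R_e > 0$, and for bounded variance we need $R_e/q_e$ to be controlled; the danger is an edge with large $R_e$ but small predicted $p(e)$, i.e., a severe under-prediction. This is precisely what the tail bound $\Pr[p(e) < R_e/\lambda - \beta] \le K e^{-\lambda}$ rules out: the probability that $p(e)$ underestimates $R_e$ by a factor $\lambda$ decays exponentially, so when we condition on the oracle's randomness the contribution of these bad events to the variance is a convergent sum $\sum_\lambda \lambda^2 \cdot K e^{-\lambda}$, which is $O(K)$ — this is the source of the $\log^2(K/\eps)$ rather than a worse dependence. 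The over-prediction side ($\mathbb{E}[p(e)] \le \alpha R_e + \beta$) only costs space, not accuracy, contributing the $\alpha$ and $m\beta/T$ terms. I would first bound $\mathbb{E}[\text{output}]$, showing it lies within $(1\pm\eps)T$ after absorbing the additive $\beta$ into the sampling-probability floor, then bound $\mathrm{Var}$ by splitting edges into "well-predicted" ($p(e) = \Theta(R_e)$, handled by the $q_e \propto R_e/(\eps^2 T)$ intuition giving variance $\le \eps^2 T^2$) and "under-predicted" (handled by the exponential tail), and finish with Chebyshev plus a constant number of repetitions and a median to reach success probability $7/10$.

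The hard part will be making the conditioning rigorous: $p(e)$ is itself random, so $q_e$ is random, and the event "$e$ is sampled" is not independent of the estimator value $R_e/q_e$. The clean way is to draw, for each edge, a uniform $u_e \in [0,1]$ independent of everything, include $e$ iff $u_e \le \min(1, c\,p(e)/(\eps^2 T))$, and then analyze $\mathbb{E}$ and $\mathrm{Var}$ by first taking expectation over $u_e$ given $p(e)$ (which makes the estimator locally unbiased with value $R_e$ whenever $q_e<1$) and then over the oracle randomness — the exponential tail is exactly what lets the outer expectation of $R_e^2/q_e$ converge. A secondary technical point is that $T$ appears in $q_e$ but is only known up to a constant factor (the standing assumption of the paper); this only perturbs constants and the $\eps$-dependence, as in the perfect-oracle case. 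Handling edges where $\min(1,\cdot)=1$ (always stored) is routine since their total count is $\tilde O(\eps^{-2}(\alpha+m\beta/T))$ by the same expectation bound.
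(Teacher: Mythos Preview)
Your proposal takes a genuinely different route from the paper. You outline a Horvitz--Thompson importance-sampling scheme: sample each edge $e$ with probability $q_e \propto (p(e)+\beta)/(\eps^2 T)$ and return $\sum_{\text{sampled }e} R_e/q_e$. The paper's Algorithm~\ref{alg:4} instead uses the max-stability of exponential random variables: for each edge it forms $p(e)/u_e$ with $u_e \sim \mathrm{Exp}(1)$, keeps only the top $H$ edges by this score, and exploits the identity $\max_e R_e/u_e \stackrel{d}{=} T/u$ (Lemma~\ref{lem:exp_stable}) together with a median over $O(\eps^{-2})$ independent copies (Lemma~\ref{lem:exp_median}). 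The analysis never bounds a variance of a sum; it only needs the single arg-max edge to survive, and the oracle tail with $\lambda = \Theta(\log(K/\eps))$ suffices to guarantee that the max-achieving edge sits among the top $O(\log^2(K/\eps)(\alpha+m\beta/T))$ scored edges with probability $1-O(\eps^3)$.

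Your approach is correct in spirit but does not reach the stated bound, and the gap is precisely in the sentence ``a convergent sum $\sum_\lambda \lambda^2 \cdot K e^{-\lambda}$, which is $O(K)$ --- this is the source of the $\log^2(K/\eps)$.'' That sum is $O(K)$, and it multiplies the variance: carrying your own outline through, $\mathbb{E}[1/(p(e)+\beta)] \le (1+K/e)/R_e$, so $\mathbb{E}[\Var[\hat T \mid p]] \le O(K)\cdot \eps^2 T^2/c$. To drive the error down to $\eps T$ you must take $c = \Theta(K)$, which gives space $O\!\big(K\,\eps^{-2}(\alpha+m\beta/T)\big)$ --- linear in $K$, not $\log^2(K/\eps)$. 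There is no way to trade this $K$ for a $\log K$ with Chebyshev-plus-median, because the variance itself already carries the factor. The exponential-random-variable trick is exactly what converts the exponential tail $Ke^{-\lambda}$ into a logarithmic space overhead: one needs only a single edge to be predicted within a $\log(K/\eps)$ factor, an event of failure probability $O(\eps^3)$, rather than needing the sum $\sum_e R_e^2/q_e$ to concentrate. Your scheme is essentially the alternative Algorithm~\ref{alg:value_prediction:2} (Theorem~\ref{thm:value_oracle:2}) in continuous form, and indeed that theorem carries a $K$ (not $\log K$) prefactor.
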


In the second version of this noisy oracle, we assume that the probability of approximation error decays linearly with the error from both above and below.
%
For this variant, we prove that we can achieve the same guarantees as Theorem \ref{thm:value_oracle:1} up to logarithmic factors (see Theorem \ref{thm:value_oracle:2}). The algorithms and proofs for both Theorem \ref{thm:value_oracle:1} and Theorem \ref{thm:value_oracle:2} appear in Appendix~\ref{sec:adj_value_oracle}.
\vspace{-3mm}
\paragraph{Experiments} 
We conduct experiments to verify our results for triangle counting on a variety of real world networks (see Table \ref{table:data}) in both the arbitrary and adjacency list models. Our algorithms use additional information through predictors to improve empirical performance. The predictors are data dependent and include: memorizing heavy edges in a small portion of the first graph in a sequence of graphs, linear regression, and graph neural networks (GNNs). 
Our experimental results show that we can achieve up to \textbf{5x} decrease in estimation error while keeping the same amount of edges as other state of the art empirical algorithms. For more details, see Section~\ref{sec: experiments}. In Section~\ref{sec:oracle_accuracy}, we show that our noisy oracle models are realistic for real datasets.
\vspace{-3mm}
\paragraph{Related Empirical Works}
On the empirical side, most of the focus has been on triangle counting in the arbitrary order model for which there are several algorithms that work well in practice. We primarily focus on two state-of-the-art baselines, ThinkD~\citep{shin2018think} and WRS~\citep{Shin2017WRSWR}. In these works, the authors compare to previous empirical benchmarks such as the ones given in \cite{triest, ESD, MASCOT} and demonstrate that their algorithms achieve superior estimates over these benchmarks. There are also other empirical works such as \cite{nesreen17} and \cite{nesreen20} studying this model but they do not compare to either ThinkD or WRS. While these empirical papers demonstrate that their algorithm returns unbiased estimates, their theoretical guarantees on space is incomparable to the previously stated space bounds for theoretical algorithms in Table \ref{comparison-table}. Nevertheless, we use ThinkD and WRS as part of our benchmarks due to their strong practical performance and code accessibility.
\vspace{-3mm}
\paragraph{Implicit Predictors in Prior Works} \label{sec:implicit}
The idea of using a predictor is implicit in many prior works. The optimal two pass triangle counting algorithm of \cite{MVV16}  can be viewed as an implementation of a heavy edge oracle after the first pass. This oracle is even stronger than the $K$-noisy oracle as it is equivalent to an oracle that is always correct on an edge $e$ if $N_e$ either exceeds or is under the threshold $\rho$ by a constant multiplicative factor. This further supports our choice of oracles in our theoretical results, as a stronger version of our oracle can be implemented using one additional pass through the data stream (see Section \ref{sec:justify_noisy}). Similarly, the optimal triangle counting streaming algorithm (assuming a \emph{random} order) given in \cite{McGregor2020} also implicitly defines a heavy edge oracle using a small initial portion of the random stream (see Lemma $2.2$ in \cite{McGregor2020}). The random order assumption allows for the creation of such an oracle since heavy edges are likely to have many of their incident triangle edges appearing in an initial portion of the stream. We view these two prior works as theoretical justification for our oracle definitions.
Lastly, the WRS algorithm also shares the feature of defining an implicit oracle: some space is reserved for keeping the most recent edges while the rest is used to keep a random sample of edges. This can be viewed as a specific variant of our model, where the oracle predicts recent edges as heavy. 
\vspace{-3mm}
\paragraph{Preliminaries.} 
$G=(V,E)$ denotes the input graph, and $n$, $m$ and $T$ denote the number of vertices, edges and triangles (or four-cycles) in $G$, respectively.
We use $N(v)$ to denote the set of neighbors of a node $v$, and $\Delta$ to denote the set of triangles.
In triangle counting, for each $xy \in E(G)$, we recall that $N_{xy} = |z: \{x, y, z\} \in \Delta|$ is the number of triangles incident to edge $xy$, and $R_{xy} = |z: \{x, y, z\} \in \Delta, x <_s z <_s y|$ is the number of triangles adjacent to $xy$ with the third vertex $z$ of the triangle between $x$ and $y$ in the adjacency list order. Table~\ref{table:notations} summarizes the notation.

\newcommand{\notationsTable}{
\begin{center}
\renewcommand\arraystretch{1.4}
\begin{tabular}{ | m{4em} | m{23em}|  } 
\hline
Notation & Definition \\\hline
$G=(V,E)$ & a graph $G$ with vertex set $V$ and edge set $E$ \\ \hline
$n$ & number of triangles \\ \hline
$m$ & number of edges  \\ \hline
$T$ & number of triangles (or 4-cycles)  \\ \hline
$\eps$ & approximation parameter \\ \hline
$<_s$ & total ordering on the vertices according to their arrival in the adjacency list arrival model \\ \hline
$\Delta$ & set of triangles in $G$  \\ \hline
$\square$ & set of 4-cycles in $G$ \\ \hline
$N(v)$ & set of neighbors of node $v$ \\ \hline
$N_{xy}$ & number of triangles (4-cycles) incident to edge $xy$, i.e., $\left|\{z | (x,y,z)\in \Delta\}\right|$ \;\; $\left(\;\left|\{w,z | (x,y,w.z)\in \square\}\right|\;\right)$ \\ \hline
$R_{xy}$ &  number of triangles $(x,z,y)$ where $x$ precedes $z$ and $z$ precedes $y$ in the adjacency list arrival model, i.e.,
$\left|\{z | (x,z,y)\in \Delta, x<_s z<_s y\}\right|$ \\ \hline
$\Delta_E$ & maximum number of triangles incident to any edge \\ \hline
$\Delta_V$ & maximum number of triangles incident to any vertex \\ \hline
$O_{\rho}$ & heavy edge oracle with threshold $\rho$ \\ \hline
\end{tabular}\label{table:notations}
\end{center}
}


\section{Triangle Counting in the Adjacency List Model}
\label{sec:adjacency list}
We describe an algorithm with a heavy edge oracle, and one with a value oracle.
\vspace{-3mm}
\paragraph{Heavy Edge Oracle.} We present an overview of our one-pass algorithm, Algorithm~\ref{alg:adj_tri_perfect},  with a space complexity of $\tilde{O}(\min(\epsilon^{-2}m^{2/3}/T^{1/3}, \epsilon^{-1} m^{1/2}))$, given in Theorem~\ref{theorem_adjacency}.
We defer the pseudocode and  proof of the theorem  to Appendix~\ref{sec:adj_missing}.
The adaptations and proofs for
Theorems~\ref{thm:imperfect_adj},~\ref{thm:value_oracle:1} and~\ref{thm:value_oracle:2} for the various noisy oracles  appear in Appendix~\ref{sec:adj_noisy_oracles} and Appendix~\ref{sec:adj_value_oracle}.

Our algorithm works differently depending on the value of $T$. We first consider the case that $T \ge (m/\epsilon)^{1/2}$. 
Assume that for each sampled edge $xy$ in the stream, we can exactly know the number of triangles $R_{xy}$ this edge contributes to $T$. Then the rate at which we would need to sample each edge would be proportional to $p_{naive}\approx\eps^{-2}\Delta_{E}/T$, where $\Delta_E$ is the maximum number of triangles incident to any edge. Hence, our first idea is to separately consider light and non-light edges using the heavy edge oracle. This allows us to sample edges that are deemed light by the oracle at a lower rate, $p_1$, and compute their contribution by keeping track of $R_{xy}$ for each such sampled edge. Intuitively, light edges offer us more flexibility and thus we can sample them with a lower probability while ensuring the estimator’s error does not drastically increase.
In order to estimate the contribution due to  non-light edges, we again partition them into two types: medium and heavy, according to some threshold $\rho$. We then use an observation from ~\cite{MVV16}, that since for heavy edges $R_{xy}>\rho$, it is sufficient to sample from the entire stream  at rate $p_3\approx \eps^{-2}/\rho$, in order to both detect if some edge $xy$ is heavy  and if so to estimate $R_{xy}$.

Therefore, it remains to estimate the contribution to $T$ due to medium edges (these are the edges that are deemed non-light by the oracle, and also non-heavy according to the sub-sampling above). Since the number of triangles incident to medium edges is higher than that of light ones, we have to sample them at some higher rate $p_2>p_1$. However, since their number is bounded, this is still space efficient. 
We get the desired bounds by correctly setting the thresholds between light, medium and heavy edges.

When $T < (m/\epsilon)^{1/2}$ our algorithm becomes much simpler. 
We only consider two types of edges, light and heavy, according to some threshold $T/\rho$. To estimate the contribution due to heavy edges we simply store them and keep track of their $R_{xy}$ values. To estimate the contribution due to light edges we sub-sample them with rate $\eps^{-2}\cdot \Delta_{E}/T=\eps^{-2}/\rho$. 
The total space we use is $ \tilde{O} (\epsilon^{-1} \sqrt{m})$, which is optimal in this case. See Algorithm~\ref{alg:adj_tri_perfect} for more details of the implementation.

\vspace{-3mm}
\paragraph{Value-Based Oracle.} We also consider the setting where the predictor returns an estimate
$p(e)$ of $R_e$, and we assume $R_e \leq p(e) \leq \alpha \cdot R_e$, where $\alpha \geq 1$ is an approximation factor. We relax this assumption to also handle additive error as well as noise, but for intuition we focus on this case. The value-based oracle setting requires the use of novel techniques, such as the use of exponential random variables (ERVs).
Given  this oracle, for an edge $e$, we compute $p(e)/u_e$, were $u_e$
is a standard ERV. We then store the $O(\alpha \log(1/\epsilon))$ edges $e$ for which $p(e)/u_e$ is largest.  Since we are in the adjacency list model, once we start tracking edge $e = xy$, we can also compute the true value $R_e$ of triangles that the edge $e$ participates in  (since for each future vertex $z$ we see, we can check if $x$ and $y$ are both neighbors of $z$). Note that we track this quantity only for the $O(\alpha \log(1/\epsilon))$ edges that we store. Using the max-stability property of ERVs, $\max_e R_e/u_e$ is equal in distribution to $T/u$, where $u$ is another ERV. Importantly, using the density function of an ERV, one can show that the edge $e$ for which $R_e/u_e$ is largest is, with probability $1 - O(\epsilon^3)$, in our list of the $O(\alpha \log(1/\epsilon))$ largest $p(e)/u_e$ values that we store. Repeating this scheme $r = O(1/\epsilon^2)$ times, we obtain independent estimates $T/u^1, \ldots, T/u^r$, where $u^1, \ldots, u^r$ are independent ERVs. Taking the median of these then gives a $(1 \pm \epsilon)$-approximation to the total number $T$ of triangles. We note that ERVs are often used in data stream applications (see, e.g., \cite{Andoni17}), though to the best of our knowledge they have not previously been used in the context of triangle estimation. We also give an alternative algorithm, based on subsampling at $O(\log n)$ scales, which has worse logarithmic factors in theory but performs well empirically.

\section{Triangle Counting in the Arbitrary Order Model} \label{sec:arbitrary_main}

In this section we discuss Algorithm~\ref{alg:arb_tri_perfect} for estimating the number of triangles in an arbitrary order stream of edges. The pseudo-code of the algorithm as well as omitted proofs for the different  oracles are given in Supplementary Section \ref{sec:arbitrary_missing}. Here we give the intuition behind the algorithm. Our approach relies on sampling the edges of the stream as they arrive
and checking if every new edge forms a triangle with the previously sampled edges. 
However, as previously discussed, this approach alone fails if some edges have a large number of triangles incident to them as ``overlooking'' such edges might lead to an underestimation of the number of triangles. Therefore, we utilize a 
heavy edge oracle, and refer to edges that are not heavy as \emph{light}. Whenever a new edge arrives, we first query the oracle to determine if the edge is heavy. If the edge is heavy  we keep it, and otherwise we sample it with some probability. As in the adjacency list  arrival model case, this strategy allows us to reduce the variance of our estimator. By balancing the sampling rate and our threshold for heaviness, we ensure that the space requirement is not too high, 
while simultaneously guaranteeing that our estimate is accurate.

In more detail, our algorithm works as follows. 
First, we set a heaviness threshold $\rho$, so that if we predict an edge $e$ to be part of $\rho$ or more triangles, we label it as heavy. 
We also set a sampling parameter $p$. 
We let  $\mH$ be the set of edges predicted to be heavy and let $\mS$ be a random sample of edges predicted to be light. 
Then, we count three types of triangles. 
The first counter, $\ell_1$, counts the triangles $\Delta = (e_1, e_2,e)$, where  the first two edges seen in this triangle by the algorithm, represented by $e_1$ and $e_2$, are both in $\mS$. Note that we only count the triangle if $e_1$ and $e_2$,were both in $\mS$ at the time $e$ arrives in the stream.
Similarly, $\ell_2$ counts triangles whose first two edges are in $\mS$ and $\mH$ (in either order), and $\ell_3$ counts triangles whose first two edges are in $\mH$. 
Finally, we return the estimate $\ell = \ell_1/p^2 + \ell_2/p + \ell_3$. 
Note that if the first two edges in any triangle are light, they will both be in $\mS$ with probability $p^2$, and if exactly one of the first two edges is light, it will be in $\mS$ with probability $p$.
Therefore, we divide $\ell_1$ by $p^2$ and $\ell_2$ by $p$ so that $\ell$ is an unbiased estimator.


\medskip




\vspace{-2mm}
\section{Experiments}\label{sec: experiments}
We now evaluate our algorithm on real and synthetic data whose properties are summarized in Table~\ref{table:data} (see Appendix~\ref{sec:additional_experiments} for more details).
\begin{table}[!ht]
\begin{center}
\caption{Datasets used in our experiments. Snapshot graphs are a sequence of graphs over time (the length of the sequence is given in parentheses) and temporal graphs are formed by edges appearing over time. The listed values for $n$ (number of vertices), $m$ (number of edges), and $T$ (number of triangles) for Oregon and CAIDA are approximated across all graphs.
The Oregon and CAIDA datasets come from ~\cite{snapnets, oregon1}, the Wikibooks dataset comes from~\cite{wikibooks}, the Reddit dataset comes from~\cite{snapnets, reddit}, the Twitch dataset comes from~\cite{rozemberczki2019}, the Wikipedia dataset comes from~\cite{wikibooks}, and the Powerlaw graphs are sampled from the Chung-Lu-Vu random graph model with expected degree of the $i$-th vertex proportional to $1/i^2$~\citep{CLV}.
}
\label{table:data}
\vskip 0.1in
\begin{tabular}{lccccc}
\toprule
\textbf{Name}      & \textbf{Type} & \textbf{Predictor} & $n$ & \multicolumn{1}{c}{$m$} & \multicolumn{1}{c}{$T$} \\ \hline
Oregon       & Snapshot (9)  & 1st graph          & $\sim 10^4$    & $\sim 2.2 \cdot 10^4$     &  $\sim 1.8 \cdot 10^4$                       \\
CAIDA 2006  & Snapshot (52)  & 1st graph          &  $\sim 2.2 \cdot 10^4$   & $\sim 4.5 \cdot 10^4$                         &    $\sim 3.4 \cdot 10^4$                     \\
CAIDA 2007    & Snapshot (46)   & 1st graph          &   $\sim 2.5 \cdot 10^4$  & $\sim 5.1 \cdot 10^4$                         &  $\sim 3.9 \cdot 10^4$                       \\
Wikibooks    & Temporal  & Prefix  &   $\sim 1.3 \cdot 10^5$  & $\sim 3.9 \cdot  10^5$                         &     $\sim 1.8 \cdot 10^5$                    \\
Reddit  & Temporal  & Regression  &   $\sim 3.6 \cdot 10^4$  &    $\sim 1.2 \cdot 10^5$                    &    $\sim 4.1 \cdot 10^5$                 \\

Twitch        & -   & GNN  &  $\sim 6.5 \cdot 10^3$   &    $\sim 5.7 \cdot 10^4$                     &        $\sim 5.4 \cdot 10^4$                 \\ 

Wikipedia    & -   & GNN  &  $\sim 4.8 \cdot 10^3$   &    $\sim 4.6 \cdot 10^4$                     &        $\sim 9.5 \cdot 10^4$                 \\

Powerlaw      & Synthetic  & EV  &  $\sim 1.7 \cdot 10^5$   &    $\sim 10^6$                     &        $\sim 3.9 \cdot 10^7$                 \\ 
\bottomrule
\end{tabular}
\end{center}
\vskip -0.1in
\end{table}

We now describe the edge heaviness predictors that we use (see also Table~\ref{table:data}). Our predictors adapt to the type of dataset and information available for each dataset. Some datasets we use contain only the graph structure (nodes and edges) without semantic features, thus not enabling us to train a classical machine learning predictor for edge heaviness. In those cases we use the true counts on a small prefix of the data (either $10\%$ of the first graph in a sequence of graphs, or a prefix of edges in a temporal graph) as predicted counts for subsequent data. However, we are able to create more sophisticated predictors on three of our datasets, using feature vectors in linear regression or a Graph Neural Network. Precise details of the predictors follow.

$\sbullet$ \noindent \textbf{Snapshot}: For Oregon / CAIDA graph datasets, which contain a sequence of graphs, we use exact counting on a \emph{small} fraction of the first graph as the predictor for \emph{all the subsequent graphs}. Specifically, we count the number of triangles per edge, $N_e$, on the first graph for each snapshot dataset. We then only store $10\%$ of the top heaviest edges and use these values as estimates for edge heaviness in all later graphs. If a queried edge is not stored, its predicted $N_e$ value is $0$. 
\\

\vspace{-3mm}
$\sbullet$ \noindent  \textbf{Prefix}: In the WikiBooks temporal graph, we use the exact $N_e$ counts on the first half of the graph edges (when sorted by their timestamps) as the predicted values for the second half.
\\

\vspace{-3mm}
$\sbullet$ \noindent  \textbf{Linear Regression}: In the Reddit Hyperlinks temporal graph, we use a separate dataset \citep{reddit_embed} that contains 300-dimensional feature embeddings of subreddits (graph nodes). Two embeddings are close in the feature space if their associated subreddits have similar sub-communities. To produce an edge $f(e)$ embedding for an edge $e=uv$ from the node embedding of its endpoints, $f(u)$ and $f(v)$, we use the 602-dimensional embedding $(f(u),f(v),\norm{(f(u)-f(v)}_1,\norm{(f(u)-f(v)}_2)$. We then train a linear regressor to predict $N_e$ given the edge embedding $f(e)$. Training is done on a prefix of the first half of the edges.
\\

\vspace{-3mm}
$\sbullet$ \noindent  \textbf{Link Prediction (GNN)}: For each of the two networks, we start with a graph that has twice as many edges as listed in Table \ref{table:data}, ($\sim 1.1 \cdot 10^5$ edges for Twitch and $9.2 \cdot 10^4$ edges for Wikipedia). We then randomly remove $50\%$ of the total edges to be the training data set, and use the remaining edges as the graph we test on. We use the method proposed in~\cite{ZC18} to train a link prediction oracle using a Graph Neural Network (GNN) that will be used to predict the heaviness of the testing edges. For each edge that arrives in the stream of the test edges, we use the predicted likelihood of forming an edge given by the the neural network to the other vertices as our estimate for $N_{uv}$, the number of triangles on edge $uv$. See Section \ref{sec:prediction_oracle_details} for details of training methodology.
\\

\vspace{-3mm}
$\sbullet$ \noindent \textbf{Expected Value (EV)}: In the Powerlaw graph, the predicted number of triangles incident to each $N_e$ is its expected value, which can be computed analytically in the CLV random graph model.

\vspace{-2mm}
\paragraph{Baselines.} We compare our algorithms with the following baselines.

$\sbullet$ \noindent \textbf{ThinkD and WRS} (Arbitrary Order): These are the state of the art empirical one-pass algorithms from \cite{shin2018think} and \cite{Shin2017WRSWR} respectively. The ThinkD paper presents two versions of the algorithm, called  `fast' and `accurate'. We use the `accurate' version since it provides better estimates. We use the authors' code for our experiments \citep{thinkD_github, WRS_github}.
\\

\vspace{-3mm}
$\sbullet$ \noindent \textbf{MVV} (Arbitrary Order and Adjacency List): We use the one pass streaming algorithms given in \cite{MVV16} for the arbitrary order model and the adjacency list model.

\textbf{Error measurement.} We measure accuracy using the relative error $ |1-\tilde{T}/T|$, where $T$ is the true triangle count and $\tilde{T}$ is the estimate returned by an algorithm. Our plots show the space used by an algorithm (in terms of the number of edges) versus the relative error. We report median errors over $50$ independent executions of each experiment, $\pm$ one standard deviation. 
\vspace{-2mm}
\subsection{Results for Arbitrary Order Triangle Counting Experiments}
In this section, we give experimental results for Algorithm \ref{alg:arb_tri_perfect} which approximates the triangle count in arbitrary order streams. Note that we need to set two parameters for Algorithm \ref{alg:arb_tri_perfect}: $p$, which is the edge sampling probability, and $\rho$, which is the heaviness threshold. In our theoretical analysis, we assume knowledge of a lower bound on $T$ in order to set $p$ and $\rho$, as is standard in the theoretical streaming literature. However, in practice, such an estimate may not be available; in most cases, the only parameter we are given is a space bound for the number of edges that can be stored. To remedy this discrepancy, we modify our algorithm slightly by setting a fixed fraction of space to use for heavy edges ($10\%$ of space for all of our experiments) and setting $p$ correspondingly to use up the rest of the space bound given as input. See details in Supplementary Section \ref{sec:param}.

\begin{figure}[!ht]
    \centering
    \subfigure[Oregon \label{fig:oregon1_space}]{\includegraphics[scale=0.18]{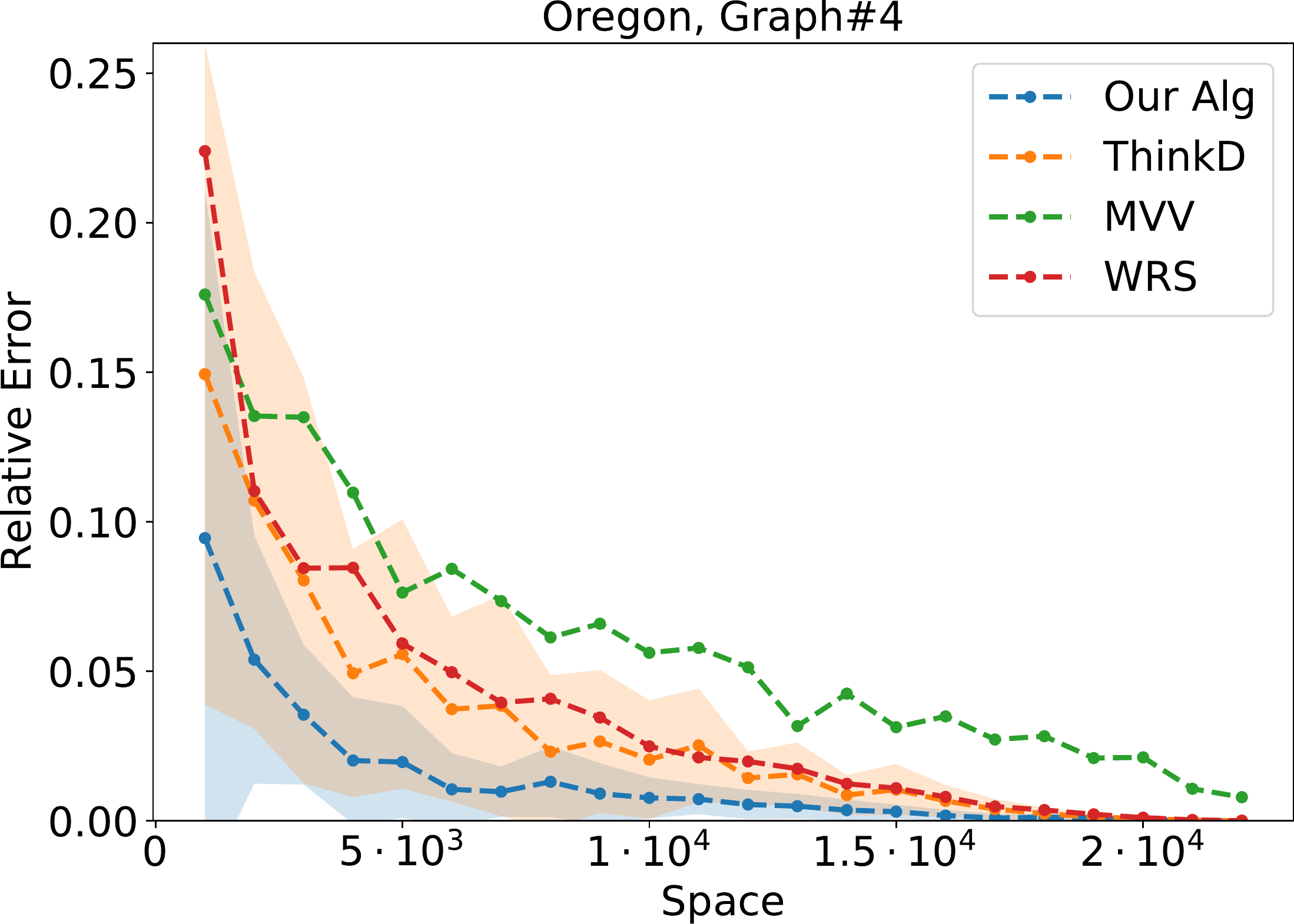}} 
    \subfigure[CAIDA $2006$ \label{fig:caida6_space}]{\includegraphics[scale=0.18]{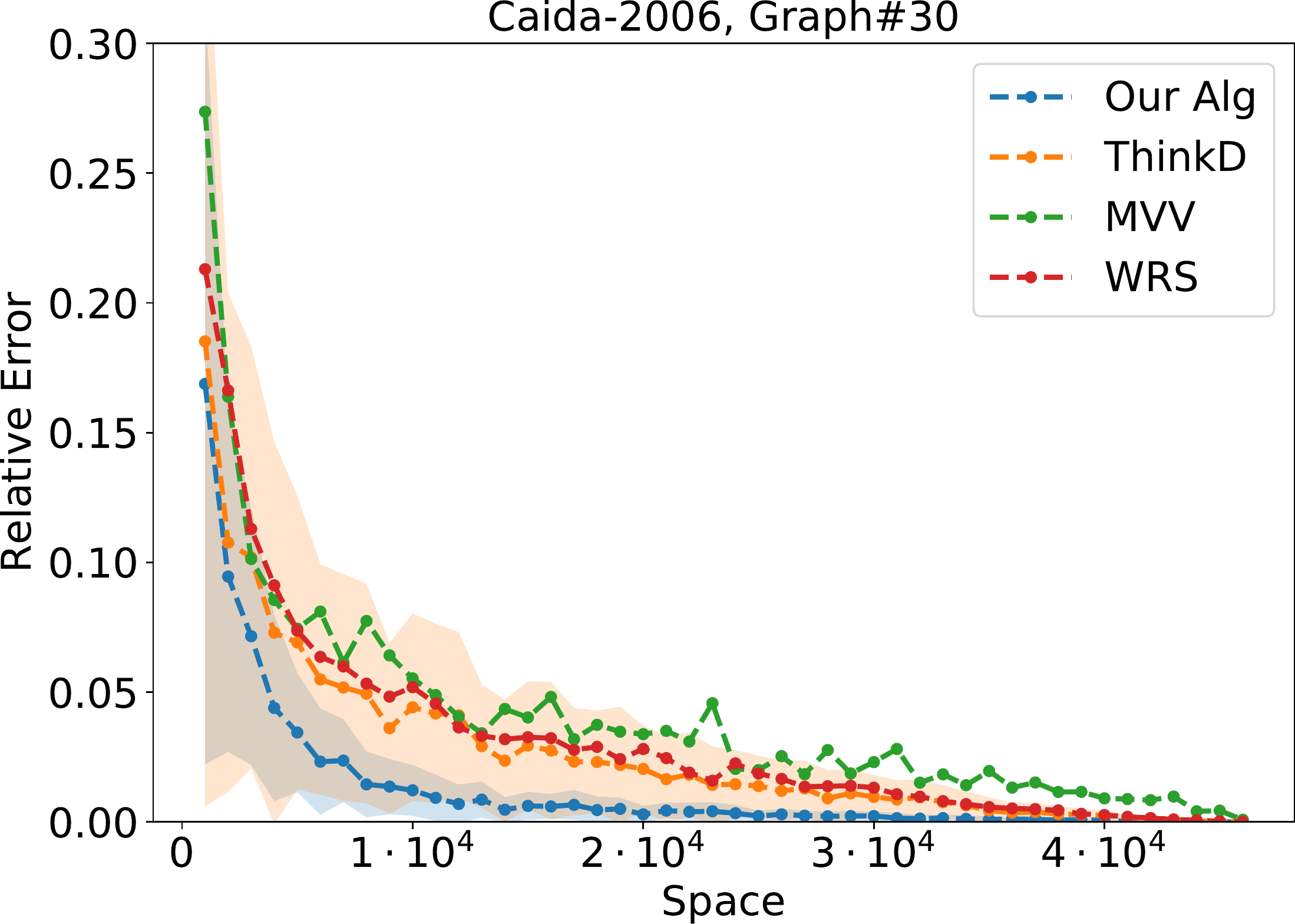}} 
    \subfigure[Reddit \label{fig:reddit_space}]{\includegraphics[scale=0.18]{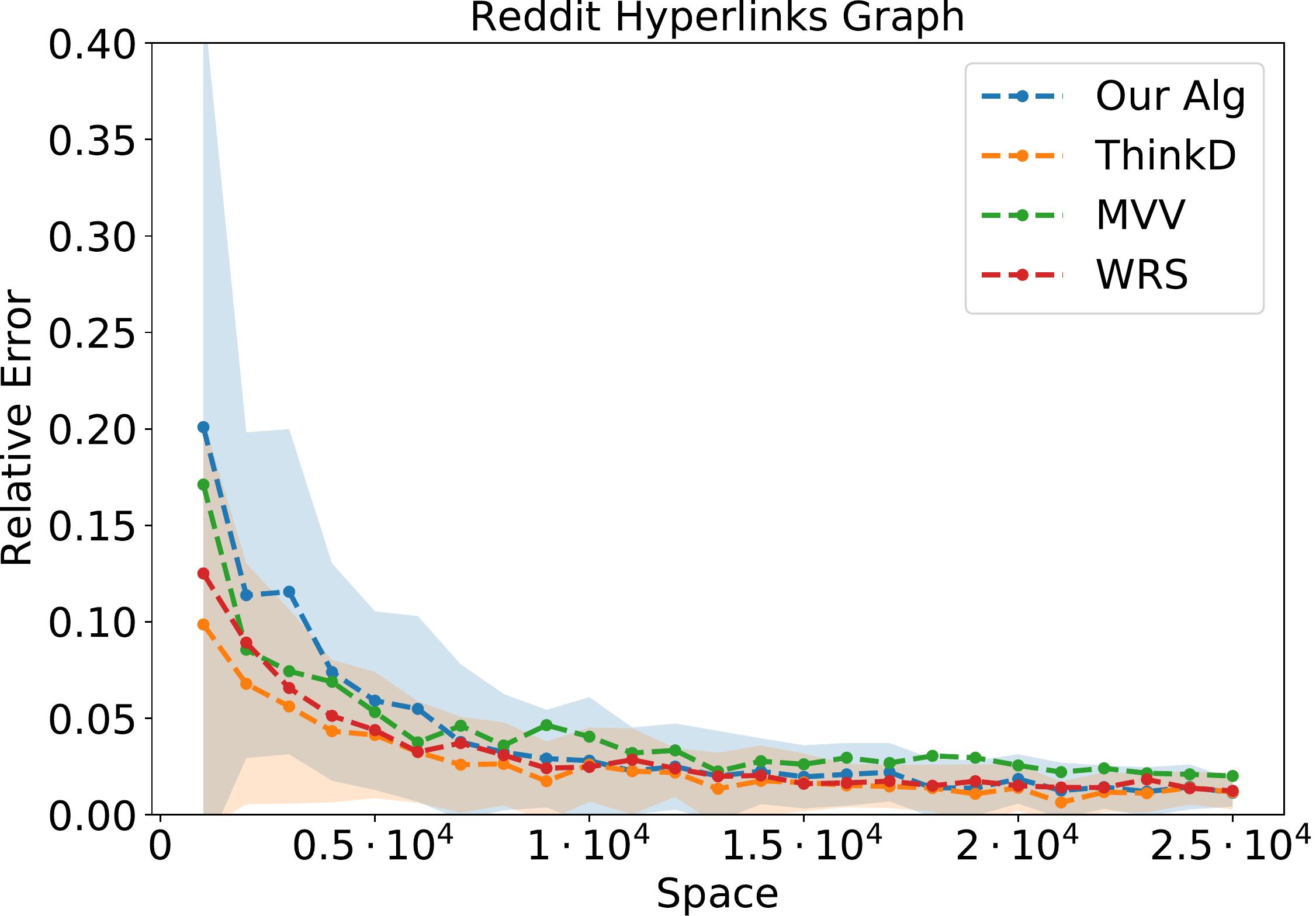}} \\
    \subfigure[Wikibooks \label{fig:wiki_space}]{\includegraphics[scale=0.18]{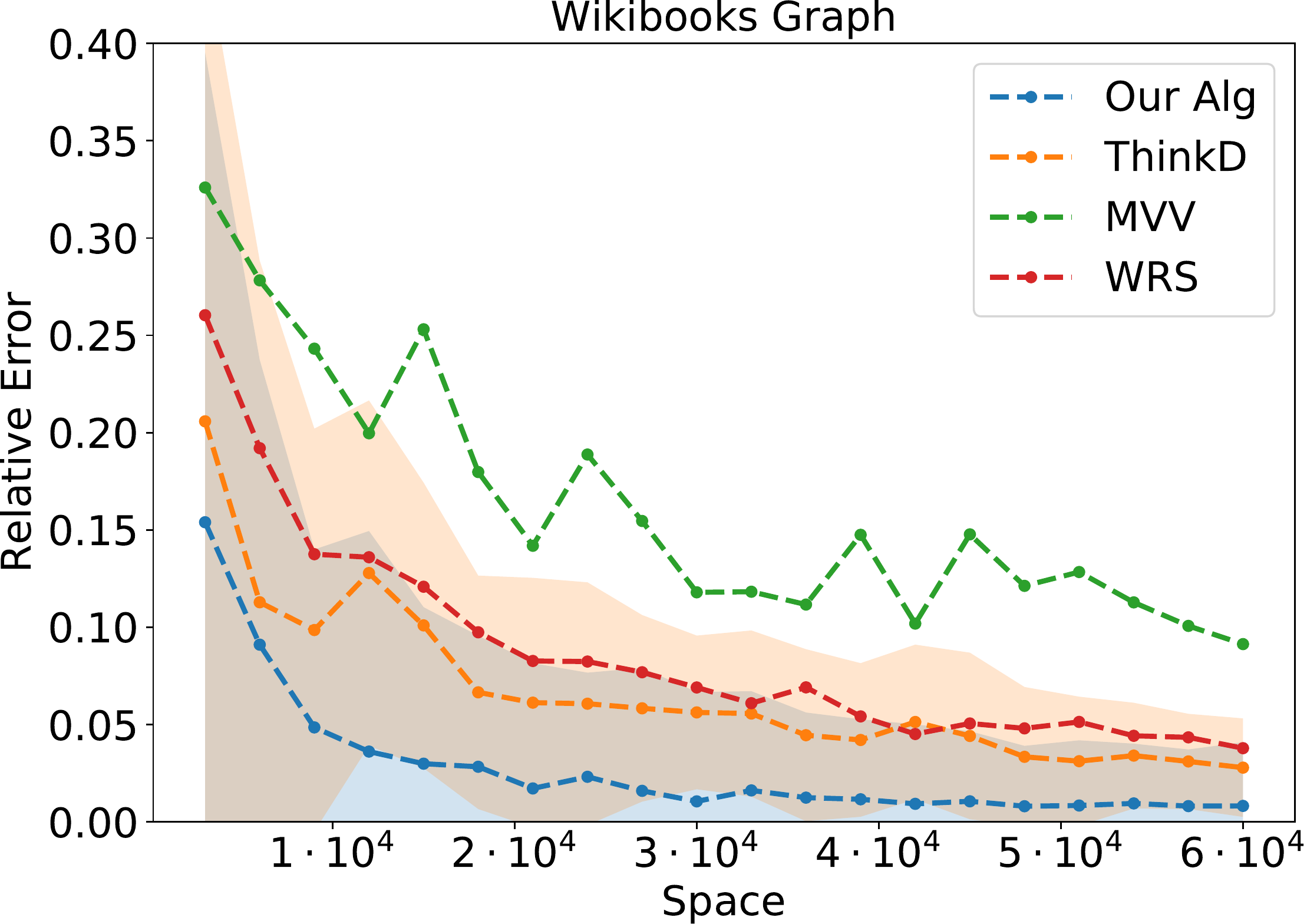}} 
    \subfigure[Wikipedia
    \label{fig:wikipedia_space}]{\includegraphics[scale=0.18]{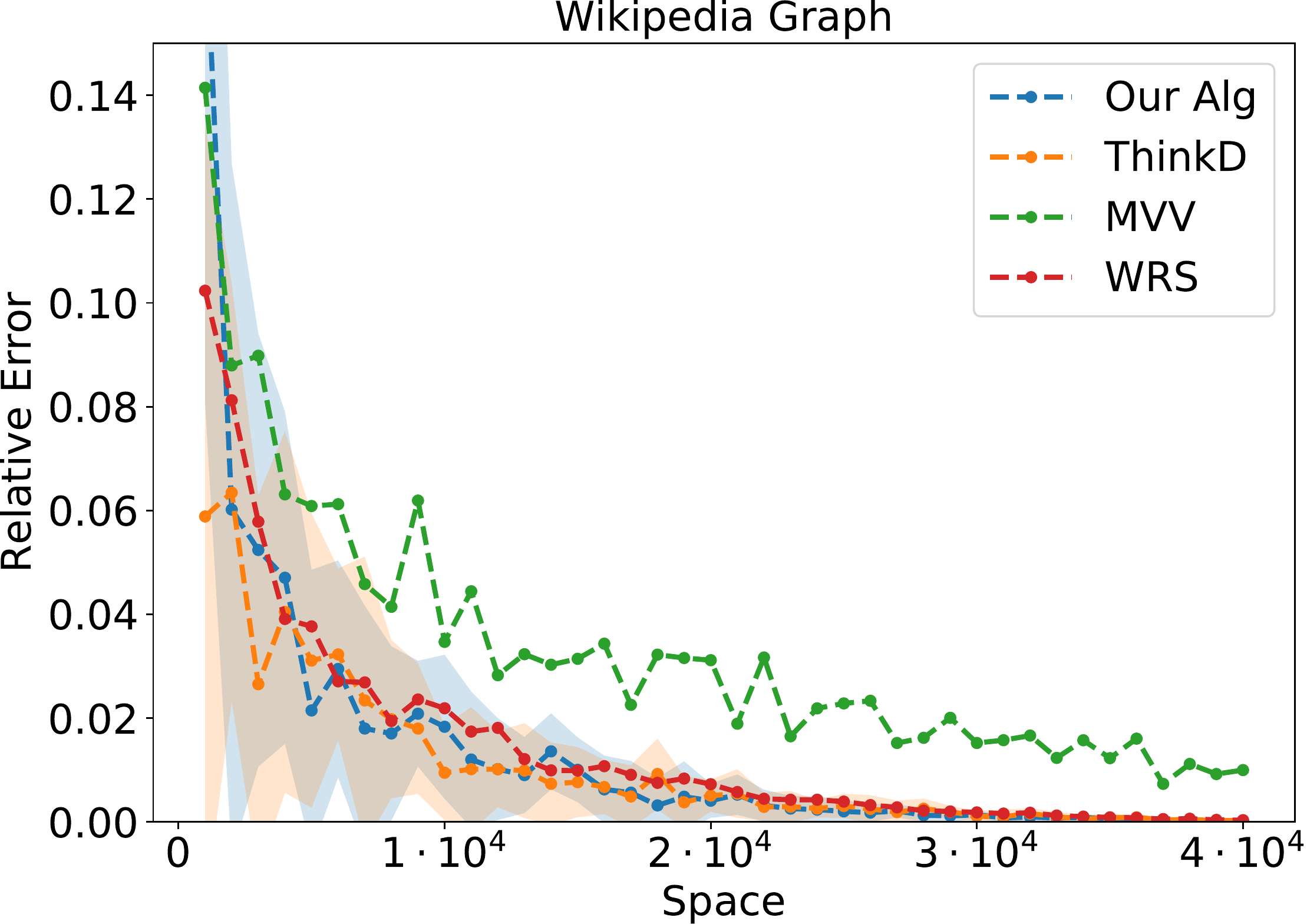}} 
    \subfigure[Powerlaw \label{fig:random_space}]{\includegraphics[scale=0.18]{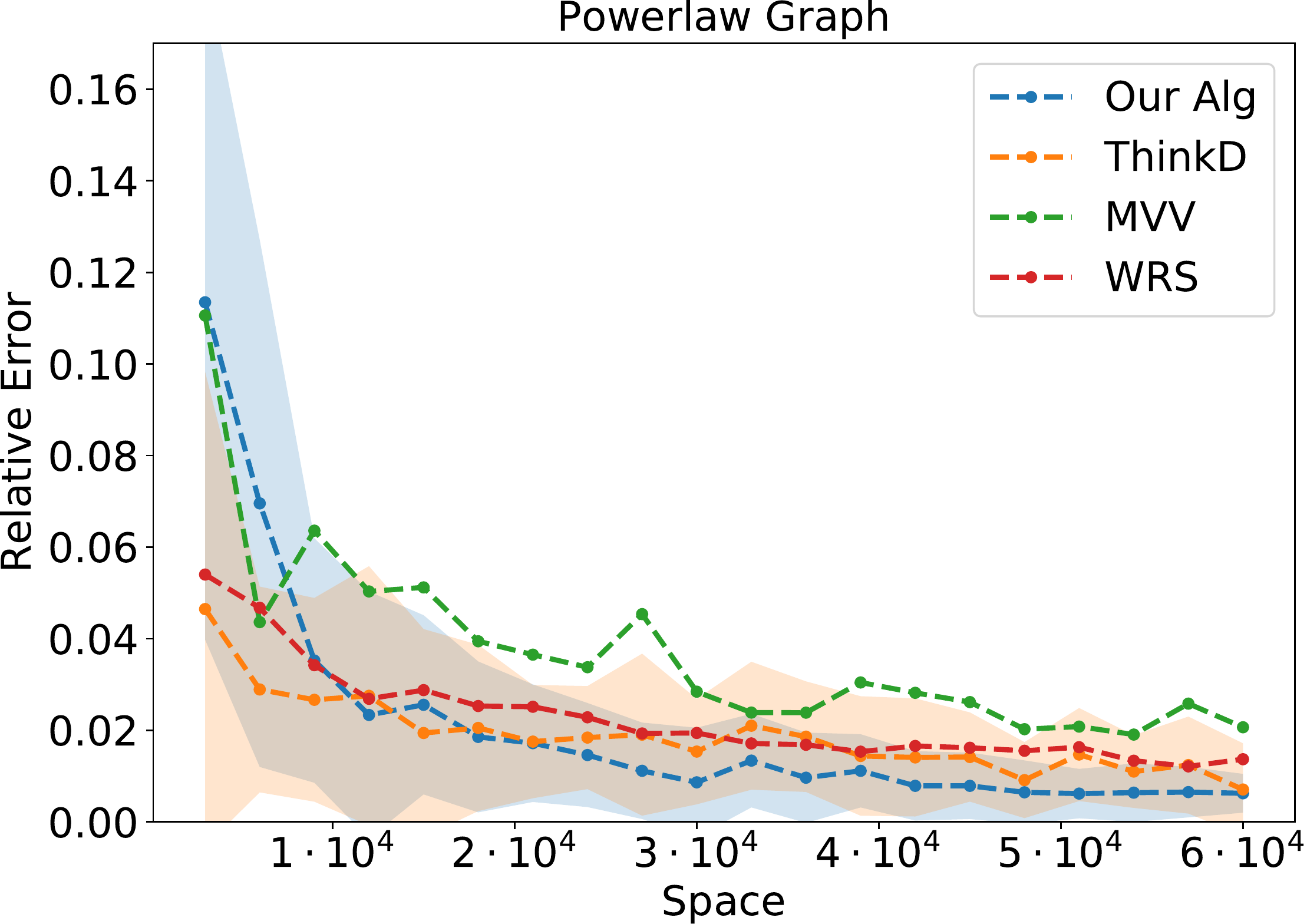}} 
    \setlength{\belowcaptionskip}{-10pt}
    \caption{Error as a function of space in the arbitrary order model.  \label{fig:space}}
\end{figure}

\begin{figure*}[!ht]
    \centering
    \subfigure[Oregon \label{fig:oregon_across}]{\includegraphics[scale=0.18]{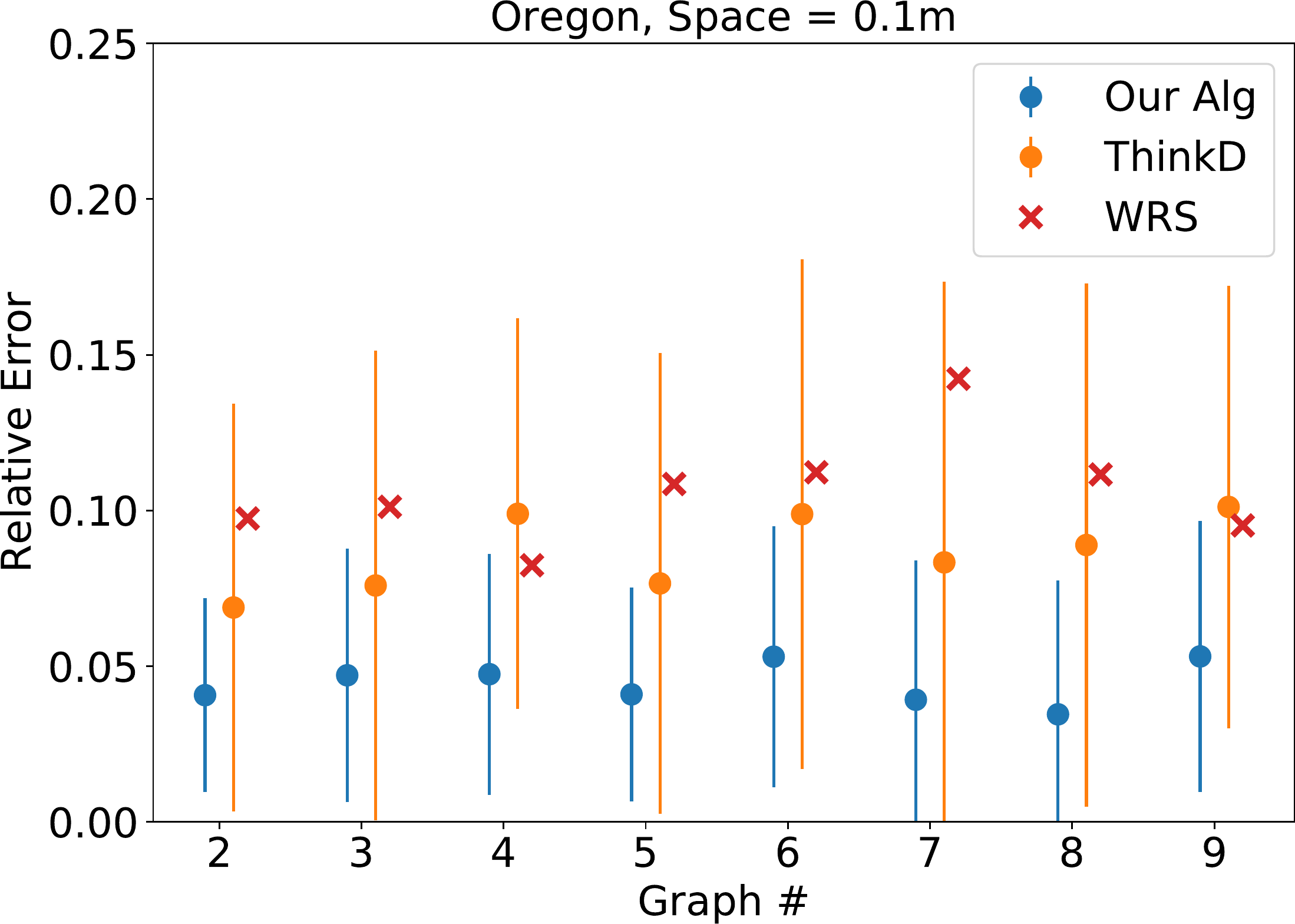}} 
    \subfigure[CAIDA $2006$ \label{fig:caida6_across}]{\includegraphics[scale=0.18]{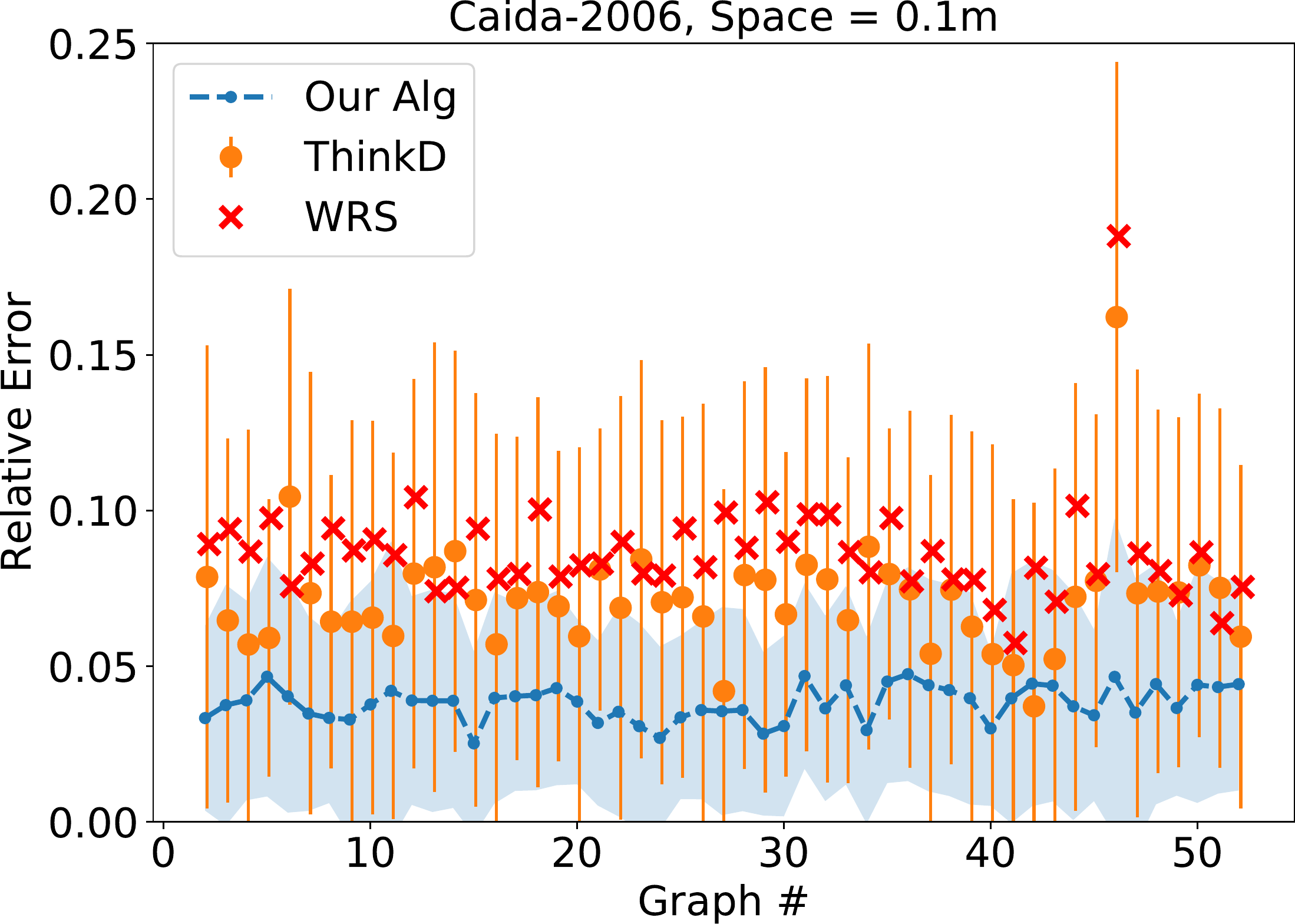}} 
    \subfigure[CAIDA $2007$ \label{fig:caida7_across}]{\includegraphics[scale=0.18]{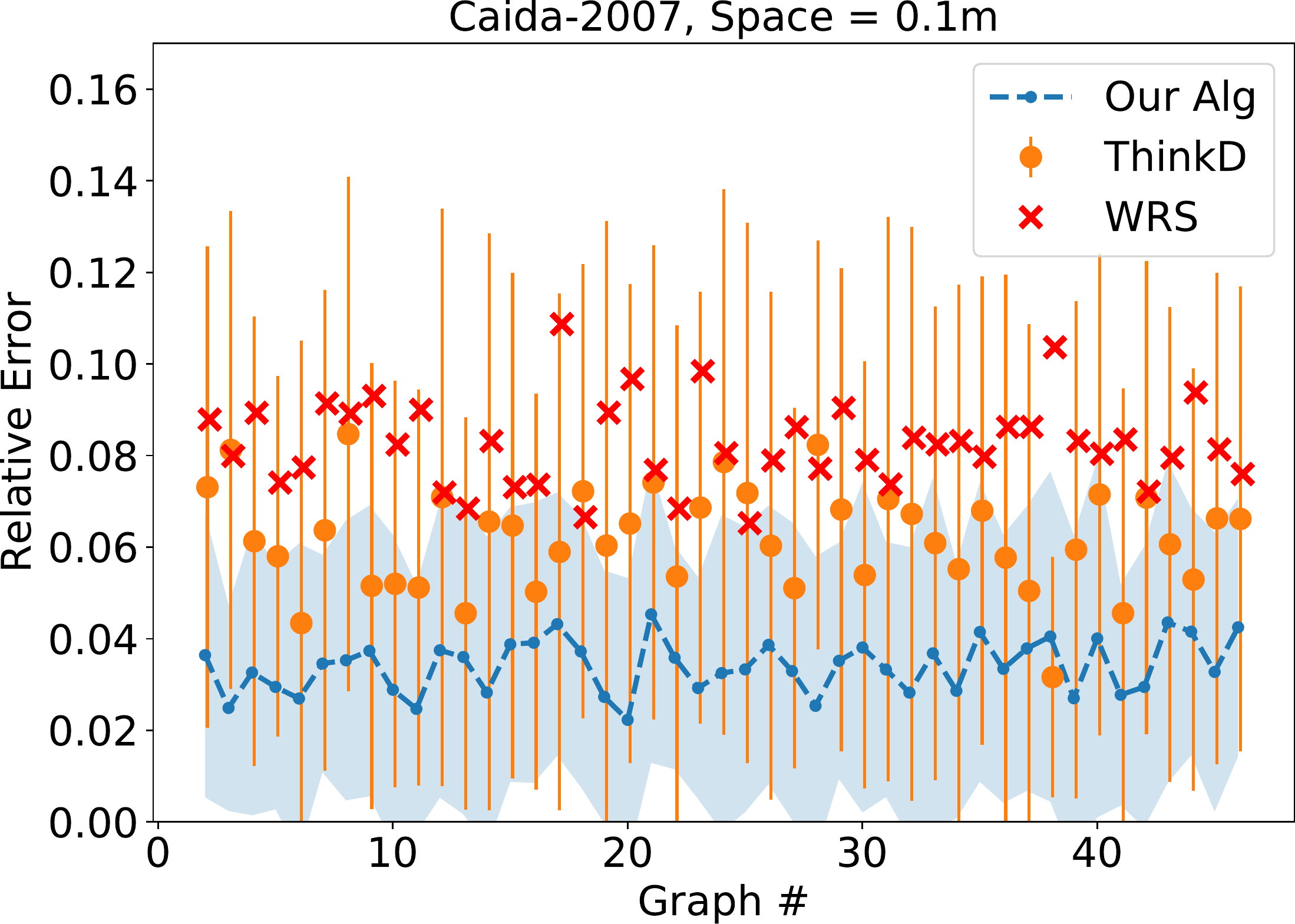}}
    \setlength{\belowcaptionskip}{-10pt}
    \caption{Error across snapshot graphs with space $0.1 m$ in the arbitrary order model.}
    \label{fig:across}
\end{figure*}

\vspace{-2mm}
\paragraph{Oregon and CAIDA}
In Figures \ref{fig:oregon1_space} and \ref{fig:caida6_space}, we display the relative error as a function of increasing space for graph $\#4$ in the dataset for Oregon and graph $\#30$ for CAIDA $2006$. These figures show that our algorithm outperforms the other baselines by as much as a \textbf{factor of 5}. We do not display the error bars for MVV and WRS for the sake of visual clarity, but they are comparable to or larger than both ThinkD and our algorithm. A similar remark applies to all figures in Figure \ref{fig:space}. As shown in Figure \ref{fig:across}, these specific examples are reflective of the performance of our algorithm across the whole sequence of graphs for Oregon and CAIDA. We also show qualitatively similar results for CAIDA $2007$ in Figure \ref{fig:caida7_space} in Supplementary Section \ref{sec:additional_figures_arbitrary}.

We also present accuracy results over the various graphs of the Oregon and CAIDA datasets. We fix the space to be $10\%$ of the number of edges (which varies across graphs). Our results for the Oregon dataset are plotted in Figure \ref{fig:oregon_across} and the results for CAIDA $2006$ and $2007$ are plotted in Figures \ref{fig:caida6_across} and \ref{fig:caida7_across}, respectively. These figures illustrate that the quality of the predictor remains consistent over time even after a year has elapsed between when the first and the last graphs were created as our algorithm outperforms the  baselines on average by up to a factor of $2$.

\textbf{Reddit} 
Our results are displayed in Figure \ref{fig:reddit_space}. All four algorithms are comparable as we vary space. While we do not improve over baselines in this case, this dataset serves to highlight the fact that predictors can be trained using node or edge semantic information (i.e., features).

\textbf{Wikibooks and Powerlaw}
\ For Wikibooks, we see in Figure \ref{fig:wiki_space} that our algorithm is outperforming ThinkD and WRS by at least a factor of $2$, and these algorithms heavily outperform the MVV algorithm.
Nonetheless, it is important to note that our algorithm uses the exact counts on the first half of the edges (encoded in the predictor), which encode a lot information not available to the baselines. Thus the takeaway is that in temporal graphs, where edges arrive continuously over time (e.g., citation networks, road networks, etc.), using a prefix of the edges to form noisy predictors can lead to a significant advantage in handling future data on the same graph. Our results for Powerlaw are presented in Figure \ref{fig:random_space}. Here too we see that as the space allocation increases, our algorithm outperforms ThinkD, MVV, and WRS.

\vspace{-3mm}
\paragraph{Twitch and Wikipedia} 
In Figure \ref{fig:wikipedia_space}, we see that three algorithms, ours, MVV, and WRS, are all comparable as we vary space. The qualitatively similar result for Twitch is given in Figure \ref{fig:twitch_space}. These datasets serve to highlight that predictors can be trained using modern ML techniques such as Graph Neural Networks. Nevertheless, these predictors help our algorithms improve over baselines for experiments in the adjacency list model (see Section below).
\vspace{-2mm}
\paragraph{Results for Adjacency List Experiments} \label{sec:adjacency_list_experiments}
Our experimental results for adjacency list experiments, which are qualitatively similar to the arbitrary order experiments, are given in full detail in Sections \ref{sec:adjacency_experiments_info} and \ref{sec:additional_figures_adjacency}.

\subsubsection*{Acknowledgments}
Justin is supported by the NSF Graduate Research Fellowship under Grant No.\ 1745302 and MathWorks Engineering Fellowship.
Sandeep and Shyam are supported by the NSF Graduate Research Fellowship under Grant No.\ 1745302.
Ronitt was supported by NSF awards CCF-2006664, DMS 2022448, and CCF-1740751.
Piotr was supported by the NSF TRIPODS program (awards CCF-1740751 and DMS-2022448), NSF award CCF-2006798 and Simons Investigator Award.
Talya is supported in part by the NSF TRIPODS program, award
CCF-1740751 and Ben Gurion University Postdoctoral Scholarship.
Honghao Lin and David Woodruff would like to thank for partial support from the National Science Foundation (NSF) under Grant No. CCF-1815840.

\bibliography{main_iclr}
\bibliographystyle{iclr_style/iclr2022_conference.bib}

\appendix
\section{Notations Table}\label{sec:not_table}
\notationsTable
\section{Omitted Proofs from Section~\ref{sec:adjacency list}}\label{sec:adj_missing}

In this section, we prove correctness and bound the space of Algorithm~\ref{alg:adj_tri_perfect}, thus proving Theorem~\ref{theorem_adjacency}. In the adjacency list model, recall that we have a total ordering on nodes $<_s$ based on the stream ordering, where $x <_s y$ if and only if the adjacency list of $x$ arrives before the adjacency list of $y$ in the stream. For each $xy \in E(G)$, we define $R_{xy} = |z: \{x, y, z\} \in \Delta \textnormal{ and } x <_s z <_s y|$. Note that if $y <_s x$, then $R_{xy} = 0$. It holds that
$ \sum_{xy \in E} R_{xy} = T $.

We first give a detailed overview of the algorithm. We first consider the case when $T \ge (m/\epsilon)^{1/2}$. Let $\rho = (mT)^{1/3}$. We define the edge $xy$ to be \textit{heavy} if $R_{xy} \ge \rho$, \textit{light} if $R_{xy} \le \frac{T}{\rho}$, and \textit{medium} if $\frac{T}{\rho} < R_{xy} < \rho$. We train the oracle to predict, upon seeing $xy$, whether $R_{xy}$ is light or not. We define $H, M$, and $L$ as the set of heavy, medium, and light edges, respectively. Define $T_L = \sum_{xy \in L} R_{xy}$. We define $T_H$ and $T_M$ similarly. Since $\sum_{xy \in E} R_{xy} = T = T_L+T_M+T_H$, it suffices to estimate each of these three terms.

For the light edges, as shown in Lemma~\ref{lem:adjlist_light}, if we sample edges with rate $p_1\approx \epsilon^{-2} / \rho$, with high probability we obtain an estimate of $T_L$ within error $\pm \epsilon T$. For the heavy edges, we recall an observation in~\cite{MVV16}: for an edge $xy$ where $R_{xy}$ is large, even if we do not sample $xy$ directly, 
\textcolor{black}{if we sample from the stream (at rate $p_3\approx \epsilon^{-2} / \rho$)}, 
we are likely to sample some edges in the set $ |xz: \{x, y, z\} \in \Delta \textnormal{ and } x <_s z <_s y|$. We refer to the sampled set of these edges as $S_{aux}$.

Further, we will show that not only can we use $S_{aux}$ to recognize whether $R_{xy}$ is large, but also to estimate $R_{xy}$ (Lemma~\ref{lem:adjlist_heavy}). 
What remains to handle is the medium edges.
Since medium edges have higher values of $R_{xy}$ than light edges, we sample with a larger rate, $p_2$, to reduce variance. However, we show there cannot be too many medium edges, so we do not pay too much extra storage space. The overall space we use is $\tilde{O} (\epsilon^{-2} m^{2/3} / T^{1/3})$. 

When $T < (m/\epsilon)^{1/2}$ our algorithm becomes much simpler. Let $\rho = \sqrt{m} / \epsilon$. We define the edge $xy$ to be \textit{heavy} if $R_{xy} \ge T / \rho$, and \textit{light} otherwise (so in this case all the medium edges become heavy edges). We directly store all the heavy edges and sub-sample with rate $\epsilon^{-2} / \rho$ to estimate $T_L$. The total space we use is $\tilde{O} (\epsilon^{-1} \sqrt{m})$, which is optimal in this case. See Algorithm~\ref{alg:adj_tri_perfect} for more details of the implementation.

\begin{algorithm}[ht]
  \caption{Counting Triangles in the Adjacency List Model}
  \label{alg:adj_tri_perfect}
\begin{algorithmic}[1]
   \STATE {\bfseries Input:} Adjacency list edge stream and an oracle $O$ that outputs \textsc{heavy} if $R_{xy} \ge T/\rho$ and \textsc{light} otherwise.
   \STATE {\bfseries Output:} An estimate of the triangle count $T$.
   \STATE Initialize $A_l,A_m,A_h = 0$, $S_L,S_M,S_{aux}=\emptyset$
   \IF{$T \ge (m / \epsilon)^{1/2}$} 
   \STATE Set
   $\rho = (mT)^{1/3}$, $p_1 = \alpha \epsilon^{-2} / \rho$, $p_2 =  \min(\beta\epsilon^{-2} \rho / T, 1)$, $p_3 = \gamma \epsilon^{-2} \log n / \rho$.
    \ELSE
    \STATE Set  $\rho = m^{1/2} / \epsilon$, $p_1 = 0$, $p_2 = 1$, $p_3 = \gamma\epsilon^{-2} / \rho$. \COMMENT{$\alpha$, $\beta$, and $\gamma$ are large enough constants.}
    \ENDIF
    \WHILE{seeing edges adjacent to $v$} 
    
    \FORALL{$ab \in S_L \cup S_M$} 
    \STATE \textbf{if} $a, b \in N(v)$ \textbf{then} $C_{ab} = C_{ab} + 1$. \COMMENT{Update the counter for all the sampled light and medium edges}
    \ENDFOR
    
    \FORALL {$av \in S_{aux} $} 
    \STATE $B_{av} = 1$  \COMMENT{seen $av$ twice}.
    \ENDFOR
    \FOR{ each incident edge $vu$}
    \STATE W.p. $p_3$: $S_{aux} = \{vu\} \cup S_{aux}$, $B_{vu} = 0$
    \IF{$O(uv)$ outputs \textsc{light}}
    \STATE \textbf{if} $uv \in S_L$ \textbf{then} $A_l = A_l + C_{uv}$. \COMMENT{Finished counting $R_{uv}$}
    \STATE \textbf{else} w.p. $p_1$, $S_L = \{vu\} \cup S_L$
    \ELSE[$uv$ is either \textsc{medium} or \textsc{heavy}]
    \STATE $\# := |\{z: uz \in S_{aux}, B_{uz}=1, z \in N(v)\}|$ . 
        \IF[$uv$ is \textsc{heavy}]{$\# \ge p_3 \rho$} 
            \STATE $A_h = A_h + \#$ .
        \ELSIF{$uv \in S_M$} 
            \STATE $A_m = A_m + C_{uv}$. \COMMENT{Finished counting $R_{uv}$}
        \ELSE
         \item With probability $p_2$, $S_M = \{vu\} \cup S_M$ .
         \ENDIF
    \ENDIF
    \ENDFOR
    \ENDWHILE
    \STATE {\bfseries return:} $A_l / p_1 + A_m / p_2 + A_h / p_3$.
\end{algorithmic}
\end{algorithm}

\begin{remark}
In the adjacency list model, we assume the oracle can predict whether $R_{xy} \ge \frac{T}{c}$ or not, where $R_{xy}$ is dependent on the node order. This may sometimes be impractical. However, observe that $N_{xy} \ge R_{xy}$ and $\sum N_{xy} = 3T$. Hence, our proof still holds if we assume what the oracle can predict is whether $N_{xy} \ge \frac{T}{c}$, which could be a more practical assumption. 
\end{remark}

Recall the following notation from Section~\ref{sec:adjacency list}.
We define the edge $xy$ to be \textit{heavy} if $R_{xy} \ge \rho$, \textit{light} if $R_{xy} \le \frac{T}{\rho}$, and \textit{medium} if $\frac{T}{\rho} < R_{xy} < \rho$.  We define $H, M$, and $L$ as the set of heavy, medium, and light edges, respectively. Define $T_L = \sum_{xy \in L} R_{xy}$. We define $T_H$ and $T_M$ similarly. Also, recall that since $\sum_{xy \in E} R_{xy} = T = T_L+T_M+T_H$, it suffices to estimate of these three terms.

\begin{lemma}
\label{lem:adjlist_light}
Let $A$ be an edge set such that for each edge $xy \in A$, $R_{xy} \le \frac{T}{c}$, for some parameter $c$, and let $S$ be a subset of $A$ such that every element in $A$ is included in $S$ with probability $p=\frac{1}{\tau}$
for $\tau = \eps^2\cdot c/\alpha,$ so that 
$1/\tau= \alpha \epsilon^{-2} \frac{1}{c}$ independently, for a sufficient large constant $\alpha$. Then, with probability at least $9 / 10$, we have 
\[
\tau \sum_{xy \in S} R_{xy} = \sum_{xy \in A} R_{xy}  \pm \epsilon T 
\]
\end{lemma}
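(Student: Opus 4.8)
The plan is to view $\tau \sum_{xy \in S} R_{xy}$ as an unbiased estimator of $\sum_{xy \in A} R_{xy} = T_A$ and bound its deviation via Chebyshev's inequality. Define the random variable $X = \sum_{xy \in A} \mathbf{1}[xy \in S] \cdot R_{xy}$, where the indicators are independent Bernoulli$(p)$ with $p = 1/\tau$. Then $\mathbb{E}[\tau X] = \tau \cdot p \cdot \sum_{xy \in A} R_{xy} = T_A$, so the estimator is exactly unbiased, which is the first thing to record.

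Next I would compute the variance. By independence, $\mathrm{Var}(\tau X) = \tau^2 \sum_{xy \in A} \mathrm{Var}(\mathbf{1}[xy\in S]) R_{xy}^2 = \tau^2 \cdot p(1-p) \sum_{xy\in A} R_{xy}^2 \le \tau \sum_{xy \in A} R_{xy}^2$. Now use the key structural fact that every edge in $A$ is light, i.e. $R_{xy} \le T/c$, to bound $\sum_{xy \in A} R_{xy}^2 \le (T/c) \sum_{xy \in A} R_{xy} = (T/c)\, T_A \le T^2/c$, since $T_A \le T$. Plugging in $\tau = \epsilon^2 c / \alpha$ gives $\mathrm{Var}(\tau X) \le (\epsilon^2 c/\alpha)(T^2/c) = \epsilon^2 T^2/\alpha$.

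Finally, apply Chebyshev: $\Pr\big[\,|\tau X - T_A| \ge \epsilon T\,\big] \le \mathrm{Var}(\tau X)/(\epsilon T)^2 \le 1/\alpha$. Choosing the constant $\alpha \ge 10$ makes this probability at most $1/10$, so with probability at least $9/10$ we have $\tau \sum_{xy\in S} R_{xy} = \sum_{xy \in A} R_{xy} \pm \epsilon T$, as claimed. (If one also wants the "$p \le 1$" sanity condition, note it holds whenever $\alpha \epsilon^{-2} \le c$, which is exactly the regime in which this lemma is applied; otherwise $S = A$ deterministically and the statement is trivial.)

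I do not expect any genuine obstacle here — the argument is a routine second-moment computation. The only place requiring care is the variance bound: one must use the lightness hypothesis $R_{xy} \le T/c$ to convert $\sum R_{xy}^2$ into a linear quantity $\sum R_{xy}$, rather than naively bounding $\sum R_{xy}^2$ by $T^2$ (which would be far too lossy and would not yield the right dependence on $c$). Getting the bookkeeping of the constant $\alpha$ straight so that the failure probability comes out to $1/10$ is the only other thing to watch.
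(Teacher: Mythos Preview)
Your argument is correct and essentially identical to the paper's: both compute the mean, bound the variance by $\tau \sum_{xy\in A} R_{xy}^2 \le \tau \cdot (T/c)\cdot T = \eps^2 T^2/\alpha$ using the lightness hypothesis and $\sum R_{xy}\le T$, and finish with Chebyshev to get failure probability $\le 1/\alpha$. The only cosmetic difference is that you invoke independence directly to sum the variances, whereas the paper expands $\mathbb{E}[Y^2]-\mathbb{E}[Y]^2$ before arriving at the same bound.
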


\begin{proof}
We let $Y = \tau \sum_{i \in S} R_{i} = \tau \sum_{i \in A} f_{i} R_{i}$, where $f_{i} = 1$ when $i \in S$ and $f_{i} = 0$ otherwise. It follows that $\mathbb{E}[Y] = T_A$. Then, 
\begin{align*}
\mathbf{Var}[Y]  & =  \mathbb{E}[Y^2] - {\mathbb{E}[Y]}^{2}  \\
& = \tau^2 \left( \mathbb{E}[ 
\sum \limits_{i \in A} {f_i R_i^2}  + \sum \limits_{i \neq j} f_i f_j R_i R_j ] \right) - T_{L}^2
\\
  & \le \tau^2 \left( \mathbb{E}[
\sum \limits_{i \in A} {f_i R_{i}^2}]  + \sum \limits_{i, j} \frac{1}{\tau^2} R_i R_j \right)  - T_{L}^2
  \\
  & = \tau \sum \limits_{i \in A} R_i^2
  \; .
\end{align*}

To bound this, we notice that $R_i \le \frac{T}{c}$ and $\sum_i R_i = T_A \le T$, so
\[
\tau \sum \limits_{i \in A} R_i^2 \le \tau \cdot \max_{i \in A} R_i \cdot \sum \limits_{i \in A} R_i \le \tau \cdot \frac{T}{c} \cdot T = \frac{\tau}{c} T^2 = \frac{1}{\alpha} \epsilon^2 T^2 \;.
\]
By Chebyshev's inequality, we have that 
\[
    \Pr [|Y - T_A| > \epsilon T] \le \frac{\frac{1}{\alpha}\epsilon^2 T^2}{\epsilon^2 T^2}\le \frac{1}{\alpha} \;. 
\]
\end{proof}

The following lemma shows that for the heavy edges, we can use the edges we have sampled to recognize and estimate them.
\begin{lemma}
\label{lem:adjlist_heavy}
Let $xy$ be an edge in $E$. Assume we sample the edges with rate $p_3 = \beta \epsilon^{-2} \log n/ \rho$ in the stream for a sufficiently large constant $\beta$, and consider the set 
\[
\# = |xz \textnormal{ has been sampled}: \{x, y, z\} \in \Delta \textnormal{ and } x <_s z <_s y| \; .
\]
Then we have the following: if $R_{xy} \ge \frac{\rho}{2}$, with probability at least $1 - \frac{1}{n^5}$, we have $ p_3^{-1}\cdot\#= (1 \pm \epsilon) R_{xy}$. If $R_{xy} < \frac{\rho}{2}$, with probability at least $1 - \frac{1}{n^5}$, we have $p_3^{-1}\cdot  \# < \rho$.
\end{lemma}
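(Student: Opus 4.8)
## Proof Proposal for Lemma~\ref{lem:adjlist_heavy}

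The plan is to analyze the random variable $\#$ as a sum of independent indicators and apply a Chernoff bound, noting that the sampling rate $p_3$ is chosen precisely so that the expected number of sampled ``witness'' edges is $\Theta(\epsilon^{-2}\log n)$ when $R_{xy}$ is near the threshold $\rho$. First I would fix the edge $e = xy$ and consider the set $W_{xy} = \{xz : \{x,y,z\}\in\Delta \text{ and } x <_s z <_s y\}$, which has size exactly $R_{xy}$ by definition. Each edge $xz \in W_{xy}$ is independently sampled into $S_{aux}$ with probability $p_3$, and moreover, since $z$ is seen strictly after $x$ but strictly before $y$ in the adjacency-list order, the algorithm will have set $B_{xz} = 1$ (seen $xz$ twice, via both $x$'s and $z$'s adjacency lists) and will detect $z \in N(v)$ when processing $y$'s list — so the quantity $\#$ the algorithm computes for edge $xy$ is exactly $\sum_{xz\in W_{xy}} \mathbf{1}[xz \text{ sampled}]$, a $\mathrm{Binomial}(R_{xy}, p_3)$ random variable. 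Thus $\mathbb{E}[\#] = p_3 R_{xy}$, and $p_3^{-1}\#$ is an unbiased estimator of $R_{xy}$.

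For the first claim ($R_{xy} \ge \rho/2$), I would apply the multiplicative Chernoff bound: $\Pr[\,|\# - p_3 R_{xy}| > \epsilon\, p_3 R_{xy}\,] \le 2\exp(-\epsilon^2 p_3 R_{xy}/3)$. Since $p_3 R_{xy} \ge p_3 \rho/2 = (\beta/2)\epsilon^{-2}\log n$, this is at most $2\exp(-(\beta/6)\log n) \le n^{-5}$ for $\beta$ a sufficiently large constant, giving $p_3^{-1}\# = (1\pm\epsilon)R_{xy}$ with probability $\ge 1 - n^{-5}$. For the second claim ($R_{xy} < \rho/2$), I would bound $\#$ above using the upper-tail Chernoff bound: $\mathbb{E}[\#] = p_3 R_{xy} < p_3\rho/2$, and I want $\Pr[\# \ge p_3\rho]$, i.e., $\Pr[\# \ge (1+\delta)\mathbb{E}[\#]]$ where $(1+\delta) = \rho/R_{xy} > 2$. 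For $\delta \ge 1$ the bound $\exp(-\delta\mathbb{E}[\#]/3)$ applies, and $\delta\mathbb{E}[\#] = p_3(\rho - R_{xy}) \ge p_3\rho/2 = (\beta/2)\epsilon^{-2}\log n \ge (\beta/2)\log n$, so again this is at most $n^{-5}$ for $\beta$ large enough. (If $R_{xy}$ is very small the gap only gets larger, so the bound is monotone in the favorable direction.)

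The main obstacle — really more of a subtlety than a genuine obstacle — is justifying that the algorithm's computed quantity $\#$ genuinely equals the idealized binomial sum: one must check that the bookkeeping with $B_{xz}$ and the timing of when $z$ enters $N(v)$ relative to the arrival of $y$'s adjacency list works out, so that every $z \in W_{xy}$ contributes to $\#$ exactly when $xz$ was sampled, and no spurious $z$ (with $z <_s x$ or $z >_s y$) contributes. This follows from the definition of $R_{xy}$ restricting to $x <_s z <_s y$ together with the condition $B_{uz}=1$ in the algorithm, but it deserves an explicit sentence. A secondary point is that the two tail bounds should be stated for the regime $\epsilon < 1$ (so that $\epsilon^2 < \epsilon$ etc.), which is assumed throughout; and one should remark that the $\log n$ factor in $p_3$ is exactly what upgrades a constant failure probability to the $1 - n^{-5}$ needed to later union bound over all (at most $m \le n^2$) edges.
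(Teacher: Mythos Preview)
Your proposal is correct and follows essentially the same approach as the paper: model $\#$ as a $\mathrm{Binomial}(R_{xy},p_3)$ random variable and apply a multiplicative Chernoff bound in each of the two regimes, using $p_3 R_{xy} \ge p_3\rho/2 = \Theta(\epsilon^{-2}\log n)$ to get the $n^{-5}$ failure probability. The paper's proof is somewhat terser (it handles the case $R_{xy}<\rho/2$ in one line without explicitly invoking the $\delta\ge 1$ form of Chernoff), and it does not spell out the algorithmic bookkeeping that you rightly flag as a subtlety, but the core argument is identical.
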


\begin{proof}
We first consider the case when $R_{xy} \ge \frac{\rho}{2}$.

For each edge $i \in |xz : \{x, y, z\} \in \Delta \textnormal{ and } x <_s z <_s y|$, let $X_i$ = 1 if $i$ has been sampled, or $X_i = 0$ otherwise. We can see the $X_i$ are i.i.d. Bernoulli random variables and $\mathbb{E}[X_i] = p_3$. By a Chernoff bound we have
\[
\mathbf{Pr}\left[|\sum_i X_i - p_3 R_{xy}| \ge \epsilon p_3 \cdot R_{xy}\right] \le 2\exp\left(-\epsilon^2 p_3\cdot  R_{xy}/3\right) \le \frac{1}{n^5} \; .
\]
Therefore, we have
\begin{align*}
\mathbf{Pr}\left[|p_3^{-1}\cdot \# - R_{xy}| \ge \epsilon R_{xy}\right] =& \mathbf{Pr}\left[ |p_3^{-1}\cdot \sum_i X_i - R_{xy}| \ge \epsilon R_{xy}\right] 
\le \frac{1}{n^5} \; .
\end{align*}
Alternatively, when $R_{xy} < \frac{\rho}{2}$, we have 
\[
 \mathbf{Pr}\left[|p_3^{-1}\cdot \# - R_{xy}| \ge \frac{1}{2}\rho\right] \le \frac{1}{n^5} \; ,
\]
which means $p_3^{-1}\cdot \# < \rho$ with probability at least $1 - \frac{1}{n^5}$.
\end{proof}

We are now ready to prove Theorem~\ref{theorem_adjacency}, restated here for the sake of convenience. 

\theoremAdjacency*
\begin{proof}
We first consider the case when $T \ge (m / \epsilon)^{1/2}$.

For each edge $xy \in H$, from Lemma~\ref{lem:adjlist_heavy} we have that with probability at least $1 - \frac{1}{n^5}$, $\# / p_3 = (1 \pm \epsilon) R_{xy}$ in Algorithm~\ref{alg:adj_tri_perfect}. If $xy \notin H$, then with probability $1 - \frac{1}{n^5}$, $xy$ will not contribute to $A_h$. Taking a union bound, we have that with probability at least $1 - 1/n^3$, 
\[
A_h / p_3 = (1 \pm \epsilon) \sum_{xy \in H} R_{xy} = (1 \pm \epsilon) T_H \; .
\]
We now turn to deal with the light and medium edges.
For a light edge $xy$, it satisfies $R_{xy} \le T / \rho$ and therefore we can  invoke  Lemma~\ref{lem:adjlist_light} with the parameter $c$ set to $\rho$.
For a medium edge $xy$, it  satisfies $R_{xy} \le \rho = T / (T / \rho)$, and therefore  we can  invoke Lemma~\ref{lem:adjlist_light} with the parameter $c$ set to $T/\rho$.
Hence, 
we have that with probability $4/5$,
\[
A_m / p_2 = T_M \pm \epsilon T,  A_l / p_1 = T_L \pm \epsilon T\; .
\]
Putting everything together, we have that with probability $7/10$,
\[
|A_h / p_3 + A_m / p_2 + A_l / p_1 - T| \le 3 \epsilon T.
\]
Now we analyze the space complexity. We note that there are at most $\rho$ medium edges. Hence, the expected total space we use is
$O(m p_3 + \rho p_2 + m p_1) = O(\epsilon^{-2} m^{2/3} / T^{1/3}\log n)$\footnote{Using Markov's inequality, we have that with probability at least $9 / 10$, the total space we use is less than $10$ times the expected total space.}. 

When $T < (m / \epsilon)^{1/2}$, as described in Section~\ref{sec:adjacency list}, we only have two classes of edges: heavy and light. We will save all the heavy edges and use sub-sampling to estimate $T_L$. The analysis is almost the same.
\end{proof}

\subsection{Noisy Oracles for the Adjacency List Model}\label{sec:adj_noisy_oracles}

\begin{proof}[Proof of Theorem~\ref{thm:imperfect_adj}]

We first consider the case that $T < (m / \epsilon)^{1/2}$. Recall that at this time, our algorithms becomes that we save all the heavy edges $xy$ such that $p(e) \ge \epsilon T / \sqrt{m}$ and sampling the light edges with rate $p_l = \epsilon^{-1}/\sqrt{m}$. We define $\mathcal{L}$ as the set of deemed \light\ edges, $\mathcal{H}$ as the set of deemed \heavy\ edges and $L = |\mathcal{L}|$, $H = |\mathcal{H}|$. Let $\rho = \epsilon T / \sqrt{m}$, then we have the condition that for every edge $e$,
\[
1 - K \cdot \frac{\rho}{R_e} \le Pr[\mO(e)=\textsc{heavy}] \le K \cdot \frac{R_e}{\rho}.
\]
We will show that, $(i)$ $\mathbb{E}[H] \le (K + 1) \frac{\sqrt{m}}{\epsilon}$,  $(ii)$ $\mathbf{Var}[A_L / p_1] \le (4K +2) \epsilon^2 T^2$. Under the above conditions we can get that with probability at least $9/10$ we can get a $(1 \pm O(\sqrt{K}) \cdot \epsilon)$-approximation of $T$ by Chebyshev's inequality.

For $(i)$, we divide $\mathcal{H} = H_h \cup H_l$, where the edges in $H_h$ are indeed \heavy  edges, and the edges in $H_l$ are indeed \light  edges. Then, it is easy to see that $|H_h| \le T / \rho$.
For every light edges, the probability that it is included in $H_l$  is at most $K \cdot \frac{R_e}{\rho}$, hence we have
\[
\mathbb{E}[|H_h|] \le K \cdot \sum_{e \in \mathcal{H}}  (\frac{R_e}{\rho}) \le K \cdot \frac{T}{\rho} = K \frac{\sqrt{m}}{\epsilon}
\]
which implies $\mathbb{E}[H] \le (K + 1) \frac{\sqrt{m}}{\epsilon}$.

For $(ii)$, we also divide $\mathcal{L} = L_l \cup L_h$ similarly. Then we have $\mathbf{Var}[A_L] \le  2(\mathbf{Var}[X] + \mathbf{Var}[Y])$, where $X = \sum_{e \in L_l, e \text{ has been sampled}} R_e$, $Y = \sum_{e \in L_h, e \text{ has been sampled}} R_e$. Similar to the proof in Lemma~\ref{lem:adjlist_light}, we have 
\[
\mathbf{Var}[X] \le p_1 \sum_{e \in \mathcal{L}} R_e^2 \le p_1 \frac{T}{\rho} \rho^2 = p_1 T \rho.
\]
Now we bound $\mathbf{Var}[Y]$. For every heavy edge we have that $\mathbf{Pr}[e \in L_h] \le K \cdot \frac{\rho}{R_e}$. Then, condition on the oracle $O_\rho$, we have 
\[
\mathbb{E}[\mathbf{Var}[Y|O_{\rho}]] \le p_1 K\sum_{e \in \mathcal{H}} \frac{\rho}{R_e} R_e^2 \le p_1 K \sum_{e \in \mathcal{H}} \rho R_e \le p_1 K \rho T \;.
\]
Also, we have
\[
\mathbf{Var}[\mathbb{E}[Y|O_\rho]] < p_1 K\sum_{e \in \mathcal{H}} \frac{\rho}{R_e} R_e^2 \le p_1 K \rho T \;.
\]
From $\mathbf{Var}[Y] = \mathbb{E}[\mathbf{Var}[Y | O_\rho]] + \mathbf{Var}[\mathbb{E}[Y | O_\rho]] $ we know $\mathbf{Var}[Y] \le 2p_1 K \rho T$, which means $\mathbf{Var}[A_l / p_1] \le \frac{1}{p_1}(4K + 2) \rho T = (4K + 2) \epsilon^2 T^2$. 

When $T \ge (m / \epsilon)^{1/2}$, recall that the oracle $O_{\rho}$ only needs to predict the edges that are medium edges or light edges. So, we can use the same way to bound the expected space for the deemed medium edges and the variance of the deemed light edges.
\end{proof}

\subsection{A Value Prediction Oracle for the Adjacency List Model}\label{sec:adj_value_oracle}

In this section, we consider two types of  value-prediction  oracles, and variants of Algorithm~\ref{alg:adj_tri_perfect} that take advantage of these oracles. Recall that given an edge $xy$,  $R_{xy} = |z: \{x, y, z\} \in \Delta \textnormal{ and } x <_s z <_s y|$ .

\begin{definition}
A value-prediction oracle with parameter $(\alpha,\beta)$ is an oracle that, for any edge $e$,
outputs a value $p(e)$ for which $\mathbb{E}[p(e)]\le \alpha R_e + \beta $, and $\mathbf{Pr}[p(e) < \frac{R_e}{\lambda} - \beta] \le K e^{-\lambda}$ for some constant $K$ and any $\lambda \ge 1$.
\end{definition}

Our algorithm is based on the following stability property of exponential random variables (see e.g.~\cite{Andoni17})

\begin{lemma}
\label{lem:exp_stable}
Let $x_1, x_2, ..., x_n > 0$ be real numbers, and let $u_1, u_2, ...u_n$ be i.i.d. standard exponential random variables with parameter $\lambda = 1$. Then we have that the random variable $\max(\frac{x_1}{u_1}, ..., \frac{x_n}{u_n}) \sim \frac{x}{u}$, where $x = \sum_i x_i$ and $u \sim \mathrm{Exp}(1)$.
\end{lemma}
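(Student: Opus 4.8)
## Proof Plan for Lemma~\ref{lem:exp_stable}

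The plan is to compute the cumulative distribution function (CDF) of the random variable $M := \max_i (x_i / u_i)$ directly and show it coincides with the CDF of $x/u$ where $x = \sum_i x_i$ and $u \sim \mathrm{Exp}(1)$. First I would recall that for a standard exponential variable $u_i$, we have $\Pr[u_i > t] = e^{-t}$ for $t \ge 0$. The key observation is that $x_i / u_i \le s$ is equivalent to $u_i \ge x_i / s$ (for $s > 0$), which has probability $e^{-x_i/s}$.

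The main computation is then the following: by independence of the $u_i$,
\begin{align*}
\Pr[M \le s] &= \Pr\left[\bigcap_{i=1}^n \left\{\frac{x_i}{u_i} \le s\right\}\right] = \prod_{i=1}^n \Pr\left[u_i \ge \frac{x_i}{s}\right] = \prod_{i=1}^n e^{-x_i/s} = e^{-(\sum_i x_i)/s} = e^{-x/s}.
\end{align*}
On the other hand, for $u \sim \mathrm{Exp}(1)$ we have $\Pr[x/u \le s] = \Pr[u \ge x/s] = e^{-x/s}$, which matches. Since the two variables have the same CDF on $(0,\infty)$ and both are supported there (all $x_i > 0$ so both sides are positive almost surely), they are equal in distribution.

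The argument is essentially routine once the equivalence $\{x_i/u_i \le s\} = \{u_i \ge x_i/s\}$ is written down; there is no real obstacle. The only minor point to be careful about is the degenerate case $s \le 0$, where both probabilities are trivially $0$ (since the variables are strictly positive), so the CDFs agree everywhere. I would also note in passing that this ``max-stability'' property is exactly what makes exponential random variables useful for the sketching scheme described in the paragraph preceding the lemma: tracking $\max_e R_e / u_e$ recovers $T / u$ in distribution, and the density of $\mathrm{Exp}(1)$ can then be used to argue the argmax is concentrated among the edges with the largest $p(e)/u_e$ values.
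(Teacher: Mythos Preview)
Your proof is correct: computing the CDF of $\max_i (x_i/u_i)$ via independence and the tail formula $\Pr[u_i \ge t] = e^{-t}$ gives $e^{-x/s}$, which matches the CDF of $x/u$ for $u \sim \mathrm{Exp}(1)$. The paper does not actually supply its own proof of this lemma---it simply cites the stability property from \cite{Andoni17}---so your direct CDF computation is exactly the standard argument one would expect here and fills in what the paper omits.
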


\begin{lemma}
\label{lem:exp_median}
Let $r = O(\ln(1 / \delta) / \epsilon^{2})$ and $\epsilon \le 1/3$. If we take $r$ independent samples $X_1, X_2, ... , X_r$ from $\mathrm{Exp}(1)$ and let $X = \mathrm{median}(X_1, X_2, ..., X_r)$, then with probability $1 - \delta$, $X \in [\ln{2}(1-\epsilon), \ln{2}(1 + 5\epsilon)]$.
\end{lemma}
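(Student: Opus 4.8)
The plan is to reduce the claim to two one-sided tail bounds --- one for each endpoint of the target interval --- and then apply a standard concentration inequality. Write $F(t)=1-e^{-t}$ for the CDF of $\mathrm{Exp}(1)$; its median is $\ln 2$, since $F(\ln 2)=\tfrac12$. Put $a=\ln 2\,(1-\epsilon)$ and $b=\ln 2\,(1+5\epsilon)$. The sample median $X$ lies outside $[a,b]$ only if at least $r/2$ of the $X_i$ are $<a$, or at least $r/2$ of the $X_i$ are $\ge b$. It therefore suffices to bound the probability of each of these two events by $\delta/2$ and take a union bound.

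First I would estimate the two per-sample probabilities and show each falls below $\tfrac12$ by a constant multiple of $\epsilon$. For the lower endpoint, $q_-:=\mathbf{Pr}[X_i<a]=1-e^{-a}=1-\tfrac12\cdot 2^{\epsilon}$, and $2^{\epsilon}=e^{\epsilon\ln 2}\ge 1+\epsilon\ln 2$ gives $q_-\le \tfrac12-\tfrac{\ln 2}{2}\,\epsilon$. For the upper endpoint, $q_+:=\mathbf{Pr}[X_i\ge b]=e^{-b}=\tfrac12\cdot 2^{-5\epsilon}$; since $\epsilon\le 1/3$ we have $5\epsilon\ln 2\le \tfrac{5\ln 2}{3}<\tfrac32$, and $e^{-x}\le 1-x/2$ holds on $[0,\tfrac32]$, so $q_+\le \tfrac12\bigl(1-\tfrac{5\epsilon\ln 2}{2}\bigr)\le \tfrac12-\tfrac{\ln 2}{2}\,\epsilon$. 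Hence with $c_0:=\tfrac{\ln 2}{2}$, both $q_-$ and $q_+$ are at most $\tfrac12-c_0\epsilon$.

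Next, let $N_-$ and $N_+$ count the indices $i$ with $X_i<a$ and $X_i\ge b$, respectively; these are $\mathrm{Bin}(r,q_-)$ and $\mathrm{Bin}(r,q_+)$ variables with means at most $r(\tfrac12-c_0\epsilon)$. By Hoeffding's inequality,
\[
\mathbf{Pr}[N_-\ge r/2]\le \mathbf{Pr}\bigl[N_--\mathbb{E}N_-\ge c_0\epsilon r\bigr]\le \exp(-2c_0^2\epsilon^2 r),
\]
and the same bound holds for $N_+$. Taking $r=\bigl\lceil \ln(2/\delta)/(2c_0^2\epsilon^2)\bigr\rceil=O(\ln(1/\delta)/\epsilon^2)$ makes each probability at most $\delta/2$, so with probability at least $1-\delta$ both $N_-<r/2$ and $N_+<r/2$ hold. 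On that event fewer than half the samples lie below $a$, so the $\lceil r/2\rceil$-th smallest sample --- i.e.\ the median $X$ --- is $\ge a$; likewise fewer than half lie at or above $b$, so more than half lie below $b$ and $X<b$. Both implications hold for either parity of $r$, and ties occur with probability zero. This yields $X\in[\ln 2\,(1-\epsilon),\,\ln 2\,(1+5\epsilon)]$, as claimed.

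The Hoeffding step and the order-statistic bookkeeping are routine; the only place that needs a little attention is the pair of elementary exponential estimates producing the $\Theta(\epsilon)$ gap in $q_\pm$, where the hypothesis $\epsilon\le 1/3$ is exactly what keeps $5\epsilon\ln 2$ inside the range on which $e^{-x}\le 1-x/2$. I do not expect a genuine obstacle.
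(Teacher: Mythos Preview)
Your proof is correct and follows essentially the same approach as the paper's: both reduce the median concentration to two binomial tail bounds (via Chernoff/Hoeffding) and use elementary exponential estimates to locate the endpoints. The only cosmetic difference is that the paper works in ``$F$-space'' first---showing $F(X)\in[1/2-\epsilon,1/2+\epsilon]$ using that $F(X_i)$ is uniform, which makes the per-sample mean exactly $1/2-\epsilon$---and then inverts $F$ at the end, whereas you compute $q_\pm=F(a),1-F(b)$ directly and bound them by $1/2-c_0\epsilon$ before applying Hoeffding; the two orderings trade the same calculus back and forth.
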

\begin{proof}
We first prove the following statement: with probability $1-\delta$, $F(X) \in [1/2-\epsilon, 1/2 + \epsilon]$, where $F$ is the cumulative density function of a standard exponential random variable.

Consider the case when $F(X) \le 1/2 - \epsilon$. We use the indicator variable $Z_i$ where $Z_i = 1$ if $F(X_i) \le 1/2 - \epsilon$ or $Z_i = 0$ otherwise. Notice that when $F(X) \le 1/2 - \epsilon$, at least half of the $Z_i$ are $1$, so $Z = \sum_i Z_i \ge r / 2$. On the other hand, we have $\mathbb{E}[Z] = r / 2 - r \epsilon$, by a Chernoff bound. We thus have 
\[
\mathbf{Pr}[|Z - \mathbb{E}[Z]| \ge \epsilon r] \le 2e^{-\epsilon^{2}r/3} \le \delta / 2\; .
\]
Therefore, we have with probability at most $\delta/2$, $F(X) \le 1/2 - \epsilon$. Similarly, we have that with probability at most $\delta / 2$, $F(X) \ge 1/2 + \epsilon$. Taking a union bound, we have that with probability at least $1-\delta$, $F(X) \in [1/2-\epsilon, 1/2 + \epsilon]$.

Now, condition on $F(X) \in [1/2-\epsilon, 1/2 + \epsilon]$. Note that $F^{-1}(x) = -\ln(1 - x)$, so if we consider the two endpoints of $F(X)$, we have 
\[
F^{-1} (1/2 -\epsilon) = -\ln(1/2 + \epsilon) \ge (1-\epsilon)\ln{2},
\]
and
\[
F^{-1} (1/2 +\epsilon) = -\ln(1/2 - \epsilon) \le (1+5\epsilon)\ln{2}.
\]
which indicates that $X \in [\ln{2}(1-\epsilon), \ln{2}(1 + 5\epsilon)]$.
\end{proof}

 The algorithm utilizing this oracle is shown in Algorithm~\ref{alg:4}. As described in Section~\ref{sec:adjacency list}, here we runs $O(\frac{1}{\epsilon^2})$ copies of the subroutine and takes a medium of them. For each subroutine, we aim to output the maximum value $\max_e R_e / u_e$. We will show that with high probability for each subroutine, this maximum will be at least $T/\log(K/\eps)$. Hence, we only need to save the edge $e$, such that $(R_e + \beta) / u_e$ is larger than $T / \log(K/\eps)$. However, one issue here is we need the information of $T$. We can remove this assumption when $\beta = 0$ using the following way: we will show that with high probability, the total number of edges we save is less than $H = O(\frac{1}{\epsilon^2} \log^2 (K/\eps) (\alpha + m\beta / T)) $, so every time when a new edge comes, we can search the minimum value $M$ such that the total number of edges $e$ that in at least one subroutine $(p(e) + \beta)/u_e > M$ is less than $H$(we count an edge multiple times if it occurs in multiple subroutines), then use $M$ as the threshold. We can see $M$ will increase as the new edge arriving and it will be always less than $H$. 

\begin{algorithm}[!ht]
   \caption{Counting Triangles in the Adjacency List Model with a Value Prediction Oracle}
   \label{alg:4}
\begin{algorithmic}[1]
   \STATE {\bfseries Input:} Adjacency list edge stream and an value prediction oracle with parameter$(\alpha, \beta)$. 
   \STATE {\bfseries Output:} An estimate of the number $T$ of triangles.
   \STATE Initialize $X \leftarrow \emptyset$ and $H = O(\frac{1}{\epsilon^2} \log^2 (K/\eps) (\alpha + m\beta / T))$, and let $c$ be some large constant.
   \FOR[Initialize sets for $c \epsilon^{-2}$ copies of samples, where $S_i$ stores edges, $Q_i$ stores all the exponential random variables for each sampled edge, and $A_i$ stores the current maximum of each sample]{$i = 0$ {\bf to} ${c \epsilon^{-2}}$}
        \STATE $S^{i} \leftarrow \emptyset, Q^{i} \leftarrow \emptyset$, and $A^i \leftarrow 0 $ \; .
   \ENDFOR 
    \WHILE{seeing edges adjacent to $y$ in the stream}
    \FOR[Update the counters for each sample]{$i = 0$ {\bf to} ${c \epsilon^{-2}}$}
    \STATE For all $ab \in S^i$: if $a, b \in N(y)$ then $C_{ab}^i \leftarrow C_{ab}^i + 1$\;
    \ENDFOR
    \FOR{each incident edge $yx$}

        \FOR{$i = 0$ {\bf to} ${c \epsilon^{-2}}$} 
            \IF[Finished counting $C_{xy}$, update the current maximum]{$xy \in S^i$} 
            \STATE $A^i \leftarrow \max(A^i,  C_{xy}^i / u_{xy}^i)$ \;.
            \ELSE[Put $yx$ into the sample sets temporarily]
            \STATE Let $\hat{R}_{yx}$ be the predicted value of $R_{yx}$.
            \STATE $\hat{R}_{yx} \leftarrow \hat{R}_{yx} + \beta$.
            \STATE Generate $u_{yx}^i \sim \mathrm{Exp}(1)$. Set $Q^i \leftarrow Q^i \cup \{u_{yx}^i\}, S^i \leftarrow S^i \cup \{yx\}$.
            \ENDIF
        \ENDFOR
    \ENDFOR
    \STATE Set $M$ to be the minimal integer such that $\sum_i s_i \le H$, where $s_i = |\{e \in E \ | \  \hat{R}_{e}/u_{e}^{i} \ge M \}|$.
     \FOR[Adjust the samples to be within the space limit]{$i = 0$ {\bf to} ${c \epsilon^{-2}}$} 
      \STATE Let $Q^i \leftarrow Q^i \setminus \{u_{ab}^i\}, S^i \leftarrow S^i \setminus \{ab\}$ if $\hat{R}_{ab}/u_{ab}^i < M$ for all $ab \in E$.
      \ENDFOR
    \ENDWHILE
    \FOR{$i = 0$ {\bf to} ${c \epsilon^{-2}}$}
        \STATE $X \leftarrow X \cup \{A^i\}$
   \ENDFOR
    \STATE {\bfseries return:} $\ln{2} \cdot \mathrm{median}(X) $ \; .
\end{algorithmic}
\end{algorithm}

We are now ready to prove Theorem~\ref{thm:value_oracle:1}. We first recall the theorem.
\thmAdjValueOne*

\begin{proof}

It is easy to see that for the $i$-th subroutine, the value $f(i) = \max_e R_{e} / u_{e}^i$ is the sample from the distribution $T/u$, where $u \sim \mathrm{Exp}(1) $. And if the edge $e_i$, for which the true value $R_{e_i} / u_{e_i}^i$ is the maximum value among $E$, is included in $S^i$, then the output of the $i$-th subroutine will be a sample from $T / u$. Intuitively, the prediction value $p(e)$ will help us find this edge.

We will first show that with probability at least $9/10$ the following events will happen:
\begin{itemize}
    \item $(i)$ $f(i) \ge c_1 T / \log(1 / \epsilon)$ for all $i$, where $c_1$ is a constant.
    \item $(ii)$ Let $s_i$ be the number of the edge $e$ in the $i$-th subroutine such that $(p(e) + \beta) / u_e^i \ge c_1^2 T / \log^2(K / \epsilon)$, then $\sum_i s_i \le c_2(\frac{1}{\epsilon^2} \log^2 (K/\eps) (\alpha + m\beta / T))$ for some constant $c_2$.
    \item $(iii)$ $p(e_i) + \beta \ge c_1 f(i) /\log(K/ \epsilon)$ for all $i$.
\end{itemize}

For $(i)$, recall that $f(i) \sim T / u,$ where $u \sim \mathrm{Exp}(1)$. Hence, we have 
\[
\mathbf{Pr}[f(i) < c_1 T / \log(1 / \epsilon)] = e^{-\log(1/\epsilon) / c_1} \le\frac{1}{100 c}\frac{1}{\epsilon^2} \; .
\]
Taking an union bound we get that with probability at least $99/100$, $(i)$ will happen. For each edge $e$, we have 
\[
\mathbf{Pr} [p(e) + \beta < R_e / \lambda] < K e^{-\lambda},
\]
similarly we can get that with probability at least $99/100$, $(iii)$ will happen.

For $(ii)$, we first bound the expectation of $s_i$. For each edge $e$, we have 
\begin{align*}
\mathbf{Pr} \left[(p(e) + \beta) / u_e^i \ge c_1^2 T / \log^2(K/\eps)\right]&= 
\mathbf{Pr} [u_e \le c_1^2 (p(e) + \beta) \log^2(K/\eps) / T] \\
& \le c_1^2(p(e) + \beta) \log^2(K/\eps) / T \\& \le c_1^2(\alpha R_e + 2\beta) \log^2(K/\eps) / T
\; .
\end{align*}
Hence, we have 
\[
\mathbb{E}[s_i] = \sum_e c_1^2\frac{(\alpha R_e + 2\beta) \log^2(K/\eps)}{T} = O(\log^2(K/\eps)(\alpha + m\beta / T)).
\]
Using Markov's inequality, we get that with probability at least $99/100$, $(ii)$ will happen.
Finally, taking a union bound, we can get with probability at least $9/10$, all the three events will happen.

Now, conditioned on the above three events, we will show that all the subroutine $i$ will output $f(i)$.
For the subroutine $i$, from $(ii)$ we know that $M$ will be always smaller than $O(T / \log^2(K / \epsilon))$, and from $(i)$ and $(iii)$ we get that $(p(e_i)+ \beta) / u_{e_i}^i \ge \Omega(T / \log^2(K / \epsilon))$. Hence, the edge $e_i$ will be saved in the subroutine $i$. From Lemma~\ref{lem:exp_median} we get that with probability at least $7/10$, we can finally get a $(1 \pm \epsilon)$-approximation of $T$. 
\end{proof}
We note that when $\beta = 0$, the space complexity will become $O(\frac{1}{\epsilon^2} \log^2 (K/\eps) \alpha )$, and in this case we will not need a lower bound of $T$.

We will also consider the following noise model.

\begin{definition}
A value-prediction oracle with parameter $(\alpha,\beta)$ is an oracle that, for any edge $e$,
outputs a value $p(e)$ for which $\mathbf{Pr}[p(e) > \lambda \alpha R_e + \beta] \le \frac{K}{\lambda}$ , and $\mathbf{Pr}[p(e) < \frac{R_e}{\lambda} - \beta] \le \frac{K}{\lambda}$ for some constant $K$ and any $\lambda \ge 1$.
\end{definition}

The algorithm for this oracle is shown in Algorithm~\ref{alg:value_prediction:2}. 

\begin{algorithm}[!h]
   \caption{Counting Triangles in the Adjacency List Model with a Value Prediction Oracle}
   \label{alg:value_prediction:2}
\begin{algorithmic}[1]
   \STATE {\bfseries Input:} Adjacency list edge stream and an value prediction oracle with parameter$(\alpha, \beta)$. 
   \STATE {\bfseries Output:} An estimate of the number $T$ of triangles.
   \STATE Initialize $S_i^j \leftarrow \emptyset$ and $A_i^j \leftarrow 0$ where $i = O(\log n)$ and $j = O(\log \log n)$. $p_0 = c \epsilon^{-2} 2^i \beta / T$ and $p_i = c \epsilon^{-2} 2^i \beta (\log n)^2/ T$ for some constant $c$. Define $I_0 = [0, 2\beta)$ and $I_i = [2^i\beta, 2^{i + 1} \beta)$, $H \leftarrow 0$.

    \WHILE{seeing edges adjacent to $y$ in the stream} 
    \STATE For all $ab \in S_i^j$: if $a, b \in N(y)$ then $C_{ab} \leftarrow C_{ab} + 1$\; 
    \FOR{each incident edge $yx$}
            \IF{$xy \in S_i^j$} 
            \STATE $A_i^j \leftarrow A_i^j + C_{yx} $ \;.
            \ELSE
            \STATE Let $\hat{R}_{yx}$ be the predicted value of $R_{yx}$.
            \STATE Search $i$ such that $(\hat{R}_{yx} + \beta) \in I_i$
            \STATE For each $j$ let $S_i^j \leftarrow S_i^j \cup \{yx\}$ with probability $p_i$.
            \ENDIF
    \ENDFOR
    \ENDWHILE
    \FOR{$i = 0$ {\bf to} $O(\log n)$}
        \STATE $H \leftarrow H + \mathrm{median}(A_i^j) / p_i$
   \ENDFOR
    \STATE {\bfseries return:} $H$ \; .
\end{algorithmic}
\end{algorithm}

 We now state our theorem.

\begin{restatable}{thm}{thmAdjValueTwo}
\label{thm:value_oracle:2}
Given an oracle with parameter $(\alpha, \beta)$, there exists a one-pass algorithm, Algorithm~\ref{alg:value_prediction:2}, with space complexity $O(\frac{K}{\epsilon^2} \left(\alpha(\log n)^3 \log \log n  + m\beta / T\right))$ in the adjacency list model that returns a $(1 \pm \sqrt{K} \cdot \epsilon)$-approximation to $T$ with probability at least $7 / 10$. 
\end{restatable}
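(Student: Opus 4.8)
The plan is to adapt the analysis behind Theorem~\ref{theorem_adjacency} and Theorem~\ref{thm:imperfect_adj} to the dyadic bucketing used in Algorithm~\ref{alg:value_prediction:2}: edges are partitioned by their \emph{predicted} value into the $O(\log n)$ level sets $\mathcal{L}_i=\{e: \hat R_e+\beta\in I_i\}$, level $i$ is sampled at rate $p_i$, and for each sampled edge the \emph{true} value $R_e$ is computed exactly (possible in the adjacency list model by counting later vertices adjacent to both endpoints of the edge). Thus, conditioned on the random partition $\mathcal{L}=(\mathcal{L}_i)_i$, each $A_i^j/p_i$ is unbiased for $T_i:=\sum_{e\in\mathcal{L}_i}R_e$, and $\sum_i T_i=T$ holds deterministically; the returned value is $H=\sum_i \mathrm{median}_j(A_i^j)/p_i$. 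It therefore suffices to show (a) the per-level estimates concentrate around $T_i$ with errors summing to $O(\sqrt K\,\eps T)$, and (b) the expected space is $O(\tfrac{K}{\eps^2}(\alpha(\log n)^3\log\log n+m\beta/T))$, after which a Markov step makes (b) hold with high probability. Since $\mathcal{L}$ is itself determined by the oracle's randomness, the argument conditions on $\mathcal{L}$ and disposes of it by one Markov step at the end.

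For accuracy, fix $\mathcal{L}$; within a level the copies are independent, so (as in Lemma~\ref{lem:adjlist_light}) $\mathbf{Var}[A_i^j/p_i\mid \mathcal{L}]\le p_i^{-1}\sum_{e\in\mathcal{L}_i}R_e^2$. The key estimate is $\mathbb{E}_{\mathcal{L}}[\sum_{e\in\mathcal{L}_i}R_e^2]=O(K\,2^i\beta\,T)$, obtained edge-by-edge: if $R_e\le 4\cdot 2^i\beta$ then $R_e^2\le 4\cdot 2^i\beta R_e$; otherwise $e\in\mathcal{L}_i$ implies $p(e)+\beta<2^{i+1}\beta=R_e/\lambda$ with $\lambda=R_e/(2^{i+1}\beta)>2$, so the lower-tail hypothesis gives $\Pr[e\in\mathcal{L}_i]\le K/\lambda=O(K\,2^i\beta/R_e)$ and the contribution is $O(K\,2^i\beta R_e)$; summing and using $\sum_e R_e=T$ finishes it. With the chosen $p_i$ this gives $\mathbb{E}_{\mathcal{L}}[\mathbf{Var}[A_i^j/p_i\mid\mathcal{L}]]=O(K\eps^2T^2/(\log n)^2)$ for $i\ge 1$ and $O(K\eps^2T^2)$ for $i=0$. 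A Markov bound over $\mathcal{L}$ yields, with probability $\ge 9/10$, conditional variances $V_i$ with $V_0=O(K\eps^2T^2)$ and $\sum_{i\ge1}V_i=O(K\eps^2T^2/\log n)$; conditioning on this, Chebyshev plus the median-of-$O(\log\log n)$ trick makes $|\mathrm{median}_j(A_i^j)/p_i-T_i|\le\sqrt{3V_i}$ hold for all levels with probability $1-1/\mathrm{poly}(\log n)$. Hence $|H-T|\le\sqrt{3V_0}+\sum_{i\ge1}\sqrt{3V_i}\le\sqrt{3V_0}+\sqrt{3\log n}\,\sqrt{\sum_{i\ge1}V_i}=O(\sqrt K\,\eps T)$ by Cauchy--Schwarz, which is $\le\sqrt K\,\eps T$ after taking the constant $c$ large, so the estimate is a $(1\pm\sqrt K\eps)$-approximation with probability $\ge 7/10$.

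For the space, I bound $\mathbb{E}[|\mathcal{L}_i|]$ using the \emph{upper}-tail hypothesis: an edge with $R_e\le 2^{i-1}\beta/\alpha$ lands in level $i\ge1$ only if $p(e)>(2^i-1)\beta\ge\lambda\alpha R_e+\beta$ for some $\lambda=\Theta(2^i\beta/(\alpha R_e))\ge1$, which has probability $\le K/\lambda=O(K\alpha R_e/(2^i\beta))$; edges with $R_e>2^{i-1}\beta/\alpha$ number at most $O(\alpha T/(2^i\beta))$ since $\sum_e R_e=T$. Hence $\mathbb{E}[|\mathcal{L}_i|]=O(K\alpha T/(2^i\beta))$ for $i\ge1$, so $p_i\,\mathbb{E}[|\mathcal{L}_i|]=O(K\alpha\eps^{-2}(\log n)^2)$ per level; summing over the $O(\log n)$ levels and $O(\log\log n)$ copies gives the $O(K\alpha\eps^{-2}(\log n)^3\log\log n)$ term, while level $0$, stored at rate $p_0=O(\eps^{-2}\beta/T)$ over at most $m$ edges, contributes the additive $O(\eps^{-2}m\beta/T)$ term. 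A Markov bound converts expected space to a high-probability bound, completing the proof.

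I expect the main obstacle to be orchestrating the two one-sided tail hypotheses against the dyadic structure: the lower-tail bound is exactly what controls the within-level variance when the oracle badly underestimates some $R_e$, while the upper-tail bound is what keeps each $|\mathcal{L}_i|$ (hence the space) small when the oracle overestimates, and matching both to the prescribed rates $p_i$ is where the extra logarithmic factors relative to Theorem~\ref{thm:value_oracle:1} are spent. A secondary nuisance is the nested conditioning on the random partition $\mathcal{L}$, which I resolve by a single Markov step before invoking Chebyshev/median concentration inside each level.
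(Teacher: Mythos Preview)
Your proposal is correct and follows essentially the same route as the paper's proof: dyadic bucketing by the (shifted) predicted value, bounding the per-level variance via the oracle's lower-tail assumption (splitting edges according to whether $R_e$ is comparable to the bucket width $2^{i}\beta$), Chebyshev plus median-of-$O(\log\log n)$ per level, and bounding $\mathbb{E}[|\mathcal{L}_i|]$ via the oracle's upper-tail assumption combined with a counting argument for the large-$R_e$ edges. The only real stylistic difference is that the paper applies the law of total variance directly to get an unconditional per-level variance bound of $O(K(\eps/\log n)^2 T^2)$ and then sums the uniform per-level errors over $O(\log n)$ levels, whereas you condition on the random partition, push a single Markov step through $\sum_i V_i$, and recombine with Cauchy--Schwarz; both routes yield the same $O(\sqrt{K}\,\eps T)$ total error and the same space accounting.
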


\begin{proof}
Define $q(e) = p(e) + \beta$, it is easy to see that from the condition we can get that
\[
\mathbf{Pr}[q(e) > 2\lambda \alpha R_e] \le K \frac{1}{\lambda} \ \  \mathrm{when} \ \ R_e > 2 \beta,  \mathbf{Pr}\left[q(e) < \frac{R_e}{\lambda}\right] \le K \frac{1}{\lambda}
\]

We define the set $E_i$ such that $e \in E_i$ if and only if $q(e) \in I_i$. We can see $\mathbb{E}[A_i / p_i] = \sum_{e \in E_i} R_e$. Now, we bound $\mathbf{Var} [A_i / p_i]$ when $i \ge 1$.

Let $X = \sum_{e \in E_i, R_e \le 2^{i + 1}\beta} R_e$ and $Y =  \sum_{e \in E_i, R_e > 2^{i + 1} \beta} R_e$, then we have $\mathbf{Var}[A_i] \le 2(\mathbf{Var}[X] + \mathbf{Var}[Y])$.

Similar to the proof in Lemma~\ref{lem:adjlist_light}, we can get that 
\[
\mathbf{Var}[X] \le p_i \sum_{e \in E_i, R_e \le 2^{i + 1} \beta} R_e^2 \le p_i (2^{i + 1} \beta)^2 \frac{T}{2^{i + 1} \beta} = p_i 2^{i + 1} \beta T.
\]
For $\mathbf{Var}[Y]$, recall that for an edge $e$ such that $R_e \ge 2^{i + 1} \beta$, we have $\mathbf{Pr} [e \in E_i] \le K \frac{q(e)}{R_e} \le K \frac{2^{i + 1} \beta}{R_e}$. Then, condition on the oracle $O_\theta$, we have
\[
\mathbb{E}[\mathbf{Var}[Y | O_\theta]] \le p_i\sum_{R_e \ge 2^{i + 1}\beta} K \frac{2^{i + 1} \beta}{R_e} R_e^2 = p_i K 2^{i + 1} \beta \sum_{R_e \ge 2^{i + 1}\beta} R_e = p_i K 2^{i + 1} \beta T.
\]
And
\[
\mathbf{Var}[\mathbb{E}[Y | O_\theta]] < p_i \sum_{R_e \ge 2^{i + 1}\beta} K \frac{2^{i + 1} \beta}{R_e} R_e^2 = p_i K 2^{i + 1} \beta T.
\]
From $\mathbf{Var}[Y] = \mathbb{E}[\mathbf{Var}[Y | O_\theta]] + \mathbf{Var}[\mathbb{E}[Y | O_\theta]] $ we get that $\mathbf{Var}[Y] \le  2p_i K 2^{i + 1} \beta T$, from which we can get that $\mathbf{Var}[A_i / p_i] = \frac{1}{p_i^2} \mathbf{Var}[A_i] = O(K(\epsilon / \log n)^2 T^2)$. Using Chebyshev’s inequality and taking a median over $O(\log \log n)$ independent trials, we can get $X_i = \mathrm{median}(A_i^j)$ satisfies $|X_i - \sum_{e \in E_i} R_e| \le \sqrt{K}(\epsilon / \log n) T$ with probability $1 - O(1 / \log n)$. A similar argument also holds for $I_0$, which means that after taking a union bound, with probability $9 / 10$, the output of the algorithm~\ref{alg:value_prediction:2} is a $(1 \pm \sqrt{K} \cdot \epsilon)$- approximation of $T$.

Now we analyze the space complexity of Algorithm~\ref{alg:value_prediction:2}.

For $i \ge 1$, we have 
\begin{align*}
\mathbb{E}\left[|E_i|\right] & = \mathbb{E}\left[\sum_{R_e \le 2^{i - 1}\beta / \alpha} [e \in E_i] + \sum_{R_e > 2^{i - 1}\beta / \alpha} [e \in E_i] \right] 
\le \mathbb{E}\left[\sum_{R_e \le 2^{i - 1}\beta / \alpha} [e \in E_i]  \right] + \frac{T \alpha}{2^{i - 1}\beta} \\
& \le \sum_{R_e \le 2^{i - 1}\beta / \alpha} 2K \frac{R_e}{p(e)} + \frac{T \alpha}{2^{i - 1}\beta} \le \sum_{R_e \le 2^{i - 1}\beta / \alpha} K\frac{R_e}{2^{i - 1} \beta} + \frac{T \alpha}{2^{i - 1}\beta} \le (K +1) \frac{T\alpha}{2^{i - 1}\beta},
\end{align*}
and $|E_0| \le m$, hence we have that the total expectation of the space is $\sum_i p_i \mathbb{E}[|E_i|]$ = $O(\frac{K}{\epsilon^2} \left(\alpha(\log n)^3 \log \log n + m\beta / T\right))$.
\end{proof}

\section{Omitted Proofs from Section~\ref{sec:arbitrary_main}}\label{sec:arbitrary_missing}

We first present Algorithm~\ref{alg:arb_tri_perfect}, and then continue to prove Theorem~\ref{theorem_arbitrary}. In the following algorithm, for two sets $A, B$ of edges, we let $w_{A, B}$ represent the set of pairs of edges $(u, v)$, where $u \in A, v \in B$, and $u, v$ share a common vertex.

\begin{algorithm}[H]
    \centering
  \caption{Counting Triangles in the Arbitrary Order Model}
  \label{alg:arb_tri_perfect}
\begin{algorithmic}[1]
   \STATE {\bfseries Input:} Arbitrary order edge stream and an oracle $\mO$ that outputs \textsc{heavy} if $N_{xy} > \rho$ and \textsc{light} otherwise.
   \STATE {\bfseries Output:} An estimate of the triangle count $T$.
   \STATE Initialize $\rho=\max\big\{\frac{\eps T}{\sqrt{m}}, 1\big\}$, $p=C\max\big\{\frac{1}{\eps\sqrt{T}},\frac{\rho}{\eps^2\cdot T}\big\}$ for a constant $C$.
   \STATE Initialize $\ell_1, \ell_2, \ell_3 = 0$, and $\mS, \mH = \emptyset$.
   \FOR{every arriving edge $e$ in the stream}
        \IF{the oracle $\mO$ outputs \textsc{heavy}}
            \STATE Add $e$ to $\mH$
        \ELSE
            \STATE Add $e$ to $\mS$ with probability $p$.
        \ENDIF
        \FOR{each $w \in w_{\mS, \mS}$}
            \IF{$(e, w)$ is a triangle}
                \STATE Increment $\ell_{1}=\ell_{1}+1$.
            \ENDIF
        \ENDFOR
        \FOR{each $w \in w_{\mS, \mH}$}
            \IF{$(e, w)$ is a triangle}
                \STATE Increment $\ell_{2}=\ell_{2}+1$.
            \ENDIF
        \ENDFOR
        \FOR{each $w \in w_{\mH, \mH}$}
            \IF{$(e, w)$ is a triangle}
                \STATE Increment $\ell_{3}=\ell_{3}+1$.
            \ENDIF
        \ENDFOR
   \ENDFOR
   \STATE \textbf{return:} $\ell = \ell_1/p^2+ \ell_2/p+\ell_3$.
\end{algorithmic}
\end{algorithm}

\subsection{Proof of  Theorem~\ref{theorem_arbitrary}}\label{subsec:proof_arbitrary}

 Recall that we first assume that  Algorithm~\ref{alg:arb_tri_perfect} is given access to a perfect  oracle $\mO$.
That is, for every edge $e\in E$,
\[
\mO(e) =
\begin{cases}
\textsc{light} & \text{if } T_e \leq \rho\\
\textsc{heavy} & \text{if } T_e > \rho,
\end{cases}
\]
where $T_e$ as the number of triangles incident to the edge $e$.

\begin{proof}[Theorem~\ref{theorem_arbitrary}]
First, we assume that $T \ge \eps^{-2}$. If not, our algorithm can just store all of the edges, since $m \le \eps^{-1} \cdot m/\sqrt{T}.$

For each integer $i \ge 1,$ let $\mathcal{E}(i)$ denote the set of edges that are part of exactly $i$ triangles, and let $E_i = |\mathcal{E}(i)|$. Since each triangle has $3$ edges, we have that $\sum_{i \ge 1} i \cdot E(i) = 3T.$

Let $\mT_{1}$ denote the set of triangles 
whose first two edges in the stream are light (according to $\mO$).
For every triangle $t_i$ in $\mT_1$, let $\chi_i$ denote the indicator of the event that the first two edges of $t_i$ are sampled to $\mS$, and let $\ell_1=\sum_i \chi_i$.
Since each $\chi_i$ is $1$ with probability $p^2$,
$\EX[\chi_i]=p^2$, and  $\Var[\chi_i] = p^2-p^4 \leq p^2$. 
For any two  variables $\chi_i, \chi_j,$ they must be uncorrelated unless the triangles $t_i, t_j$ share a light edge that is among the first two edges of $t_i$ and among the first two edges of $t_j$. Moreover, in this case, $\Cov[\chi_i, \chi_j] \le \EX[\chi_i \chi_j] = \mathbb{P}(\chi_i = \chi_j = 1).$ This event only happens if the first two edges of both $t_i$ and $t_j$ are sampled in $\mS$, but due to the shared edge, this comprises $3$ edges in total, so $\mathbb{P}(\chi_i = \chi_j = 1) = p^3$.
Hence, if we define $t_i^{12}$ as the first two edges of $t_i$ in the stream for each triangle $t_i$, $$\EX[\ell_1] = p^2|\mT_1|,$$ and 
\begin{align*}
    \Var[\ell_1] 
    & \le\sum_{t_i\in \mT_1}p^2+ \sum_{\substack{t_i, t_j \in \mT_1\\ t_i^{12} \cap t_j^{12}\neq \emptyset}} p^3
    \;\le\; p^2 |\mT_1| + p^3 \sum_{T_e \le \rho} T_e^2 \\
    &= p^2 |\mT_1| + p^3 \cdot \sum_{t = 1}^{\rho} t^2 \cdot E(t) 
    \;\leq\; (p^2 + 3 p^3 \cdot \rho) \cdot T \;\;.
\end{align*}
    The first line follows by adding the covariance terms; the second line follows since each light edge $e$ has at most $T_e^2$ pairs $(t_i, t_j)$ intersecting at $e$; the third line follows by summing over $t = T_e$, ranging from $1$ to $\rho$, instead over $e$; and the last line follows since $\sum_{t \ge 1} t \cdot E(t) = 3T$.

Now, let $\mT_2$ denote the set of triangles whose first two edges in the stream are light and heavy (in either order). 
For every triangle $t=(e_1, e_2, e_3)$ in $\mT_2$, $e_1, e_2$ are sampled to $\mS\cup \mH$ with probability $p$. Also all other triangles have no chance to contribute to $\ell_2$.
Hence, $\EX[\ell_2]=p\cdot \mT_2$.
Also, two triangles $t_i, t_j\in \mT_2$ that intersect on an edge $e$ have $\Cov(\chi_i, \chi_j) \neq 0$ only if $e$ is light and $e \in t_i^{12}, t_j^{12}$. 
In this case. \[\Cov(\chi_i, \chi_j) \le \EX[\chi_i \chi_j] \le p,\] since if $e$ is added to $\mS$ (which happens with probability $p$) then $\chi_i=\chi_j=1$ as the other edges in $t_i^{12}, t_j^{12}$ are heavy and are added to $\mH$ with probability 1.
Therefore,  
\[\Var[\ell_2] = \sum_{t_i\in \mT_2}p + \sum_{\substack{t_i, t_j \in \mT_2 \\ t_i^{12} \cap t_j^{12} =e, \; \mO(e)=\textsc{light}}} p \le (p + 3p \cdot \rho) \cdot T,\]
by the same calculations as was done to compute $\Var[\ell_1].$

Finally, let $\mT_3$ denote the set of triangles whose first two edges in the stream are heavy. 
Then $\ell_3=|\mT_3|$.

Now, using the well known fact that $\Var[X+Y] \le 2(\Var[X]+\Var[Y])$ for any (possibly correlated) random variables $X, Y,$ we have that 
\begin{align*}
    \Var[p^{-2} \ell_1 + p^{-1} \ell_2 + \ell_3]
    &\le  2 \left(p^{-4} (p^2 + 3p^3 \rho) T + p^{-2} (p + 3p \rho) T\right) \\
    & \le  4T (p^{-2} + 3 p^{-1} \rho),
\end{align*} 
since $\ell_3$ has no variance. Moreover, $\EX[p^{-2} \ell_1 + p^{-1} \ell_2 + \ell_3] = |\mT_1| + |\mT_2|+|\mT_3| = T$. So, by Chebyshev's inequality, since $\ell = \ell_1/p^2 + \ell_2/p + \ell_3,$
\[
\Pr[|\ell-T|>\eps T] < \frac{\Var[\ell]}{\eps^2 \cdot T^2} < \frac{4(p^{-2} + 3p^{-1} \rho)}{\eps^2 \cdot T}\;.\]
Therefore, setting $p=C \cdot \max\left\{1/(\eps \cdot \sqrt{T}), \rho/(\eps^2 \cdot T)\right\}$ for some fixed constant $C$ implies that, with probability at least $2/3$, $\ell$ is a $(1\pm\eps)$-multiplicative estimate of $T$.

Furthermore, the expected space complexity is $O(mp+ \mH )=O( mp+ T/\rho)$.
Setting $\rho=\max\{\eps T/\sqrt{m}, 1\}$ implies that the space complexity is $O(\eps^{-1} m/\sqrt{T} + \eps^{-2} m/T + \eps^{-1} \sqrt{m} \cdot T/T) = O(\eps^{-1} m/\sqrt{T} + \eps^{-1} \sqrt{m} \cdot T/T)$, since we are assuming that $T \ge \eps^{-2}$. Hence, assuming that $T$ is a constant factor approximation of $T$, the space complexity is $O(\eps^{-1} m/\sqrt{T} + \eps^{-1} \sqrt{m}).$
\end{proof}

\subsection{Proof of Theorem~\ref{theorem_arbitrary}  for the K-noisy oracle}\label{subsec:proof_noise_arbitrary}
In this section, we prove Theorem~\ref{theorem_arbitrary} for the case that the given oracle is a $K$-noisy oracle, as defined in Definition~\ref{dfn:noisy-oracle}.  
That is, we prove the following:
\begin{theorem}
\label{thm:imperfect_arbitrary}
Suppose that the  oracle in Algorithm~\ref{alg:arb_tri_perfect} is a $K$-noisy oracle as defined in Definition~\ref{dfn:noisy-oracle} for a fixed constant $K$.
  Then with probability $2/3$, Algorithm~\ref{alg:arb_tri_perfect} returns $\ell \in (1\pm\eps)T$, and uses space at most $O\left(\eps^{-1}\left(m/\sqrt{T}+ \sqrt{m}\right)\right)$.
\end{theorem}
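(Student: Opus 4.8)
The plan is to run the perfect-oracle argument of Theorem~\ref{theorem_arbitrary} (Appendix~\ref{subsec:proof_arbitrary}) \emph{conditioned on} the answers of the $K$-noisy oracle, and then average over those answers. The noise enters only through a bound on $\sum_e T_e^2$ over the edges the oracle \emph{calls} light (plus a matching bound on the size of the heavy-labelled set for the space analysis), and the key point is that conditioning decouples the oracle's randomness from the $\mathrm{Bernoulli}(p)$ coins used to form $\mS$, so essentially nothing else in the original proof changes. As in that proof I would first dispose of small $T$: if $T\le C_0\eps^{-2}$ for a large enough constant $C_0=C_0(K)$, the algorithm stores all $m$ edges, using $O(\eps^{-1}m/\sqrt T)$ space, so from now on assume $T\ge C_0\eps^{-2}$.

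Let $\mO$ denote the collection of oracle answers $\{\mO(e)\}_{e\in E}$; these are independent of the $\mathrm{Bernoulli}(p)$ coins forming $\mS$, so conditioned on $\mO$ the algorithm is precisely the perfect-oracle algorithm with ``light'' reinterpreted as ``oracle-light''. Running the bookkeeping of Appendix~\ref{subsec:proof_arbitrary} verbatim --- unbiasedness of $\ell_1/p^2,\ \ell_2/p,\ \ell_3$, together with the covariance estimates for $\ell_1$ and $\ell_2$, which are unaffected since an oracle-heavy edge lies in $\mH$ deterministically and oracle-light edges are sampled independently --- yields $\EX[\ell\mid\mO]=|\mT_1|+|\mT_2|+|\mT_3|=T$ and
\[
\Var[\ell\mid\mO]\ \le\ O\!\left(p^{-2}T\ +\ p^{-1}\!\!\!\sum_{e\,:\,\mO(e)=\textsc{light}}\!\!\! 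T_e^2\right).
\]
Since $\EX[\ell\mid\mO]=T$ for every realization of $\mO$, the law of total variance gives $\EX[\ell]=T$ and $\Var[\ell]=\EX\!\big[\Var[\ell\mid\mO]\big]$; note that by linearity only the marginals $\Pr[\mO(e)=\textsc{light}]$ enter below, so no independence of the oracle's errors across edges is needed.

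It then remains to bound $\EX\big[\sum_{e:\mO(e)=\textsc{light}}T_e^2\big]=\sum_e\Pr[\mO(e)=\textsc{light}]\,T_e^2$. The $K$-noisy property gives $\Pr[\mO(e)=\textsc{light}]\le\min\{1,\,K\rho/T_e\}$, so $\Pr[\mO(e)=\textsc{light}]\,T_e^2\le K\rho T_e$ whether $T_e\le\rho$ or $T_e>\rho$; summing and using $\sum_e T_e=3T$ gives $3K\rho T$. Hence $\Var[\ell]=O\big(p^{-2}T+K\rho\,p^{-1}T\big)$, and Chebyshev gives $\Pr[|\ell-T|>\eps T]\le O\big((p^{-2}+K\rho p^{-1})/(\eps^2 T)\big)$, which is at most $1/10$ once the constant $C$ in $p=C\max\{1/(\eps\sqrt T),\,\rho/(\eps^2 T)\}$ is taken large enough in terms of $K$. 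For the space, the expected storage is $O(mp+|\mH|)$, and the other side of the $K$-noisy inequality gives $\EX[|\mH|]=\sum_e\Pr[\mO(e)=\textsc{heavy}]\le\sum_e KT_e/\rho=3KT/\rho$; substituting $\rho=\max\{\eps T/\sqrt m,1\}$ and this $p$ into $O(mp+KT/\rho)$ reproduces the arithmetic of Appendix~\ref{subsec:proof_arbitrary} up to the constant factor $K$, giving expected space $O(K\eps^{-1}(m/\sqrt T+\sqrt m))=O(\eps^{-1}(m/\sqrt T+\sqrt m))$ since $K=O(1)$. A Markov bound turns this into an $O(1)$-factor high-probability space bound, and a union bound with the Chebyshev event yields overall success probability $2/3$ after adjusting constants. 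I expect the only point requiring care --- and it is routine rather than deep --- is re-checking that the covariance computations of Appendix~\ref{subsec:proof_arbitrary} really are unaffected by conditioning on $\mO$; everything else is the original proof with an extra factor of $K$.
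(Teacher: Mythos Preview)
Your proposal is correct and follows essentially the same route as the paper: condition on the oracle's answers $\mO$, observe that $\EX[\ell\mid\mO]=T$ deterministically so the law of total variance reduces to $\Var[\ell]=\EX_{\mO}[\Var[\ell\mid\mO]]$, and then bound $\EX_{\mO}\big[\sum_{e\,:\,\mO(e)=\textsc{light}}T_e^2\big]\le 3K\rho T$ using the $K$-noisy marginal $\Pr[\mO(e)=\textsc{light}]\le K\rho/T_e$; the space bound likewise comes from $\Pr[\mO(e)=\textsc{heavy}]\le KT_e/\rho$. The paper carries this out via the stratified notation $L(t)$ (light edges in exactly $t$ triangles) and the bound $\EX[L(t)]\le E(t)\cdot K\rho/t$, whereas you sum over edges directly; these are the same computation, and your observation that only marginals of $\mO$ are needed (no independence across edges) is a clean point the paper leaves implicit.
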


Recall that in Definition~\ref{dfn:noisy-oracle}, we defined a $K$-noisy oracle if the following holds.
For every edge $e$, $1 - K \cdot \frac{\rho}{N_e} \le \Pr[\mO(e)=\textsc{heavy}] \le K \cdot \frac{N_e}{\rho}$.
We first visualize this model. To visualize this error model, in Figure \ref{fig:heavy_prob} we have plotted $N_e$ versus the range $\left[1 - K \cdot \frac{\rho}{N_e}, K \cdot \frac{N_e}{\rho}\right]$ for $K = 2$.
We set $N_e$ to vary on a logarithmic scale for clarity. For example, if $N_e$ exceeds the threshold $\rho$ by a factor of $2$, there is no restriction on the oracle output, whereas if $N_e$ exceeds $\rho$ by a factor of $4$, then the oracle must classify the edge as heavy with probability at least $0.5$. In contrast, the blue piece-wise line shows the probability $\Pr[\mO(e) = \textsc{heavy}]$ if the oracle $\mO$ is a perfect oracle.

\begin{figure}[!ht]
    \centering
    \includegraphics[width=0.5\linewidth]{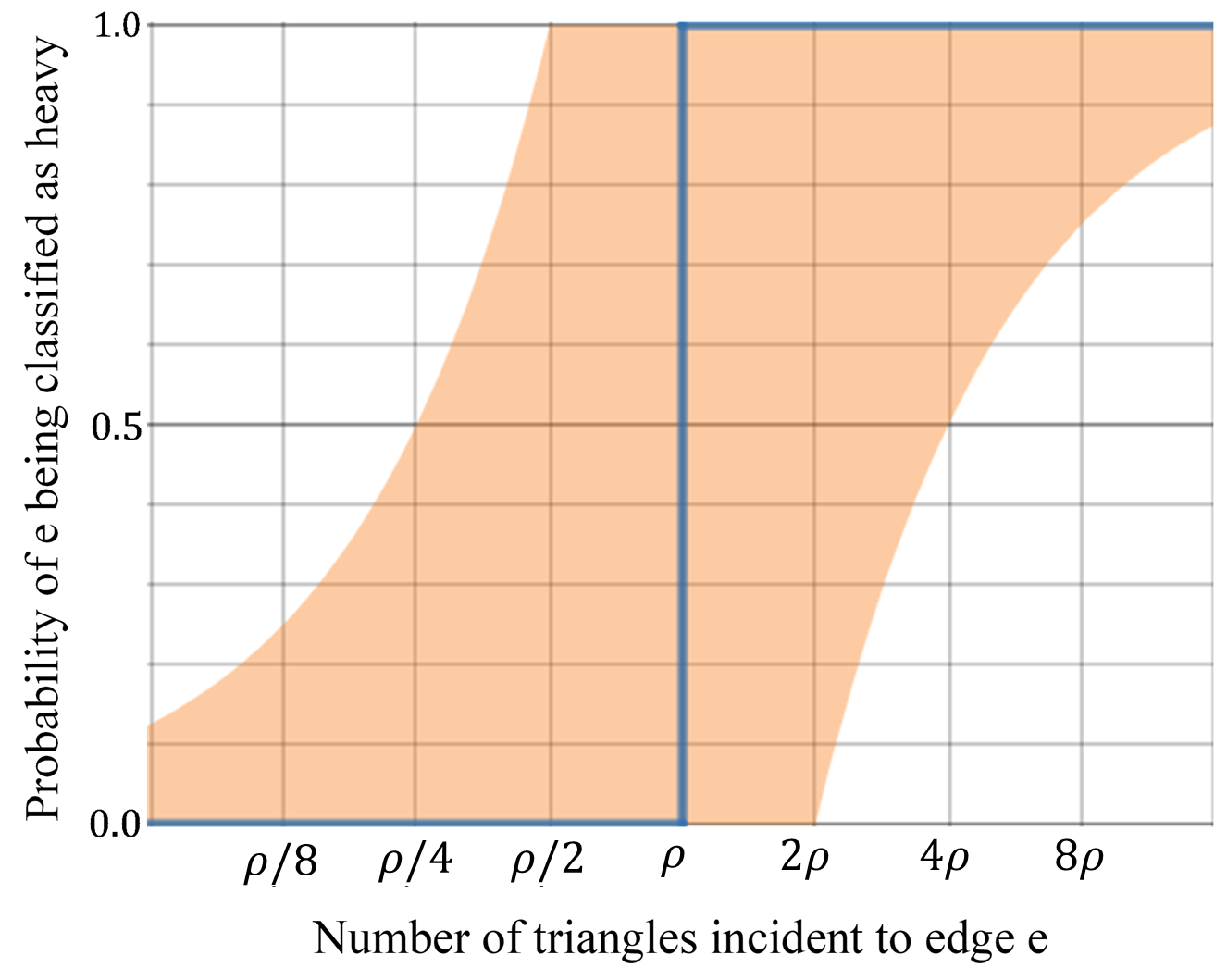}
    \caption{Plot of $N_e$, the number of triangles containing edge $e$, versus the allowed oracle probability range $\Pr[\mO(e) = \textsc{heavy}]$, shaded in orange. The blue piece-wise line shows the probability $\Pr[\mO(e) = \textsc{heavy}]$ if the oracle $\mO$ is a perfect oracle.}
    \label{fig:heavy_prob}
\end{figure}

 \begin{proof}[Proof of Theorem~\ref{thm:imperfect_arbitrary}]
 We follow the proof of Theorem 4.1 from the main paper.
Let $\mT_1$ be the set of triangles  such that their first two edges in the stream are \light\ according to the oracle. Let $\mT_2$ be the set of triangles such that their first two edges in the stream are determined \light\ and \heavy\ according to the oracle. Finally, let $\mT_3$ be the set of triangles for which their first two edges are determined \heavy\ by the oracle.
Furthermore, we define $\chi_i$ for each triangle $t_i$, $t(e)$ for each edge $e$, and $\mathcal{E}(i), E_i$ for each $i \ge 1$ as done in Theorem 4.1. Finally, we define $\mathcal{L}(i)$ as the set of deemed \light\ edges that are part of exactly $i$ triangles, and $L(i) = |\mathcal{L}(i)|.$

First, note that the expectation (over the randomness of $\mO$) of $L(i)$ is at most $E(i) \cdot K \cdot \frac{\rho}{i}$, since our assumption on $\Pr[\mO(e) = \heavy]$ tells us that $\Pr[\mO(e) = \light] \le K \cdot \frac{\rho}{i}$ if $t(e) = i$. Therefore, by the same computation as in Theorem 4.1, we have that, conditioning on the oracle, $\EX[\ell_1|\mO] = p^2|\mT_1]$ and 
\begin{align*}
    \Var[\ell_1|\mO] &=\sum_{t_i\in \mT_1}p^2+ \sum_{\substack{t_i, t_j \in \mT_1\\ t_i^{12} \cap t_j^{12}\neq \emptyset}} p^3 \\
    &\le p^2 |\mT_1| + p^3 \sum_{e \text{ } \light} t(e)^2\\
    &= p^2 |\mT_1| + p^3 \cdot \sum_{t \ge 1} t^2 \cdot L(t)
\end{align*}
But since $\EX_{\mO}[L(t)] \le \frac{K \cdot \rho}{t} \cdot E(t)$ for all $t$, we have that
\begin{align*}
    \EX_{\mO}[\Var[\ell_1|\mO]] &\le \EX_{\mO}\biggr[p^2 \cdot |\mT_1|+p^3 \cdot \sum_{t \ge 1} t^2 L(t)\biggr]\\
    &= p^2|\mT_1|+K \rho \cdot p^3 \sum_{t \ge 1} t \cdot E(t)\\
    &\le (p^2 + 3 K p^3 \cdot \rho) \cdot T.
\end{align*}

A similar analysis to the above shows that $\EX[\ell_2]=p\cdot |\mT_2|$ and that
\begin{align*}
    \EX_{\mO}[\Var[\ell_2|\mO]] &\le \EX_{\mO}\biggr[p \cdot |\mT_2|+p \cdot \sum_{t \ge 1} t^2 L(t) \biggr] \\
    &= p|\mT_2|+K \rho \cdot p \sum_{t \ge 1} t \cdot E(t)\\
    &\le (p + 3 K p \cdot \rho) \cdot T.
\end{align*}
Hence, as in Theorem 4.1, we have $\EX[\ell|\mO] = T$ and $\EX_{\mO}[\Var[\ell|\mO]] \le 4T \cdot (p^{-2} + 3 K p^{-1} \rho).$ Thus, $\EX[\ell] = \EX[\EX[\ell|\mO]] = T$ and $\Var[\ell] = \EX_{\mO}[\Var[\ell|\mO]] + \Var_{\mO}[\EX[\ell|\mO]] \le 4T \cdot (p^{-2} + 3 K p^{-1} \rho)$ by the laws of total expectation/variance. Therefore, since $K$ is a constant, the variance is the same as in Theorem 4.1, up to a constant. Therefore, we still have that $\ell \in (1 \pm O(\eps)) \cdot T$ with probability at least $2/3.$

Finally, the expected space complexity is bounded by 
\begin{align*}
mp+\sum_{e \in m} Pr[\mO(e)=\heavy] &\le mp + \sum_{t \ge 1} E(t) \cdot K \cdot \frac{t}{\rho} \\
&= mp + \frac{K}{\rho} \cdot \sum_{t \ge 1} E(t) \cdot t \\
&= O(mp+T/\rho),    
\end{align*}
since $\sum_{t \ge 1} E(t) \cdot t = 3T$ and $K = O(1)$.
Hence, setting  $\rho$ and $p$ as in the proof of Theorem 4.1 implies that the returned value is a $(1\pm\eps)$-approximation of $T$, and the the space complexity is $O(\eps^{-1}(m/\sqrt{T}+\sqrt m))$ as before.
\end{proof}

\subsection{Lower Bound}\label{sec:arbitrary_lower_bound}

In this section, we  prove a lower bound for algorithms in the arbitrary order model that have access to a perfect heavy edge oracle. Here heavy means $N_{uv} \ge T / c$ for a pre-determined threshold $c$, and we assume $c = o(m)$ and $T / c > 1$, as otherwise the threshold will be too small or too close to the average to give an accurate prediction. The following theorem shows an $\Omega(\min(m / \sqrt{T}, m^{3/2} / T))$ lower bound even with such an oracle.

\begin{theorem}
\label{thm:lower_bound_arbitrary}
Suppose that the threshold of the oracle $c = O(m^q)$, where $0 \le q < 1$. Then for any $T$ and $m$, $T \le m$, there exists $m^\prime = \Theta(m)$ and $T^\prime = \Theta(T)$ such that any one-pass arbitrary order streaming algorithm that distinguishes between $m^\prime$ edge graphs with $0$ and $T^\prime$ triangles with probability at least $2/3$ requires $\Omega(m / \sqrt{T})$ space. For any $T$ and $m$, $T = \Omega(m^{1 + \delta})$ where $\max(0, q - \frac{1}{2}) \le \delta < \frac{1}{2}$, there exists $m^\prime = \Theta(m)$ and $T^\prime = \Theta(T)$ such that any one-pass arbitrary order streaming algorithm that distinguishes between $m^\prime$ edge graphs with $0$ and $T^\prime$ triangles with probability at least $2/3$ requires $\Omega(m^{3/2} / T)$ space.
\end{theorem}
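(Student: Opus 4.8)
My plan is to prove both lower bounds by the standard reduction from two-party communication complexity to one-pass streaming, combined with a construction that neutralizes the heavy-edge oracle. The hard problem is (unique-intersection) set disjointness $\mathrm{DISJ}_N$ on a universe of size $N$, whose randomized communication complexity is $\Omega(N)$; I will take $N=\Theta(m/\sqrt T)$ in the first regime and $N=\Theta(m^{3/2}/T)$ in the second. From an instance $(x,y)$ Alice and Bob build a graph $G(x,y)$ with $m'=\Theta(m)$ edges, presented as a stream consisting of Alice's edges followed by Bob's, such that $G$ has $0$ triangles when $x,y$ are disjoint and $T'=\Theta(T)$ triangles when they intersect in their unique common coordinate. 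A one-pass, $s$-space algorithm that distinguishes $0$ from $T'$ triangles with probability $2/3$ then yields an $O(s)$-bit protocol for $\mathrm{DISJ}_N$ (Alice runs the algorithm on her edges and sends the memory state; Bob finishes the pass and reads off the estimate), so $s=\Omega(N)$, which is the claimed bound.

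The point requiring care — and the only genuinely new ingredient compared with oracle-free lower bounds — is that $G(x,y)$ must be built so that \emph{every} edge is incident to strictly fewer than $T'/c$ triangles. If so, then since every edge of a NO-instance lies in $0<T'/c$ triangles as well, a perfect heavy-edge oracle with threshold $T'/c$ outputs \textsf{light} on every edge of every instance, so both players can answer all oracle queries with no communication and the reduction goes through unchanged. The difficulty is that the $T'$ triangles of a YES instance must then be spread out: since $\sum_e N_e=3T'$, the triangle-bearing part of the graph uses $\Omega(c)$ edges, and because neither player knows the intersecting coordinate in advance, the per-coordinate ``activation gadgets'' must be laid out with heavily shared/overlapping pieces (a superimposed-code style design) so that the total edge count stays $\Theta(m)$ while each coordinate still receives a distinguishable, all-light signature.

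The parameter arithmetic for the second bound illustrates where the hypotheses enter. There the triangle-bearing region can be taken to be a clique-like graph on $\Theta(T^{1/3})$ vertices: it has $\Theta(T)$ triangles and $\Theta(T^{2/3})$ edges, and $T^{2/3}=O(m)$ precisely because $(1+\delta)\cdot\tfrac23\le 1$ when $\delta<\tfrac12$. Its maximum per-edge triangle count is $\Theta(T^{1/3})$, and $T^{1/3}<T'/c$ amounts to $c<T^{2/3}$, which is guaranteed (for large $m$) by $q\le \delta+\tfrac12$, i.e.\ by the hypothesis $\delta\ge \max(0,q-\tfrac12)$. The same inequality is what makes $m^{3/2}/T\le m/c$, so that $N=\Theta(m^{3/2}/T)$ cheap coordinate-activations fit into the $\Theta(m)$-edge budget. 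The first bound is analogous but more delicate: one uses a triangle-free scaffold of complete-bipartite pieces for the bulk of the edges, realizes each coordinate's $\Theta(T)$ triangles as many vertex-disjoint small complete-tripartite ``bricks'' so that the maximum per-edge count is $\Theta(T/m)<T'/c$, and tunes the brick sizes and the shared design so that $N=\Theta(m/\sqrt T)$ coordinates can be encoded; the constraints $T\le m$ and $T/c>1$ are what leave room.

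The main obstacle is exactly this gadget design: one must simultaneously achieve $m'=\Theta(m)$ edges, a clean $0$-versus-$\Theta(T)$ triangle gap (not merely a $(1\pm\eps)$ gap), and a per-edge triangle bound below $T'/c$, and then verify that the stated ranges of $q$ and $\delta$ are precisely those for which such a graph exists; by contrast the streaming-to-communication step and the $\mathrm{DISJ}_N$ lower bound are routine. A secondary technical check is that the scaffold is genuinely triangle-free and that no ``cross-gadget'' triangles appear, so that the NO-instance count is exactly $0$.
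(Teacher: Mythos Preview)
Your high-level framework is right: reduce from set disjointness $\mathrm{DISJ}_N$, and arrange that every edge of the hard instance lies in fewer than $T'/c$ triangles so that a perfect heavy-edge oracle answers \textsc{light} on every query and conveys no information. But you explicitly leave the central step---the gadget design---unresolved, calling it ``the main obstacle.'' As written, the proposal is a plan rather than a proof.

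The paper's route is considerably simpler than what you sketch, because it designs nothing new. It observes that two \emph{existing} oracle-free lower-bound constructions already have the all-light property, and simply invokes them.

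For $T\le m$ it uses the hubs graph $H_{r,d}$ of Kallaugher--Price: a single center $v$ with $2rd$ neighbors and a matching of size $d$ among those neighbors. Every edge lies in at most one triangle, so under the standing assumption $T/c>1$ every edge is light regardless of $c$, and the $\Omega(r\sqrt d)$ bound of Kallaugher--Price with $r=\Theta(m/T)$, $d=T$ gives $\Omega(m/\sqrt T)$ immediately. No ``superimposed-code'' sharing or tripartite bricks are needed.

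For $T=\Theta(m^{1+\delta})$ it uses the Bera--Chakrabarti construction: a fixed complete bipartite graph $K_{b,b}$ on parts $A,B$, plus $N$ blocks $V_1,\dots,V_N$ of $d$ vertices each; Alice adds all $A$--$V_i$ edges when $x_i=1$, Bob adds all $B$--$V_i$ edges when $y_i=1$. With $s=1/(1-2\delta)$, $b=N^s$, $d=N^{s-1}$ one gets $m=\Theta(N^{2s})$, $T=\Theta(N^{3s-1})$, and the per-edge triangle fraction is either $O(1/m)$ (edges in $K_{b,b}$) or $O(1/m^{1-1/(2s)})$ (edges to a block). The hypothesis $\delta\ge q-\tfrac12$ is exactly $q\le 1-1/(2s)$, which forces $c=O(m^q)$ below the inverse of the larger fraction, so again every edge is light and the $\Omega(N)=\Omega(m^{3/2}/T)$ bound carries over.

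Your ``clique on $\Theta(T^{1/3})$ vertices'' idea for the second regime gets the per-edge arithmetic right, but you never say how Alice's and Bob's bits are encoded into it, nor why the NO instance has zero (not merely fewer) triangles; a clique is not easily split between two players without leaking the intersection. The Bera--Chakrabarti graph solves this cleanly because the fixed bipartite core is triangle-free and triangles arise only at a block where both players have added edges. So the gap in your proposal is real, but the fix is not a harder construction---it is to recognize that off-the-shelf hard instances already suffice.
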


When $T \le m$, we consider the hubs graph mentioned in~\cite{KP17}.

\begin{definition}[Hubs Graph, \cite{KP17}]
The hubs graph $H_{r,d}$ consists of a single vertex $v$ with $2rd$ incident edges, and $d$ edges connecting disjoint pairs of $v$’s neighbors to form triangles.
\end{definition}

It is easy to see that in $H_{r, d}$, each edge has at most one triangle. Hence, there will not be any heavy edges of this kind in the graph. In~\cite{KP17}, the authors show an $\Omega(r\sqrt{d})$ lower bound for the hubs graph.

\begin{lemma}[\cite{KP17}]
Given $r$ and $d$, there exist two distributions $\mathcal{G}_1, \mathcal{G}_2$ on subgraphs $G_1$ and $G_2$ of $H_{r, d}$, such that any algorithm which distinguishes them with probability at least $2/3$ requires $\Omega(r\sqrt{d})$ space, where $G_i$ has $\Theta(rd)$ edges and $T(G_1) = d, T(G_2) = 0$.
\end{lemma}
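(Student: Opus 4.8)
The plan is to prove the lemma by a reduction to one-way communication complexity, taking the hardness of Boolean Hidden (Partial) Matching (BHPM) as the source of difficulty. Label the $2rd$ potential spokes of $H_{r,d}$ by $[n]$ with $n=2rd$. In the communication game, Alice holds $x\in\{0,1\}^{n}$, interpreted as the indicator of which spokes $vu_i$ are present, and Bob holds a partial matching $\{(a_1,b_1),\dots,(a_d,b_d)\}$ of $d$ disjoint pairs in $[n]$, interpreted as the $d$ neighbour-to-neighbour edges $u_{a_j}u_{b_j}$. Since the matching pairs use $2d$ distinct coordinates in disjoint pairs, the $d$ edges $u_{a_j}u_{b_j}$ form a matching among $v$'s neighbours, so the resulting graph is a genuine subgraph of $H_{r,d}$; its triangle count is exactly $|\{j: x_{a_j}=x_{b_j}=1\}|$. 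Order the stream with all of Alice's spoke edges first and all of Bob's matching edges last. Then a one-pass streaming algorithm using $s$ bits of space yields a one-way protocol of communication $s$: Alice runs the algorithm on the spokes and sends the memory state, Bob resumes it on the matching edges and reads off the answer; private randomness is handled by public coins together with Newman's theorem.

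Next I would specify the two hard distributions so that they realise $\mathcal{G}_1$ and $\mathcal{G}_2$. Take $x$ uniform on $\{0,1\}^{n}$ conditioned on $|x|_1 = rd$ (so the number of spoke edges, and hence the total number of edges, is $\Theta(rd)$ as required) and let Bob's pairs be a uniformly random size-$d$ partial matching. Under the \emph{yes} distribution the matched coordinates are resampled so that $(x_{a_j},x_{b_j})=(1,1)$ for every $j$, giving exactly $d$ triangles --- a draw from $\mathcal{G}_1$; under the \emph{no} distribution exactly one of $x_{a_j},x_{b_j}$ equals $1$ for every $j$ (equivalently $x_{a_j}\oplus x_{b_j}=1$ for all $j$), giving $0$ triangles --- a draw from $\mathcal{G}_2$. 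This is the ``AND-flavoured'' distributional variant of Boolean Hidden Partial Matching on $n$ coordinates with a matching of size $\mu=d$, whose one-way randomized lower bound follows from the same Fourier / weight-level argument used for the XOR-promise version: this bound is $\Omega(\sqrt n)$ when $\mu=\Theta(n)$ (Gavinsky--Kempe--Kerenidis--Raz--de Wolf) and, for a matching of size $\mu$ out of $n$ coordinates, $\Omega(n/\sqrt{\mu})$ (Verbin--Yu). Specializing to $n=2rd$, $\mu=d$ gives a communication lower bound of $\Omega(2rd/\sqrt d)=\Omega(r\sqrt d)$, hence an $\Omega(r\sqrt d)$ space lower bound for any one-pass streaming algorithm that distinguishes $\mathcal{G}_1$ from $\mathcal{G}_2$ with probability at least $2/3$. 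By construction every graph in the support of $\mathcal{G}_1$ has exactly $d$ triangles, every graph in the support of $\mathcal{G}_2$ has $0$ triangles, and both have $\Theta(rd)$ edges.

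I expect the main obstacle to be pinning down the BHPM lower bound in the small-matching regime $\mu=d\ll n=2rd$ and verifying that the ``both endpoints present'' versus ``exactly one present'' promise is genuinely no easier than the symmetric XOR promise for which the Fourier bound is usually stated: this requires either quoting the appropriate strengthening (a one-way bound of the shape $\Omega(n/\sqrt\mu)$ for Boolean Hidden Partial Matching) or re-running the weight-level/hybrid argument for the partial matching with the shifted promise, all while keeping $|x|_1=\Theta(rd)$ so that the edge count comes out right. A secondary point to check is simplicity of the constructed graph, which is immediate because the matching is partial and its endpoints are distinct neighbours of $v$. Finally, since every edge of $H_{r,d}$ lies in at most one triangle while the heaviness threshold is $T/c=d/c>1$, no edge of these instances is ever heavy, so a perfect heavy-edge oracle conveys no information and the same $\Omega(r\sqrt d)$ bound holds in the oracle-augmented arbitrary-order model, which is how the lemma is used in the proof of the main lower bound.
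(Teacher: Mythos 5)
The paper does not actually prove this lemma---it is quoted verbatim from \cite{KP17}---so the relevant comparison is with the reduction in that paper, which (like your plan) goes through a hidden-matching-type one-way communication bound on the hubs graph. Your outline of the streaming-to-communication reduction and the observation that no edge of these instances can be heavy (so the oracle is useless) are fine. The genuine gap is in the hard distributions you construct: they are not hard. In your yes-case the $2d$ matched coordinates are resampled to all equal $1$, while in the no-case exactly $d$ of them equal $1$; since the unmatched coordinates are identically distributed in both cases, the total Hamming weight of Alice's string (equivalently, the number of spoke edges incident to the hub) has means differing by $d$ but standard deviation only $O(\sqrt{d})$. Hence Alice can send $|x|_1$ in $O(\log n)$ bits---or a streaming algorithm can simply count edges incident to $v$ in $O(\log n)$ space---and distinguish $\mathcal{G}_1$ from $\mathcal{G}_2$ with probability $1-o(1)$ for large $d$. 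No $\Omega(r\sqrt{d})$ bound can be extracted from these two distributions. Relatedly, the communication lower bound you actually need---for the asymmetric ``both ones'' versus ``exactly one'' promise under weight-conditioned inputs---is precisely what you leave unproven, and your own construction shows that this asymmetric promise is genuinely delicate: if the weight is not equalized between the two cases, the problem is easy.

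The standard repair, and the route the cited proof effectively takes, is to reduce from the XOR-promise Boolean Hidden (Partial) Matching problem using a gadget that makes Alice's edge set independent of the promise: give each coordinate $i$ two potential hub neighbors $u_i^0,u_i^1$, have Alice insert exactly one spoke $vu_i^{x_i}$ regardless of $x_i$ (so both cases have identical edge counts), and have Bob insert, for each matched pair $(a_j,b_j)$ with bit $w_j$, gadget edges chosen so that the pair contributes a triangle iff $x_{a_j}\oplus x_{b_j}=w_j$. Then the yes/no instances have $T=d$ versus $T=0$, $\Theta(rd)$ edges in both cases, and the sparse-matching communication bound of the form $\Omega(n/\sqrt{\mu})$ with $n=\Theta(rd)$, $\mu=d$ gives $\Omega(r\sqrt{d})$. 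If you prefer to keep your AND-flavoured promise, you must (i) condition both distributions to have exactly the same spoke count (e.g.\ compensate on the unmatched coordinates) and (ii) actually carry out the Fourier/weight-level argument for that modified distribution, neither of which is done in your writeup.
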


Now, given $T$ and $m$, where $T \le m$, we let $r = \Theta(m/T)$ and $d = T$. We can see $H_{r, d}$ has $\Theta(m)$ edges and $T$ triangles, and we need $\Omega(r\sqrt{d}) = \Omega(m / \sqrt{T})$ space. 

When $T > m$, we consider the following $\Omega(m^{3/2} /  T)$ lower bound in~\cite{BC17}. 

Let $H$ be a complete bipartite graph with $b$ vertices on each side (we denote the two sides of vertices by $A$ and $B$). We add an additional $N$ vertex blocks $V_1, V_2, ... V_N$  with each $|V_i| = d$. Alice has an $N$-dimensional binary vector $x$. Bob has an $N$-dimensional binary vector $y$. Both $x$ and $y$ have exactly $N/3$ coordinates that are equal to $1$. Then, we define the edge sets 
\[
E_{\mathrm{Alice}} = \bigcup_{i:x_i = 1} \{\{u, v\}, u \in A, v \in V_i\}
\]
and
\[
E_{\mathrm{Bob}} = \bigcup_{i:y_i = 1} \{\{u, v\}, u \in B, v \in V_i\} \; .
\]
Let the final resulting graph be denoted by $G = (V, E)$ where $V = V_H \cup V_1 \cup ... \cup V_N$ and $E = E_H \cup E_{\mathrm{Alice}} \cup E_{\mathrm{Bob}}$.

In~\cite{BC17}, the authors show by a reduction to the $\textsc{DISJ}_N^{N/3}$ problem in communication complexity, that we need $\Omega(N)$ space to distinguish the case when $G$ has $0$ triangles from the case when $G$ has at least $b^2d$ triangles.

Given $m$ and $T $, $ T = \Theta(m^{1 + \delta})$ where $1 \le \delta < 1/2$, as shown in~\cite{BC17}, we can set $b = N^{s}$ and $d = N^{s - 1}$ where $s = 1/(1 - 2\delta)$. This will make $m = \Theta(N^{2s})$ and $T = \Theta(N^{3s - 1})$, which will make the $\Omega(m^{3/2} / T)$ lower bound hold. Note that at this time each edge will have an $O(\frac{1}{m})$-fraction of triangles or  an $O(\frac{1}{m^{1-\frac{1}{2s}}})$-fraction of triangles. Assume the threshold of the oracle $c = O(m^q)$. When $\delta \ge \max(0, q - \frac{1}{2})$, there will be no heavy edges in the graph.

Theorem~\ref{thm:lower_bound_arbitrary} follows from 
the above discussion.

\section{Four-Cycle Counting in the Arbitrary Order Model}\label{sec:four_cycles_appendix}

In the 4-cycle counting problem, for each $xy \in E(G)$, define $N_{xy} = |z, w: \{x, y, z, w\} \in \square|$ as the number of 4-cycles attached to edge $xy$.

In this section, we present the one-pass $\tilde{O}(T^{1 / 3} + \epsilon^{-2}m / T^{1/3})$ space algorithm behind Theorem~\ref{theorem_four_cycle} and the proof of the theorem.\footnote{We assume$\frac{1}{T^{1/6}} \le O(\epsilon)$, which is the same assumption as in previous work} 

\label{sec:four_cycle_intuition}
We begin with the following basic idea. 
\begin{itemize}
    \item Initialize: $C \leftarrow 0, S \leftarrow \emptyset$. On the arrival of edge $uv$:
    \begin{itemize}
        \item With probability $p$, $S \leftarrow \{uv\} \cup S$.
        \item $C \leftarrow C + |\{\{w, z\}: uw, wz \ \textnormal{and} \ zv \in S\}|$.
    \end{itemize}
    \item Return  $C / p^3$.
\end{itemize}
Following the analysis in~\cite{SV20}, we have 
\[
\textbf{Var}[C] \le O(T p^3 + T\Delta_E p^5 + T\Delta_W p^4) \; ,
\]
where $\Delta_E$ and $\Delta_W$ denote the maximum number of 4-cycles sharing a single edge or a single wedge.

There are two important insights from the analysis: first, this simple sampling scheme can output a good estimate to the number of 4-cycles on graphs that do not have too many heavy edges and heavy wedges (the definitions of heavy will be shown later). Second, assuming that we can store all the heavy edges, then at the end of the stream we can also estimate the number of 4-cycles that have exactly one heavy edge but do not have heavy wedges.

For the 4-cycles that have heavy wedges, we use an idea proposed in~\cite{McGregor2020, SV20}: for any node pair $u$ and $v$, if we know their number of common neighbors (denoted by $k$), we can compute the exact number of 4-cycles with $u, v$ as a diagonal pair (i.e.,  the four cycle contains two wedges with the endpoints $u, v$), and this counts to $\binom{k}{2}$. Furthermore,
if $k$ is large (in this case we say all the wedges with endpoints $u,v$ are heavy), we can detect it and obtain a good estimation to $k$ by a similar vertex sampling method mentioned in Section~\ref{sec:arbitrary_main}.
We can then estimate the total number of 4-cycles that have heavy wedges with our samples. 

At this point, we have not yet estimated the number of 4-cycles having more than one heavy edge but without heavy wedges. However, \cite{McGregor2020} shows this class of 4-cycles only takes up a small fraction of $T$, and hence we can get a $(1 \pm \epsilon)$-approximation to $T$ even without counting this class of 4-cycles.
\begin{algorithm}[!t]
   \caption{Counting 4-cycles in the Arbitrary Order Model}
   \label{alg:3}
\begin{algorithmic}[1]
\STATE {\bfseries Input:} Arbitrary Order Stream and an oracle that outputs \textsc{true} if $N_{xy} \ge T / \rho$ and \textsc{false} otherwise.
   \STATE {\bfseries Output:} An estimate of the number $T$ of 4-cycles.
   \STATE Initialize $A_l \leftarrow 0$, $A_h \leftarrow 0$, $A_w \leftarrow 0$ $E_L \leftarrow \emptyset$ $E_H \leftarrow \emptyset$, $E_S \leftarrow \emptyset$, and $W \leftarrow \emptyset$. Set $\rho \leftarrow T^{1/3} $, $p \leftarrow \alpha \epsilon^{-2} \log n / \rho$ for a sufficiently large $\alpha$, and Let $S$ be a random subset of nodes such that each node is in $S$ with probability $p$. 
   \WHILE{seeing edge $uv$ in the stream}
   \IF{$u \in S$ or $v \in S$}
    \STATE $E_S \leftarrow \{uv\} \cup E_S$ \; .
   \ENDIF
   \IF{the oracle outputs \textsc{FALSE}}
   \STATE With probability $p$, $E_L \leftarrow \{uv\} \cup E_L$ \; .
   \STATE Find pair $(w,z)$ such that $uw, wz,  \ \textnormal{and} \ zv \in E_L $, let $D \leftarrow D \cup \{(u,w,z,v)\}$ \; .
   \ELSE \STATE $E_H \leftarrow \{uv\} \cup E_H$ \; .
   \ENDIF
   \ENDWHILE
   \FOR{each node pair $(u, v)$}
        \STATE  let $q(u, v)$ be the number of wedges with center in $S$ and endpoints $u$ and $v$.
        \IF{$q(u,v) \ge p T^{1/3}$}
            \STATE $A_w \leftarrow A_w + \binom{q(u,v)}{2}$ \;.
            \STATE $W \leftarrow W \cup \{ (u, v) \}$ \; .
        \ENDIF
   \ENDFOR
   \FOR{each 4-cycle $d$ in $D$}
        \IF{the end points of wedges in $d$ are not in $W$}
            \STATE $A_l \leftarrow A_l + 1$
        \ENDIF
   \ENDFOR
   \FOR{each edge $uv$ in $E_H$}
        \FOR{each 4-cycle $d$ formed with $uv$ and $e \in E_L$} 
        \IF{the end points of wedges in $d$ are not in $W$}
            \STATE $A_h \leftarrow A_h + 1$
        \ENDIF
        \ENDFOR
   \ENDFOR
   \STATE {\bfseries return:} $A_l / p^3 + A_h / p^3 + A_w$ \; .
\end{algorithmic}
\end{algorithm}

The reader is referred to Algorithm~\ref{alg:3} for a detailed version of our one-pass, 4-cycle counting algorithm. Before the stream, we randomly select a node set $S$, 
where each vertex is in $S$ with probability $p$ independently, 
and we later store all edges that are incident to $S$ during the stream. In the meantime, we define the edge $uv$ to be \textit{heavy} if $N_{uv} \ge T^{2/3}$ and train the oracle to predict whether $uv$ is heavy or not when we see the edge $uv$ during the stream. Let $p = \tilde{O} (\epsilon^{-2} / T^{1/3})$. We store all the heavy edges and sample each light edge with probability $p$ during the stream. Upon seeing see a light edge $uv$, we look for the 4-cycles that are formed by $uv$ and the light edges that have been sampled before, and then record them in set $D$. At the end of the stream, we first find all the node pairs that share many common neighbors in $S$ and identify them as heavy wedges. Then, for each 4-cycle $d \in D$, we check if $d$ has heavy wedges, and if so, remove it from $D$. Finally, for each heavy edge $uv$ indicated by the oracle, we compute the number of 4-cycles that are formed by $uv$ and the sampled light edges, and without heavy wedges. This completes all parts of our estimation procedure.

We now prove Theorem~\ref{theorem_four_cycle}, restated here for the sake of convenience.
\thmFourCycles*

\begin{proof}
From the Lemma 3 in ~\cite{SV20}, we know that with probability at least $9/10$, we can get an estimate $A_w$ such that
\[
A_w = T_w \pm \epsilon T.
\]
Where $T_w$ is the number of the 4-cycles that have at least one heavy wedge. We note that if a 4-cycle has two heavy wedges, it will be counted twice. However, \cite{SV20} shows that this double counting is at most $O(T^{2/3}) = O(\epsilon)T$. 

For the edge sampling algorithm mentioned in~\ref{sec:four_cycle_intuition}, from the analysis in~\cite{SV20}, we have 
\[
\textbf{Var}[{A_l}/{p^3}] \le O(T / p^3 + T\Delta_E / p + T\Delta_W / p^2)\; .  
\]
Notice that in our algorithm, the threshold of the heavy edges and heavy wedges are $N_{uv} \ge T^{2/3}$ and $N_w \ge T^{1/3}$, respectively, which means $\textbf{Var}[\frac{A_l}{p^3}] = O(\epsilon^2 T^2)$. Hence, we can get an estimate $A_l$ such that with probability at least $9/10$, 
\[
A_l = T_l \pm \epsilon T\;,
\]
where $T_l$ is the number of 4-cycles that have no heavy edges. 
Similarly, we have with probability at least $9/10$, 
\[
A_h = T_h \pm \epsilon T\;,
\]
where $T_h$ is the number of 4-cycles that have at most one heavy edge.

One issue is that the algorithm has not yet estimated the number of 4-cycles having more than one heavy edge, but without heavy wedges. However, from Lemma 5.1 in~\cite{McGregor2020} we get that the number of this kind of 4-cycles is at most $O(T^{5/6}) = O(\epsilon)T$. Hence putting everything together, we have
\[
|A_l/p^3 + A_h/p^3 + A_w - T| \le O(\epsilon) T \;,
\]
which is what we need.

Now we analyze the space complexity. The expected number of light edges we sample is $O(mp) = \tilde{O}(\epsilon^{-2} m / T^{1/3})$ and the expected number of nodes in $S$ and edges in $E_S$ is $O(2mp) = \tilde{O}(\epsilon^{-2} m / T^{1/3})$. The expected number of 4-cycles we store in $D$ is $O(T p^3) = \tilde{O}(\epsilon^{-6})$. We store all the heavy edges, and the number of heavy edges is at most $O(T^{1/3})$. Therefore, the expected total space is $\tilde{O}(T^{1 / 3} + \epsilon^{-2}m / T^{1/3})$.
\end{proof}

\section{Runtime Analysis} \label{sec:runtime}

In this section, we verify the runtimes of our Algorithms \ref{alg:adj_tri_perfect}, \ref{alg:4}, \ref{alg:value_prediction:2}, \ref{alg:arb_tri_perfect}, and \ref{alg:3}.

\begin{proposition}
    Algorithm \ref{alg:adj_tri_perfect} runs in expected time at most $\tilde{O}\left(\min(\eps^{-2} m^{5/3}/T^{1/3}, \eps^{-1} m^{3/2})\right)$ in the setting of Theorem \ref{theorem_adjacency}, or $\tilde{O}\left(\min(\eps^{-2} m^{5/3}/T^{1/3}, K \cdot \eps^{-1} m^{3/2})\right)$ in the setting of Theorem \ref{thm:imperfect_adj}.
\end{proposition}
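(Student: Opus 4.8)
The plan is to show that Algorithm~\ref{alg:adj_tri_perfect} runs in expected time $\tilde{O}(m)$ times its space usage, and then to read off the stated bounds from Theorems~\ref{theorem_adjacency} and~\ref{thm:imperfect_adj}. Concretely, I would first fix implementation details: store each of $S_L$, $S_M$, $S_{aux}$, and (in the sparse regime $T<(m/\eps)^{1/2}$) the set of stored heavy edges, as hash tables that are also indexed by endpoint, so that membership queries, insertions, and the enumeration of the sampled edges incident to a given vertex each cost $O(1)$ expected time per element; charge $O(1)$ per oracle call; and, while the adjacency list of $v$ streams by, build a hash set for $N(v)$ in time $O(\deg v)$. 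Since only the $\le 2m$ vertices of positive degree do nontrivial work, reading the stream and preparing neighbor sets costs $O(\sum_v\deg v)=O(m)$ overall.

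Second, I would account for the work inside the \textbf{while} loop. Processing the adjacency list of a vertex $v$ costs: (i) $O(|S_L\cup S_M|)$ for the scan incrementing $C_{ab}$ over $ab\in S_L\cup S_M$, and $O(\deg_{S_{aux}}(v))$ for the scan setting $B_{av}$ over $av\in S_{aux}$; and (ii) for each of the $\deg v$ incident edges, an $O(1)$ block of oracle query, sampling coin, and hash lookups, together with — only when the edge is deemed non-light — the evaluation of $\#=|\{z:uz\in S_{aux},\ B_{uz}=1,\ z\in N(v)\}|$, carried out by one pass over $S_{aux}$ using the constant-time test for membership in $N(v)$, at cost $O(|S_{aux}|)$. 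Summing (i) over vertices gives $O(n\cdot|S_L\cup S_M|)+O(|S_{aux}|)=O(m\cdot|S_L\cup S_M|)$, and summing (ii) gives $O(\sum_v\deg v)+O(\sum_v\deg v\cdot|S_{aux}|)=O(m\cdot|S_{aux}|)$; here we use that $S_L,S_M,S_{aux}$ never shrink, so their running sizes are at most their final sizes. Hence the total time is $O\big(m\cdot(1+|S_L\cup S_M|+|S_{aux}|)\big)$, and taking expectations and plugging in the space analysis from the proof of Theorem~\ref{theorem_adjacency}, which bounds $\mathbb{E}[|S_L\cup S_M|+|S_{aux}|]$ by $\tilde{O}(\min(\eps^{-2}m^{2/3}/T^{1/3},\eps^{-1}m^{1/2}))$, yields the expected time $\tilde{O}(\min(\eps^{-2}m^{5/3}/T^{1/3},\eps^{-1}m^{3/2}))$. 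The argument for Theorem~\ref{thm:imperfect_adj} is identical, except that with a $K$-noisy oracle the number of stored heavy edges and the size of $S_M$ in the sparse regime may grow by a factor $K$; this is precisely the blow-up already quantified in the proof of Theorem~\ref{thm:imperfect_adj}, and it replaces the second term by $K\cdot\eps^{-1}m^{3/2}$.

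The step I expect to be the main obstacle is making sure the evaluation of $\#$ does not dominate: a naive implementation (iterating over pairs of $N(v)$, or scanning $N(v)$ once per incident edge) can inflate the bound, so I would insist on the single $O(|S_{aux}|)$ pass per non-light incident edge, after which the elementary bound $\sum_v\deg v=O(m)$ finishes it; an endpoint-indexed variant that iterates only over $\{uz\in S_{aux}\}$ gives the same asymptotics. A second point to get right is that the $S_L\cup S_M$ scan is charged exactly once per vertex, so its total cost is $n\cdot|S_L\cup S_M|\le 2m\cdot|S_L\cup S_M|$ and nothing larger, and that holding $N(v)$ in memory only for the currently processed vertex has no effect on the time bound (and is immaterial to the space bounds, which are the content of the cited theorems). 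The collapse of the two parameter regimes of Algorithm~\ref{alg:adj_tri_perfect} into a single $\min$ is then inherited directly from the analogous statement in Theorem~\ref{theorem_adjacency} (resp.\ Theorem~\ref{thm:imperfect_adj}), since what we prove is uniformly $\tilde{O}(m)$ times that theorem's space bound.
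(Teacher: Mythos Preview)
Your proposal is correct and follows essentially the same approach as the paper: both arguments show that the per-stream-element work is bounded by $O(|S_L|+|S_M|+|S_{aux}|)$, so the total running time is $O(m\cdot S)$ where $S$ is the space bound, and then simply invoke the space bounds of Theorems~\ref{theorem_adjacency} and~\ref{thm:imperfect_adj}. Your write-up is more explicit about the implementation details (hash tables, the per-vertex versus per-edge accounting for the $S_L\cup S_M$ scan, and the single $O(|S_{aux}|)$ pass to compute $\#$), but the underlying argument and the resulting bound are the same.
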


\begin{proof}
    For each edge $ab \in S_L \cup S_M$, we check whether we see both $va$ and $vb$ (lines 9-10) when looking at edges adjacent to $v$. So, this takes time $|S_L|+|S_M|$ per edge in the stream. We similarly check for each edge in $S_{aux}$ (lines 11-12), so this takes time $|S_{aux}|$. Finally, for each edge $vu$ (line 13), we note that lines 14-17 can trivially be implemented in $O(1)$ time, along with $1$ call to the oracle. The remainder of the oracle simply involves looking through the set $S_{aux}$ and $S_M$ to search or count for elements. Thus, the overall runtime is $O(|S_L|+|S_M|+|S_{aux}|)$ per stream element, so the runtime is $O\left(m \cdot (|S_L|+|S_M|+|S_{aux}|)\right)$.
    
    This is at most $O(m \cdot S)$, where $S$ is the space used by the algorithm. So, the runtime is $\tilde{O}\left(\min(\eps^{-2} m^{5/3}/T^{1/3}, \eps^{-1} m^{3/2}\right)$ in the setting of Theorem \ref{theorem_adjacency}, and is at most $\tilde{O}\left(\min(\eps^{-2} m^{5/3}/T^{1/3}, K \cdot \eps^{-1} m^{3/2})\right)$ in the setting of Theorem \ref{thm:imperfect_adj}.
\end{proof}

\begin{proposition}
    Algorithm \ref{alg:4} runs in time $\tilde{O}(\eps^{-2} (\alpha + m\beta / T) m)$.
\end{proposition}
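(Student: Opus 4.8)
The plan is to charge the running time to the total size of the stored edge sets across the $c\eps^{-2}$ subroutines, a quantity that is controlled by the explicit space cap $H = \Theta\big(\eps^{-2}\log^2(K/\eps)(\alpha + m\beta/T)\big)$ built into Algorithm~\ref{alg:4}. The invariant I would establish first is that $\sum_i |S^i|\le H$ at the end of processing each vertex: the threshold $M$ is chosen precisely so that $\sum_i s_i\le H$, and since $S^i$ retains exactly those edges $ab$ with $\hat{R}_{ab}/u^i_{ab}\ge M$, the bound follows; moreover, while processing a vertex $y$ of degree $d_y$ the stored sets grow by at most $d_y$ per subroutine, i.e.\ by at most $O(d_y\,\eps^{-2})$ in total, before being pruned again. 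Throughout I assume that each oracle query and each draw of an exponential random variable costs $O(1)$ time, and (as in Theorem~\ref{thm:value_oracle:1}) that $\alpha\ge 1$ is an approximation factor, so that $H\ge\eps^{-2}$.

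Next I would bound the work done while processing a single vertex $y$. Building a hash table of $N(y)$ takes $O(d_y)$ time; the counter-update loop then scans each stored edge $ab\in\bigcup_i S^i$ once and performs two $O(1)$ membership tests, for a total of $\tilde{O}\big(\sum_i|S^i|\big)=\tilde{O}(H)$. The loop over the $d_y$ incident edges $yx$ touches all $c\eps^{-2}$ subroutines per edge — an $O(1)$ lookup in $S^i$, and either an update of $A^i$ or one exponential draw plus one oracle query — costing $O(d_y\,\eps^{-2})$ in total. Recomputing $M$ amounts to finding the $(H+1)$-st largest value $v^\star$ among the stored numbers $\hat{R}_{ab}/u^i_{ab}$ and setting $M=\lfloor v^\star\rfloor+1$; this is one order-statistic computation, doable in time linear in the number of stored values by selection (or $\tilde{O}(H)$ by sorting), and the subsequent deletion pass over the sets costs the same. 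Hence the per-vertex work is $\tilde{O}\big(H + d_y\,\eps^{-2}\big)$.

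Summing over the stream, there are at most $2m$ vertices with a nonempty adjacency list and $\sum_y d_y = 2m$, giving total time $\tilde{O}\big(mH + \eps^{-2}m\big)$. Since $H\ge\eps^{-2}$, the first term dominates; substituting the value of $H$ and absorbing the $\log^2(K/\eps)$ factor into the $\tilde{O}(\cdot)$ notation yields $\tilde{O}\big(\eps^{-2}(\alpha+m\beta/T)\,m\big)$, as claimed.

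I expect the only genuinely non-routine point to be arguing that the threshold step does not dominate the running time — handled above by the observation that it is a single order-statistic computation over the currently stored values — together with pinning down the invariant $\sum_i|S^i|\le H$, which is immediate from the definition of $M$. Everything else is bookkeeping against $\sum_y d_y = 2m$.
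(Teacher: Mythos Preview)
The proposal is correct and takes essentially the same approach as the paper: both bound the running time as $\tilde{O}(m\cdot S)$ where $S$ is the space used by the algorithm, and then substitute the explicit cap $S\le H=\Theta\!\big(\eps^{-2}\log^2(K/\eps)(\alpha+m\beta/T)\big)$. Your per-vertex accounting (hash $N(y)$, scan the stored sets once, do a single order-statistic selection for $M$) differs only at the implementation level from the paper's per-edge accounting with balanced search trees; the invariant $\sum_i|S^i|\le H$ and the final bound are identical.
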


\begin{proof}
    For each edge $ab \in S^i$ for each $0 \le i \le c \eps^{-2}$, we check whether we see both $ya$ and $yb$ in the stream when looking at edges adjacent to $y$. So, lines 7-8 take time $O(\sum |S^i|)$ for each edge we see in the stream. By storing each $S^i$ (and each $Q^i$) in a balanced binary search tree or a similar data structure, we can search for $xy \in S^i$ in time $O(\log n)$ in line $11$, and it is clear that lines 12-16 take time $O(\log n)$ (since insertion into the data structure for $Q^i, S^i$ can take time $O(\log n)$). Since we do this for each $i$ from $0$ to $c \eps^{-2}$ and for each incident edge $yx$ we see, in total we spend $O(\eps^{-2} \log n + \sum |S^i|)$ time up to line 16. The remainder of the lines take time $O(\eps^{-2} \cdot m \log n)$, since the slowest operation is potentially deleting an edge from each $S^i$ and a value from each $Q^i$ up to $m$ times (for each edge). 
    
    Overall, the runtime is $\tilde{O}(\eps^{-2} \cdot m + m \cdot \sum |S^i|) = \tilde{O}(m \cdot S)$, where $S$ is the total space used by the algorithm. Hence, the runtime is $\tilde{O}(\eps^{-2} (\alpha + m\beta / T) m)$.
\end{proof}

\begin{proposition}
    Algorithm \ref{alg:value_prediction:2} runs in time $\tilde{O}(K \eps^{-2} (\alpha + m\beta / T) m)$.
\end{proposition}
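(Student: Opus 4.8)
The plan is to follow exactly the template of the two preceding runtime propositions: bound the work the algorithm performs while processing a single stream element (the adjacency list of a vertex $y$, or equivalently a single incident edge), multiply by the total stream length, and observe that the dominant term equals $\tilde{O}(m\cdot S)$, where $S$ is the space bound already established for Algorithm~\ref{alg:value_prediction:2} in Theorem~\ref{thm:value_oracle:2}. Throughout I would adopt the same model assumption used in the other two proofs, namely that one oracle query costs $\tilde{O}(1)$ time and that membership in $N(y)$ can be tested in $\tilde{O}(1)$ time (e.g.\ $N(y)$ stored as a hash set, or $O(\log n)$ via a balanced search tree).

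The main step is the per-element accounting. When the adjacency list of $y$ arrives, the algorithm first runs through every currently sampled edge $ab\in S_i^j$ (over all $i = O(\log n)$, $j=O(\log\log n)$) and, in $O(1)$ time each, checks whether $a,b\in N(y)$ and if so increments $C_{ab}$; this costs $O\big(\sum_{i,j}|S_i^j|\big)$. Then, for each incident edge $yx$: it makes one oracle call to obtain $\hat R_{yx}$; it finds the bucket index $i$ with $\hat R_{yx}+\beta\in I_i$ in $O(\log\log n)$ time (binary search over the $O(\log n)$ geometric buckets); for each of the $O(\log\log n)$ copies $j$ it flips a $p_i$-biased coin and, if it comes up heads, inserts $yx$ into $S_i^j$; and if $yx$ was already stored, it adds $C_{yx}$ to the relevant $A_i^j$. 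Keeping each $S_i^j$ in a balanced binary search tree makes every lookup and insertion $O(\log n)$, so this block contributes $O\big(\mathrm{polylog}(n)\big)$ per incident edge plus the $\sum_{i,j}|S_i^j|$ term above. Hence the cost per stream element is $O\big(\sum_{i,j}|S_i^j|+\mathrm{polylog}(n)\big)$ with one oracle call.

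Summing over the $O(m)$ adjacency-list entries (each edge appearing twice), the total running time is $O\big(m\cdot\sum_{i,j}|S_i^j|+m\cdot\mathrm{polylog}(n)\big)=\tilde{O}(m\cdot S)$, since $\sum_{i,j}|S_i^j|$ together with the $A_i^j$ and $H$ counters is precisely the space used by the algorithm. The post-processing loop that sets $H\leftarrow H+\mathrm{median}(A_i^j)/p_i$ is only $O(\log n\cdot\log\log n)$ arithmetic operations and is dominated. Plugging in the space bound $S=O\big(\tfrac{K}{\eps^2}(\alpha(\log n)^3\log\log n+m\beta/T)\big)$ from Theorem~\ref{thm:value_oracle:2} yields the claimed $\tilde{O}(K\eps^{-2}(\alpha+m\beta/T)\,m)$. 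I do not anticipate a real obstacle here: unlike Algorithm~\ref{alg:4}, this algorithm has no per-element threshold search $M$ to recompute, so the only point needing a sentence of care is the choice of data structure making lookups/insertions/membership tests $\mathrm{polylog}(n)$, after which the bound follows from the routine charging argument identical in structure to the proof for Algorithm~\ref{alg:4}.
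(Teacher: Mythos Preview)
Your proposal is correct and follows essentially the same approach as the paper: bound the per-stream-element work as $O(\sum_{i,j}|S_i^j|+\mathrm{polylog}(n))$ using balanced search trees for the $S_i^j$, sum over the $O(m)$ stream entries to obtain $\tilde{O}(m\cdot S)$, and then substitute the space bound from Theorem~\ref{thm:value_oracle:2}. The paper's argument is the same line-by-line accounting leading to $\tilde{O}(m\cdot S)$; if anything, your version is slightly more careful about distinguishing the $O(\log n)\times O(\log\log n)$ index ranges for Algorithm~\ref{alg:value_prediction:2}.
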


\begin{proof}
    For each edge $ab \in S_i^j$ for each $1 \le i \le O(\log n), 1 \le j \le O(\log \log n)$, we check whether we see both $ya$ and $yb$ in the stream when looking at edges adjacent to $y$. So, line 5 takes time $O(\sum |S_i^j|)$ for each edge we see in the stream. By storing each $S_i^j$ in a balanced binary search tree or a similar data structure, we can search for $xy \in S^i$ in time $O(\log n)$ in line $7$, and it is clear that lines 8-12 take time $O(\log n)$ (since insertion into the data structure for $Q_i^j$ can take time $O(\log n)$). Since we do this for each $i$ from $0$ to $c \eps^{-2}$ and for each incident edge $yx$ we see, in total we spend $\poly\log n \cdot O(\sum |S^i|)$ time up to line 12. Finally, lines 13-15 take time $\poly\log n$.
    
    Overall, the runtime is $\tilde{O}(m \cdot \sum |S^i|) = \tilde{O}(m \cdot S)$, where $S$ is the total space used by the algorithm. Hence, the runtime is $\tilde{O}(K \eps^{-2} (\alpha + m\beta / T) m)$.
\end{proof}

\begin{proposition}
    Algorithm \ref{alg:arb_tri_perfect} runs in time $\tilde{O}\left(\eps^{-1} \cdot m^2/\sqrt{T} + \eps^{-1} \cdot m^{3/2}\right).$
\end{proposition}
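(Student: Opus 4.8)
The plan is to bound the work done when each stream edge is processed and then multiply by the stream length $m$, reusing the space bound already established in the proof of Theorem~\ref{theorem_arbitrary}. Concretely, I would implement $\mS$ and $\mH$ as vertex‑keyed adjacency lists, together with a hash table on the stored edge set $\mS\cup\mH$ so that ``is this edge stored, and is it in $\mS$ or $\mH$?'' can be answered in $\tilde O(1)$ time. With this in place, for an arriving edge $e=xy$ the oracle call and the possible insertion into $\mS$ or $\mH$ (lines 6--9) cost $\tilde O(1)$. The point to stress is that the three inner loops over $w_{\mS,\mS}$, $w_{\mS,\mH}$, $w_{\mH,\mH}$ need not iterate over those (potentially large) wedge sets at all: a wedge $w$ can increment some $\ell_j$ only if $(e,w)$ is a triangle, i.e.\ only if $w=\{xz,zy\}$ for a common neighbor $z$ of $x$ and $y$ realized by two stored edges. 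So I would run all three loops simultaneously by scanning the stored neighbors of $x$ (or, better, the shorter of the two stored adjacency lists of $x$ and $y$) and, for each stored edge $xz$, testing in $\tilde O(1)$ whether $zy$ is stored and classifying the resulting triangle by which of $xz,zy$ lie in $\mS$ versus $\mH$. This costs $\tilde O(\deg_{\mathrm{stored}}(x)) = \tilde O(|\mS|+|\mH|)$ per stream edge.

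Next I would simply quote the space analysis from the proof of Theorem~\ref{theorem_arbitrary}: with $\rho=\max\{\eps T/\sqrt m,\,1\}$ and $p=C\max\{1/(\eps\sqrt T),\,\rho/(\eps^2 T)\}$, the set $\mH$ of heavy edges has size $O(T/\rho)$ deterministically (each heavy edge is incident to more than $\rho$ triangles and $\sum_e N_e=3T$), while $|\mS|$ is a sum of independent indicators of mean $mp$, hence $|\mS|=O(mp)$ with high probability; together $|\mS|+|\mH|=\tilde O(\eps^{-1}m/\sqrt T+\eps^{-1}\sqrt m)$, which is exactly the space bound. Multiplying the per‑edge cost $\tilde O(|\mS|+|\mH|)$ by the $m$ edges of the stream (the post‑stream step costs $O(1)$) gives total running time $\tilde O\big(m\cdot(\eps^{-1}m/\sqrt T+\eps^{-1}\sqrt m)\big)=\tilde O(\eps^{-1}m^2/\sqrt T+\eps^{-1}m^{3/2})$, as claimed; as with the other runtime propositions this bound carries the same in‑expectation / with‑high‑probability qualifier as the space bound it rests on.

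The only genuinely delicate point is the one flagged above: one must argue that the three wedge loops are realized by a single common‑neighbor scan, whose cost is bounded by the number of stored edges, rather than by the sizes of $w_{\mS,\mS},w_{\mS,\mH},w_{\mH,\mH}$ themselves (which could be as large as the number of stored wedges and would blow up the bound). Once that implementation detail is fixed, the rest is a routine ``per‑edge work $\le$ space, summed over $m$ edges'' estimate, identical in spirit to the runtime proof for Algorithm~\ref{alg:adj_tri_perfect}.
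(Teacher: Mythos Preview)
Your proposal is correct and follows essentially the same approach as the paper: bound the per-edge work by $\tilde O(|\mS|+|\mH|)$ via a common-neighbor scan (the paper phrases this as listing the stored edges adjacent to each endpoint, sorting by other endpoint, and matching, whereas you use a hash lookup on one side---both give the same $\tilde O(|\mS|+|\mH|)$ cost), then multiply by $m$ and invoke the space bound from Theorem~\ref{theorem_arbitrary}. The ``delicate point'' you flag about avoiding iteration over all stored wedges is exactly the observation the paper makes as well.
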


\begin{proof}
    For each edge $e$ in the stream, first we check the oracle, and we either add $e$ to $H$ or to $S_L$ with probability $p$ (lines 6-9), which take $O(1)$ time. The remaining lines (lines 10-18) involve operations that take $O(1)$ time for each $w$ which represents a pair $(e_1, e_2)$ of edges where $e_1, e_2 \in S_L \cup H$ and $(e, e_1, e_2)$ forms a triangle. So, the runtime is bounded by the amount of time it takes to find all $e_1, e_2 \in S_L \cup H$ such that $(e, e_1, e_2)$ forms a triangle. If the edge $e = (u, v)$, we just find all edges in $S_L \cup H$ adjacent to $u$, and all edges in $S_L \cup H$ adjacent to $v$. Then, we sort these edges based on their other endpoint and match the edges if they form triangles. So, the runtime is $\tilde{O}(|S_L|+|H|)$ per edge $e$ in the stream. Finally, line 19 takes O(1) time.
    
    Overall, the runtime is $\tilde{O}(m \cdot (|S_L|+|H|)) = \tilde{O}(m \cdot S)$ where $S$ is the total space of the algorithm. So, the runtime is $\tilde{O}\left(\eps^{-1} \cdot m^2/\sqrt{T} + \eps^{-1} \cdot m^{3/2}\right).$
\end{proof}

\begin{proposition}
    Algorithm \ref{alg:3} runs in time $\tilde{O}\left(\eps^{-2}/T^{1/3} \cdot (n^3+m^2)\right)$.
\end{proposition}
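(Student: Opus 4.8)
The plan is to bound the running time of Algorithm~\ref{alg:3} phase by phase, charging every operation against the sets the algorithm maintains, whose (expected) sizes are already pinned down in the space analysis in the proof of Theorem~\ref{theorem_four_cycle}: with $p=\tilde{O}(\eps^{-2}/T^{1/3})$ we have $|E_L|,|E_S|=\tilde{O}(\eps^{-2}m/T^{1/3})$, $|E_H|=O(T^{1/3})$, $\E|S|=pn$, and $\E|D|=\tilde{O}(\eps^{-6})$. I would store $E_L$ (and $E_S$, $E_H$) both as a hash set of edges, for $O(1)$ membership queries, and as an adjacency list, for linear-time neighbourhood enumeration, and keep the pair counters $q(\cdot,\cdot)$ and the set $W$ in hash maps. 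With this bookkeeping, the per-edge sampling coins, the oracle query, and the insertions into $E_S,E_L,E_H$ all cost $O(1)$ amortized.

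The first main step is the streaming phase. A heavy edge is merely inserted into $E_H$, at cost $O(1)$. For a light edge $uv$ the only nontrivial operation is enumerating all pairs $(w,z)$ with $uw,wz,zv\in E_L$ in order to populate $D$, and the key observation is that this takes time $O(|E_L|)$ rather than $O(|E_L|^2)$: iterating over $w$ in the $E_L$-neighbourhood of $u$ and then over $z$ in the $E_L$-neighbourhood of $w$, testing $zv\in E_L$ in $O(1)$, the number of inspected pairs is $\sum_{w\in N_{E_L}(u)}\deg_{E_L}(w)\le\sum_{w}\deg_{E_L}(w)=2|E_L|$. Summing over the at most $m$ stream edges, the streaming phase runs in time $\tilde{O}(m\cdot|E_L|)=\tilde{O}(\eps^{-2}m^2/T^{1/3})$.

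The second main step is post-processing. To compute every $q(u,v)$ I would iterate over each center $c\in S$ and each ordered pair of its neighbours in $E_S$ (which contains all edges incident to $S$), incrementing the corresponding counter; this visits $\sum_{c\in S}\deg_G(c)^2\le|S|\cdot n^2$ pairs, with expectation $\le\E[|S|]\cdot n^2=pn^3=\tilde{O}(\eps^{-2}n^3/T^{1/3})$, and since only pairs carrying a wedge are ever touched, thresholding to form $A_w$ and $W$ stays within the same bound. Filtering the $4$-cycles of $D$ against $W$ is $O(|D|)$ hash lookups, which is at most the time already spent building $D$ and hence absorbed into the streaming bound. Finally, the loop over $E_H$ closes, for each heavy edge, all $4$-cycles using three edges of $E_L$; by the same degree-sum argument this is $O(|E_L|)$ per heavy edge, so $O(|E_H|\cdot|E_L|)=O(\eps^{-2}m)$ in total, which is lower order. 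Adding the phases gives $\tilde{O}\!\left(\eps^{-2}/T^{1/3}\cdot(n^3+m^2)\right)$.

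The step I expect to be the main obstacle — really the only place requiring care — is exactly the claim that the "enumerate all $4$-cycles through the current edge lying inside $E_L$" operation (and, identically, the $E_H$-loop and the wedge-counting step) can be implemented in time linear rather than quadratic in the relevant set size, by exploiting that a sum of degrees over a neighbourhood is at most twice the number of edges; a naive pairwise implementation would cost an extra factor of $|E_L|$ (respectively a worse power of $n$) and overshoot the target. A secondary point is that the bound is an expected-time bound and relies on the size estimates (and a routine Chernoff bound controlling $|S|$ when one wants a high-probability rather than in-expectation statement) from the space analysis of Theorem~\ref{theorem_four_cycle}.
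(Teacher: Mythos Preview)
Your proposal is correct and matches the paper's own proof in structure and in all the key estimates: the streaming phase is bounded by $\tilde{O}(m\cdot|E_L|)$ via the observation that enumerating $4$-paths through a fixed edge costs $O(|E_L|)$, the wedge-counting step by $O(n^2\cdot|S|)$, the filtering of $D$ by $O(|D|)$, and the heavy-edge loop by $O(|E_H|\cdot|E_L|)$, with the latter two absorbed into the first. The only cosmetic differences are that the paper iterates directly over all edges $wz\in E_L$ (checking membership of $uw,zv$ in a search tree) rather than walking neighbourhoods as you do, and the paper states $O(n^2\cdot|S|)$ for the wedge step without spelling out the $\sum_{c\in S}\deg(c)^2$ argument; both routes give the same bound.
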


\begin{proof}
    We note that for each edge $uv$ in the stream (line 4), the code in the loop (lines 5-11) can be implemented using $1$ oracle call and $\tilde{O}(|E_L|)$ time. The only nontrivial step here is to find pairs $(w, z)$ such that $uw, wz, zv$ are all in $E_L$. However, by storing $E_L$ in a balanced binary search tree or similar data structure, one can enumerate through each edge $wz \in E_L$ and determine if $uw$ and $zv$ are in $E_L$ in $O(\log |E_L|)$ time. So, lines 4-11 take time $\tilde{O}(|E_L|)$ per stream element.
    
    Next, lines 12-16 can easily be implemented in time $O(n^2 \cdot |S|)$, lines 17-19 can be easily implemented in time $\tilde{O}(|D|)$ if the vertices in $W$ are stored properly, and lines 20-23 can be done in $\tilde{O}(|E_H| \cdot |E_L|)$ time. The last part is true since we check each $uv \in E_H$ and $e=(u',v') \in E_L$, and then check if $u, u', v', v$ form a $4$-cycle by determining if $u, u'$ and $v, v'$ are in $E_L \cup E_H$ (which takes time $O(\log |E_L|+\log |E_H|)$ time assuming $E_L, E_H$ are stored in search tree or similar data structure).
    
    Overall, we can bound the runtime as $\tilde{O}(m \cdot |E_L|+n^2 \cdot |S|+|D|+|E_H| \cdot |E_L|) = \tilde{O}(m \cdot |E_L|+n^2 \cdot |S|)$. As each edge is in $E_L$ with probability at most $p$ and each edge is in $S$ with probability at most $p$, the total runtime is $\tilde{O}\left((m^2 +n^3) \cdot p\right) = \tilde{O}\left(\eps^{-2}/T^{1/3} \cdot (n^3+m^2)\right)$.
\end{proof}

\section{Additional Experimental Results}\label{sec:additional_experiments}
All of our graph experiments were done on a CPU with i5 2.7 GHz dual core and 8 GB RAM or a CPU with i7 1.9 GHz 8 core and 16GB RAM. The link prediction training was done on a single GPU.

\subsection{Dataset descriptions}
\begin{itemize}
    \item \textbf{Oregon:} $9$ graphs sampled over $3$ months representing a communication network of internet routers \cite{snapnets, oregon1}.
    \item \textbf{CAIDA:} $98$ graphs sampled approximately weekly over $2$ years, representing a communication network of internet routers \cite{snapnets, oregon1}.
    \item \textbf{Reddit Hyperlinks:} Network where nodes represent sub communities (subreddits) and edges represent posts that link two different sub communities \cite{snapnets, reddit}.  
    \item \textbf{WikiBooks:} Network representing Wikipedia users and pages, with editing relationships between them \cite{wikibooks}.
    \item \textbf{Twitch:} A user-user social network of gamers who stream in a certain language. Vertices are the users themselves and the links are mutual friendships between them. \cite{rozemberczki2019}
    \item \textbf{Wikipedia:} This is a co-occurrence network of words appearing in the first million bytes of the Wikipedia dump. \cite{MM11}
    \item \textbf{Synthetic Power law:} Power law graph sampled from the Chung-Lu-Vu (CLV) model with the expected degree of the $i$th vertex proportional to $1/i^2$ \cite{CLV}. To create this graph, the vertices are `revealed' in order. When the $j$th vertex arrives, the probability of forming an edge between the $j$th vertex and $i$th vertex for $i < j$ is proportional to $1/(ij)^2$.
\end{itemize}

\subsection{Details on link prediction oracle}\label{sec:prediction_oracle_details}

Our oracle for the Twitch and Wikipedia graphs is based on Link Prediction, where the task is to
estimate the likelihood of the existence of edges or to find missing links in the network. Here we use the method and code proposed in~\cite{ZC18}. For each target link, it will extract a local enclosing subgraph around a node pair, and use a Graph Neural Network to learn general graph structure features for link prediction. For the Twitch network, we use all the training links as the training data for the graph neural network, while for the Wikipedia network, we use $30\%$ of the training links as the training data due to memory limitations. For the Twitch network, our set of links that we will try to predict are between two nodes that have a common neighbor in the training network, but do not form a link in the training network. This is about 3.8 million pairs, and we call this the candidate set. We do this for memory considerations. For the remaining node pairs, we set the probability they form a link to be $0$. For the Wikipedia network, we randomly select a link candidate set of size $20$ times the number of edges (about $1$ million pairs) from the entire set of testing links. These link candidate sets will be used by the oracle to determine heaviness. Then, we use this network to do our prediction for all links in our candidate sets for the two networks. 

Now we are ready to build our heavy edge oracle. For the adjacency list model, when we see the edge $uv$ in the stream, we know all the neighbors of $u$, and hence, we only need to predict the neighbors of $v$. Let $\mathrm{deg}(v)$ be the degree of $v$ in the training graph. The training graph and testing graph are randomly split. Hence, we can use the training set to provide a good estimation to the degree of $v$.  
Next, we choose the largest $\mathrm{deg}(v)$ edges incident to $v$ from the candidate set, in terms of their predicted likelihood given by the the neural network, as our prediction of $N(v)$, which leads to an estimate of $R_{uv}$. For the arbitrary order model, we use the same technique as above to predict $N(u)$ and $N(v)$, and estimate $N_{uv}$. 

\subsection{Parameter Selection for Algorithm \ref{alg:4}}\label{sec:param}
We need to set two parameters for Algorithm \ref{alg:4}: $p$, which is the edge sampling probability, and $\theta$, which is the heaviness threshold. In our theoretical analysis, we assume knowledge of a lower bound on $T$ in order to set $p$ and $\theta$, as is standard in the theoretical streaming literature. However, in practice, such an estimate may not be available; in most cases, the only parameter we are given is a space bound for the number of edges that can be stored. To remedy this discrepancy, we modify our algorithm in experiments as follows.

First, we assume we only have access to the stream, a space parameter $Z$ indicating the maximum number of edges that we are allowed to store, and an estimate of $m$, the number of edges in the stream. Given $Z$, we need to 
designate a portion of it for storing heavy edges, and the rest for storing light edges. The trade-off is that the more heavy edges we store, the smaller our sampling probability of light edges would be. We manage this trade off in our implementation by 
reserving $0.3 \cdot Z$ of the edge `slots' for heavy edges. The constant $0.3$ is \emph{fixed} throughout all our experiments. We then set $p = 0.7 \cdot Z/m$. 
To improve the performance of our algorithm, we optimize for space usage by \emph{always} insuring that we are storing exactly $Z$ space (after observing the first $Z$ edges of the stream). To do so and still maintain our theoretical guarantees, we perform the following procedure. Call the first $Z$ edges of the stream the early phase and the rest of the edges the late phase. 

We always keep edges in the early phase to use our space allocation $Z$ and also keep track of the $0.3$-fraction of the heaviest edges. After the early phase is over, i.e., more than $Z$ edges have passed in the stream, if a new incoming edge is heavier than the lightest of the stored heavy edges, we replace the least heavy stored edge with the new arriving edge and re-sample the replaced edge with probability $p$. Otherwise, if the new edge is not heavier than the lightest stored edge, we sample the new incoming edge with probability $p$. If we exceed $Z$, the space threshold, we replace one of the light edges sampled in the early phase. Then similarly as before, we re-sample this replaced edge with probability $p$ and continue this procedure until one of the early light edges has been evicted. In the case that there are no longer any of the light edges sampled in the early phase stored, we replace a late light edge and then any arbitrary edge (again performing the re-sampling procedure for the evicted edge).

Note that in our modification, we only require our predictor be able to compare the heaviness between two edges, i.e., we do not need the predictor to output an estimate of the number of triangles on an edge. This potentially allows for more flexibility in the choice of predictors.

If the given space bound meets the space requirements of Theorem~\ref{thm:imperfect_arbitrary}, then the theoretical guarantees of this modification simply carry over from Theorem~\ref{theorem_arbitrary}: we always keep the heaviest edges and always sample light edges with probability at least $p$. In case the space requirements are not met, the algorithm stores the most heavy edges as to reduce the overall variance.

\subsection{Additional Figures from Arbitrary Order Triangle Counting Experiments}\label{sec:additional_figures_arbitrary}
Additional figures for our arbitrary order triangle counting experiments are given in Figures \ref{fig:space_additional} and \ref{fig:space_additional_rebuttal}.

\begin{figure}[!ht]
    \centering
    \subfigure[CAIDA $2007$ \label{fig:caida7_space}]{\includegraphics[scale=0.2]{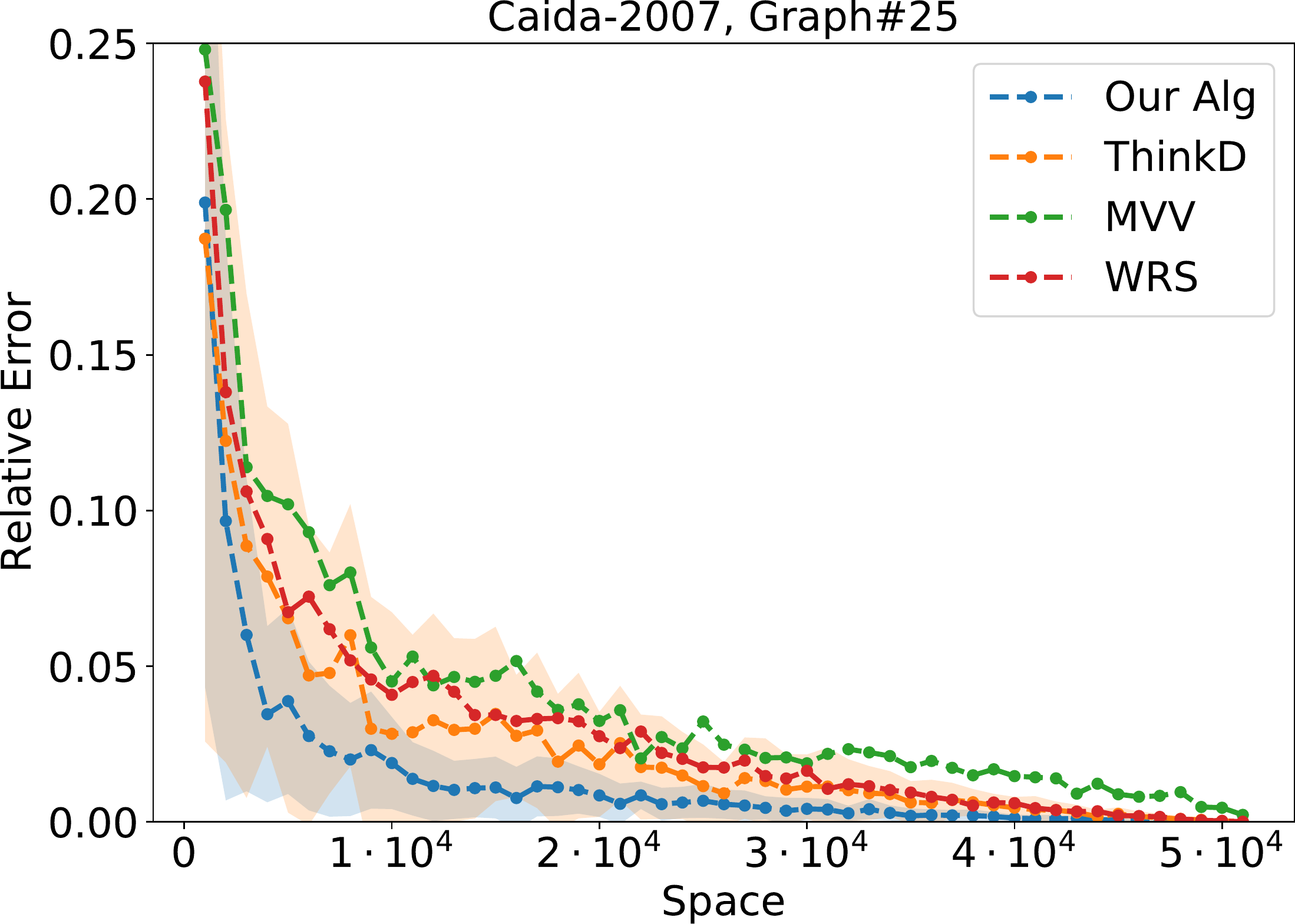}} 
            \subfigure[Twitch \label{fig:twitch_space}]{\includegraphics[scale=0.2]{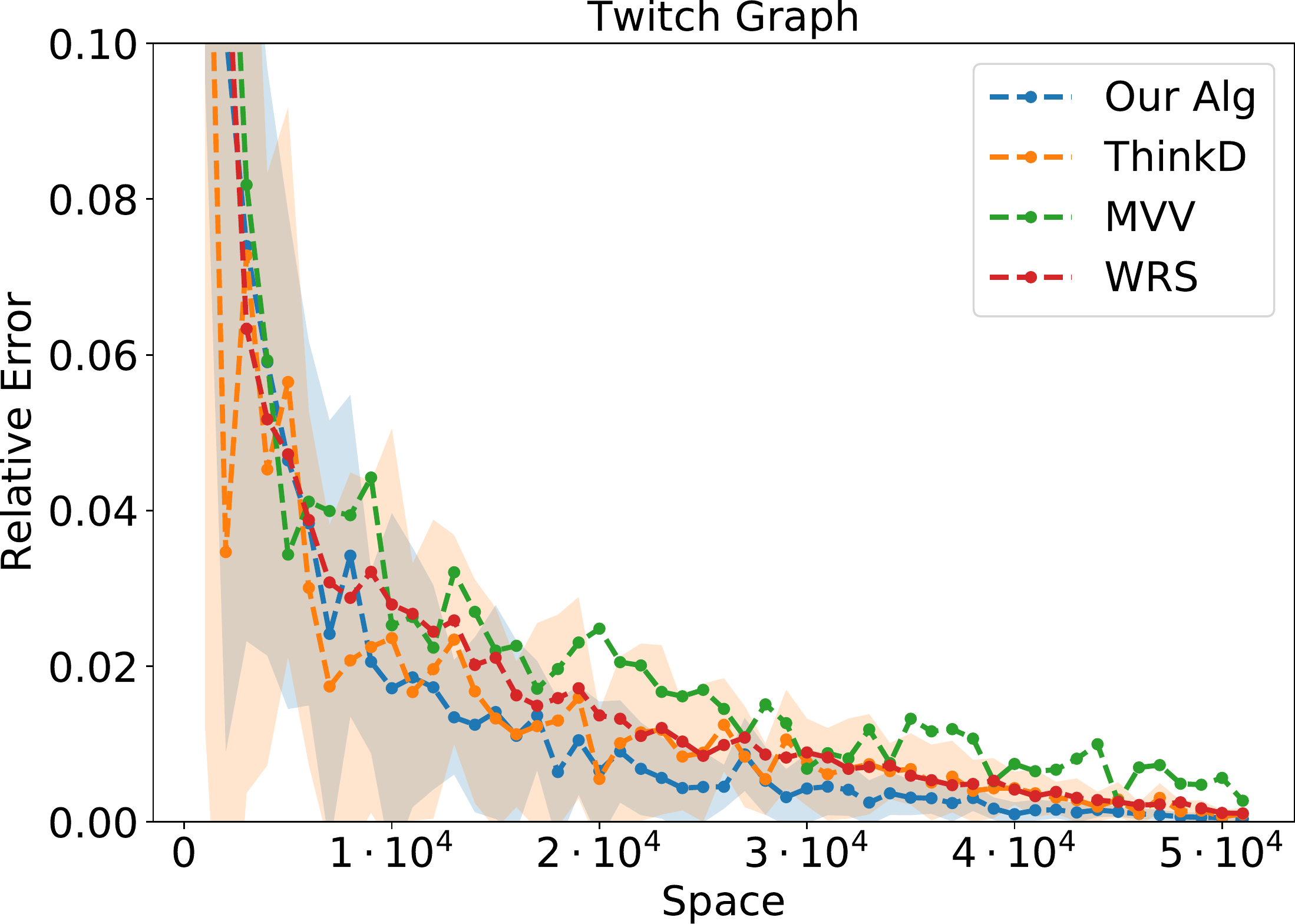}} 
    \caption{Error as a function of space for various graph datasets.  \label{fig:space_additional}}
\end{figure}

\begin{figure}[t]
    \centering
    \subfigure[CAIDA $2006$, Graph $\#20$ \label{fig:caida_add1}]{\includegraphics[scale=0.2]{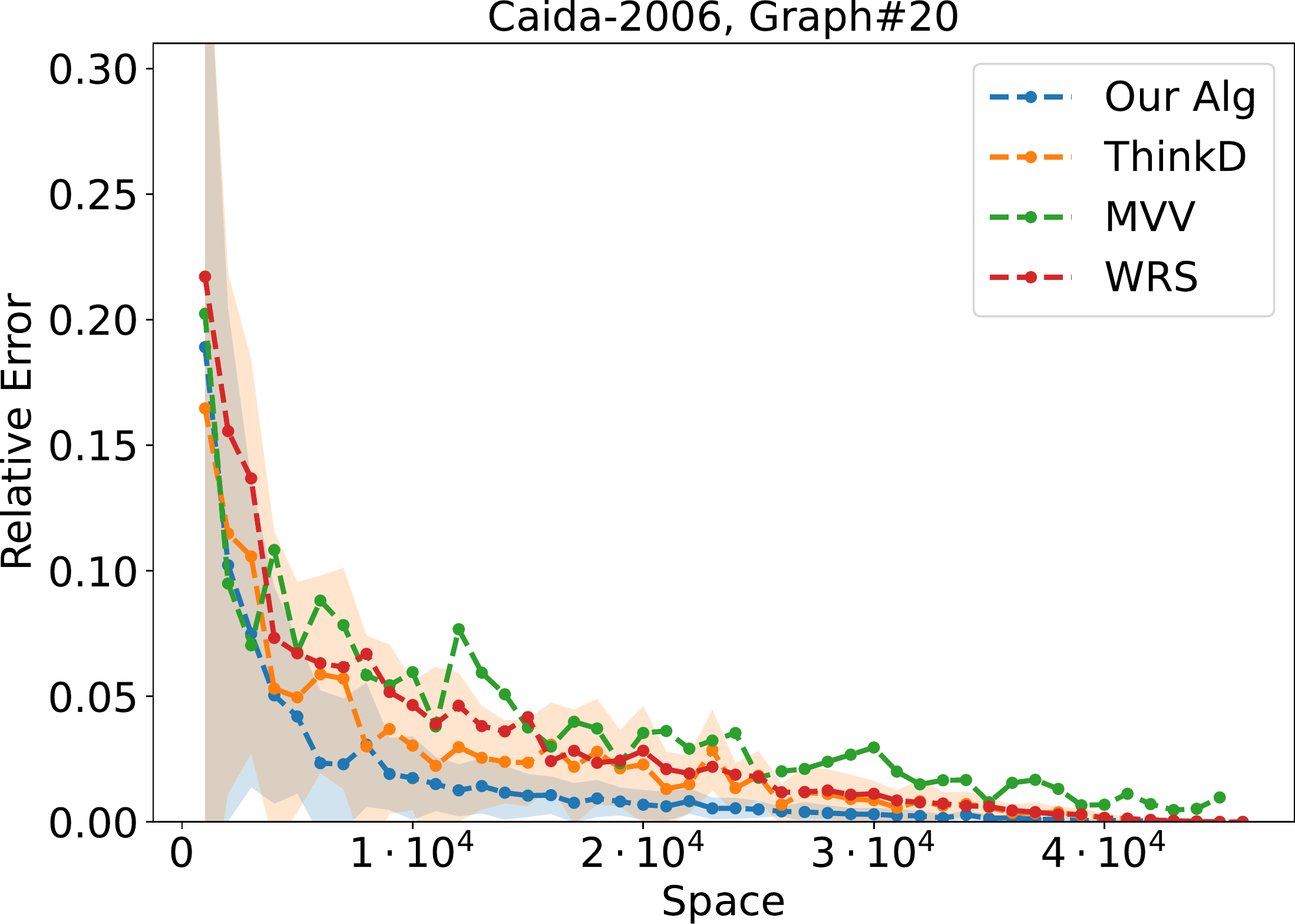}} 
    \subfigure[CAIDA $2006$, Graph $\#40$ \label{fig:caida_add2}]{\includegraphics[scale=0.2]{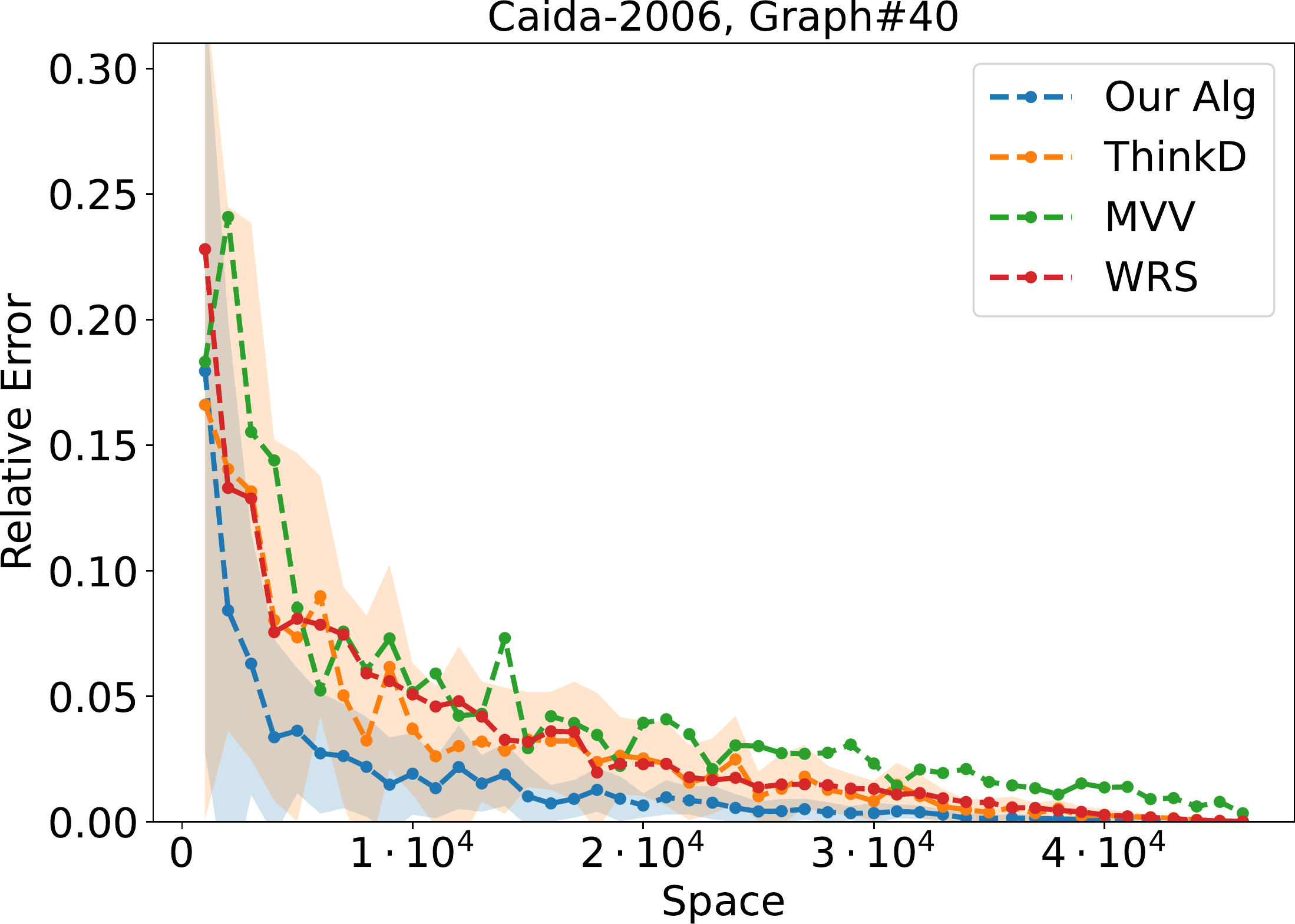}} \\
    \subfigure[Oregon, Graph $\#3$ \label{fig:oregon_add1}]{\includegraphics[scale=0.2]{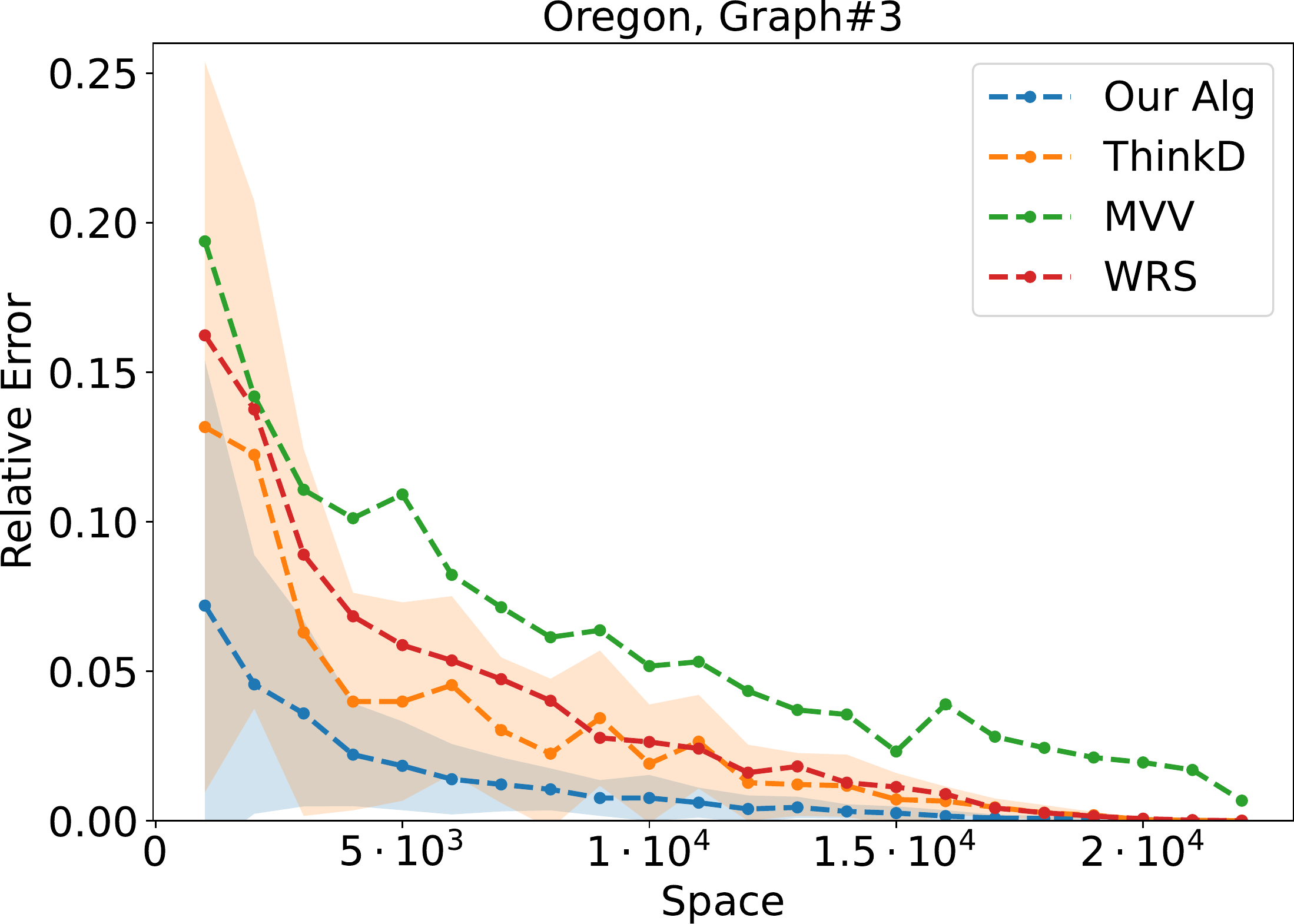}} 
    \subfigure[Oregon, Graph $\#7$ \label{fig:oregon_add2}]{\includegraphics[scale=0.2]{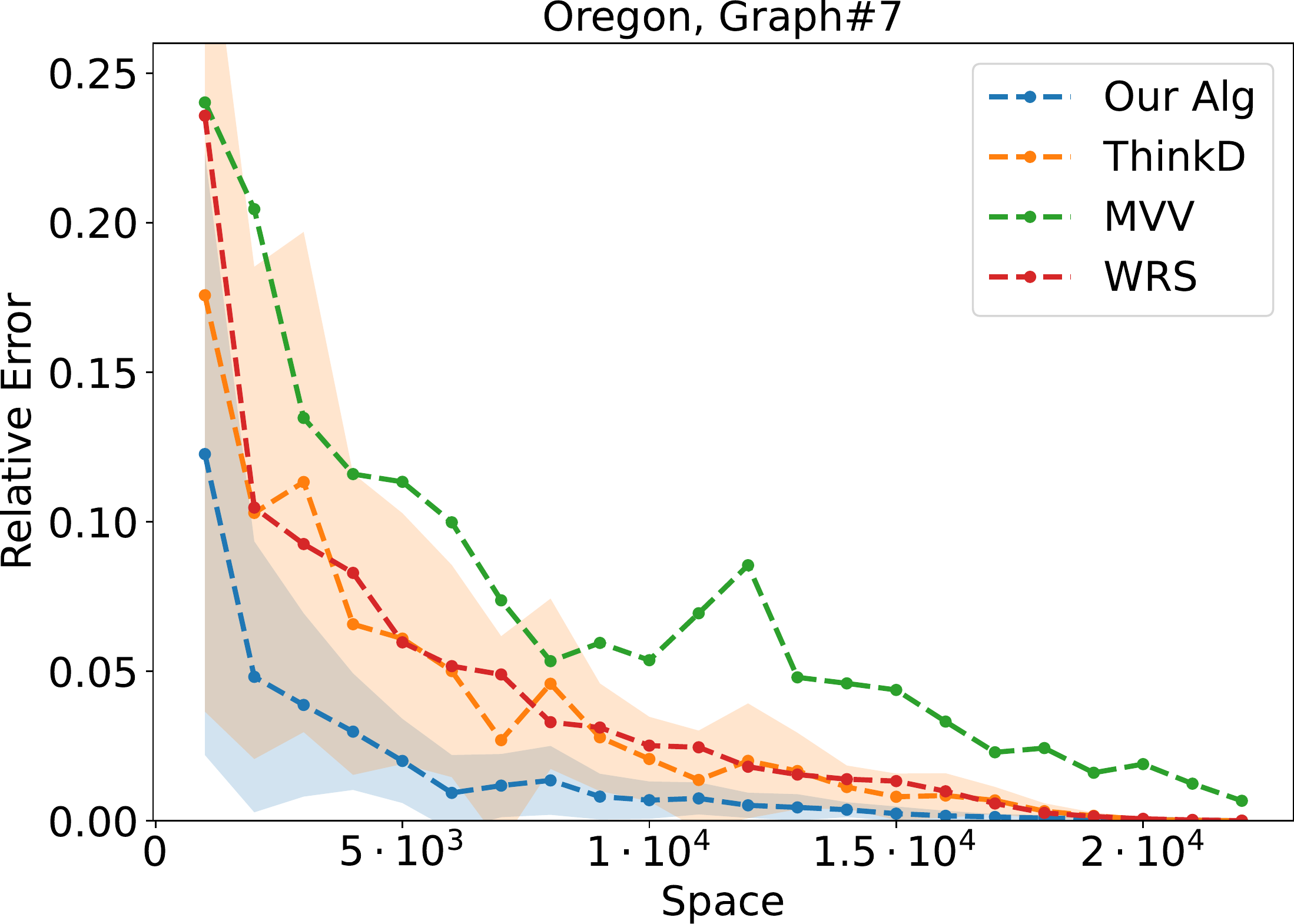}} \\
    \caption{Error as a function of space for various graph datasets.  \label{fig:space_additional_rebuttal}}
\end{figure}

\subsection{Experimental Design for Adjacency List Experiments}\label{sec:adjacency_experiments_info}
We now present our adjacencly list experiments. At a high level overview, similarly to the arbitrary order experiments, for our learning-based algorithm, we reserve the top $10\%$ of the total space for storing the heavy edges.
To do this in practice, we can maintain the heaviest edges currently seen so far and evict the smallest edge in the set when a new heavy edge is predicted by the oracle and we no longer have sufficient space.
We also consider a multi-layer sub-sampling version of the algorithm in Section~\ref{sec:adj_value_oracle}. Here we use more information from the oracle by adapting the sub-sampling rates of edges based on their predicted value. For more details, see Section \ref{sec:adjacency_experiments_info}. Our results are presented in Figure \ref{fig:space_adj_list} (with additional plots given in Figure \ref{fig:space_adj_list_additional}). Our algorithms soundly outperform the MVV baseline for most graph datasets. We only show the error bars for the multi-layer algorithm and MVV for clarity. Additional details follow.

We use the same predictor for $N_{xy}$ in Section~\ref{sec: experiments} as a prediction for $R_{xy}$. The experiment is done under a random vertex arrival order. For the learning-based algorithm, suppose $Z$ is the maximum number of edges that we are allowed to store. we set the $k = Z / 10$ edges with the highest predicted $R_{uv}$ values to be the heavy edges, and store them during the stream (i.e., we use $10\%$ of the total space for storing the heavy edges). For the remaining edges, we use a similar sampling-based method. Note that it is impossible to know the $k$-heaviest edges before we see all the edges in the stream. However, in the implementation we can maintain the $k$ heaviest edges we currently have seen so far, and evict the smallest edge in the set when a new heavy edge is predicted by the oracle. 

We also consider the multi-layer sub-sampling algorithm mentioned in Section~\ref{sec:adj_value_oracle}, which uses more information from the oracle. We notice that in many graph datasets, most of the edges having very few number of triangles attached to. Taking an example of the Oregon and CAIDA graph, only about 3\%-5\% of edges will satisfy $R_e \ge 5$ under a random vertex arrival order. Hence, for this edges, intuitively we can estimate them using a slightly smaller space.

For the implementation of this algorithm(we call it multi-layer version), we use $10\%$ of the total space for storing the top $k = Z / 10$ edges, and $70\%$ of the space for sub-sampling the edges that the oracle predict value is very tiny(the threshold may be slightly different for different datasets, like for the Oregon and CAIDA graph, we set the threshold to be 5). Then, we use $20\%$ of the space for sub-sampling the remaining edges, for which we call them the medium edges.

\subsection{Figures from Adjacency List Triangle Counting Experiments}\label{sec:additional_figures_adjacency}

\begin{figure}[!ht]
    \centering
    \subfigure[CAIDA $2006$ \label{fig:caida6_space_list}]{\includegraphics[scale=0.18]{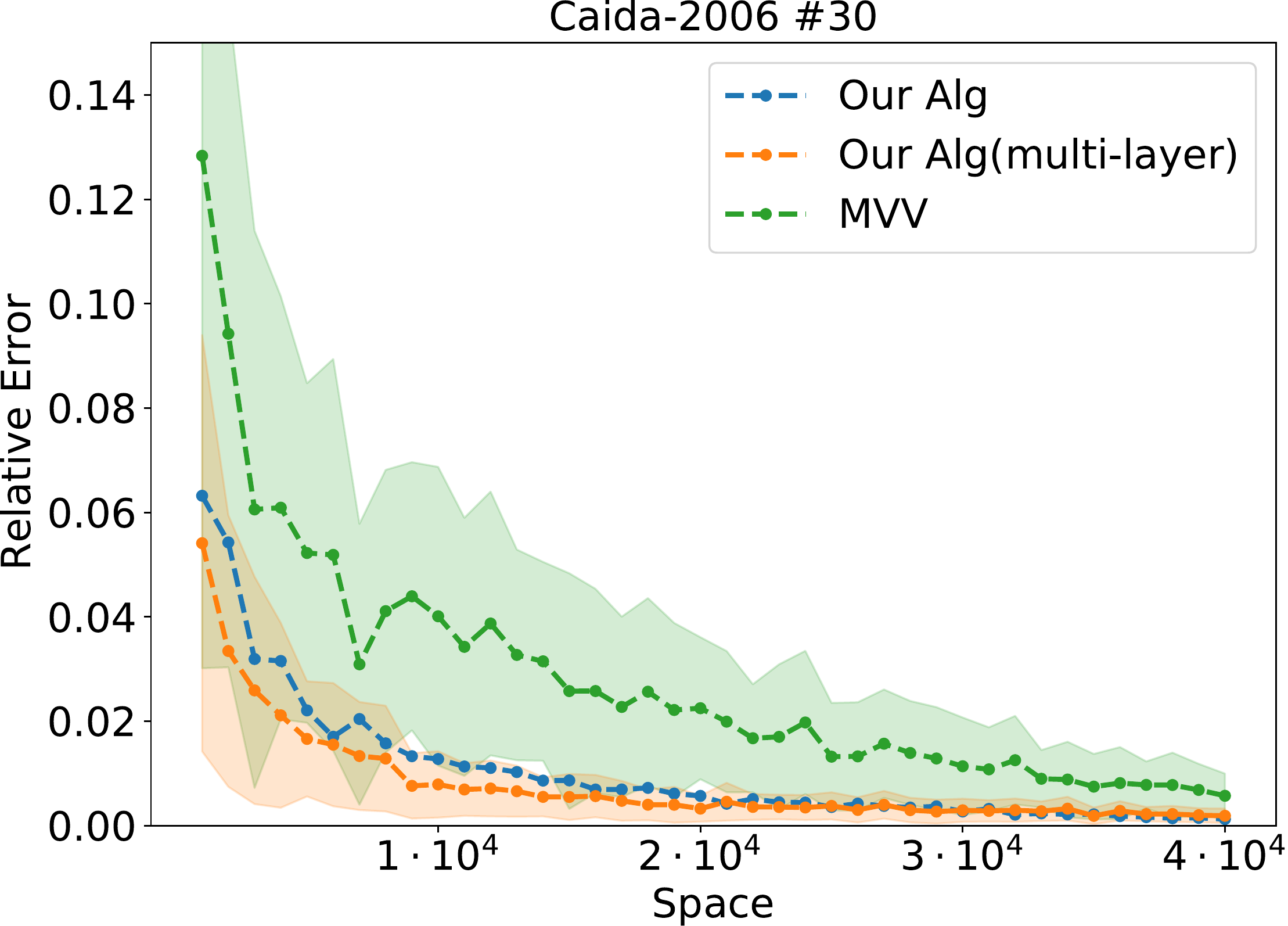}} 
    \subfigure[Wikipedia \label{fig:wikipedia_space_list}]{\includegraphics[scale=0.18]{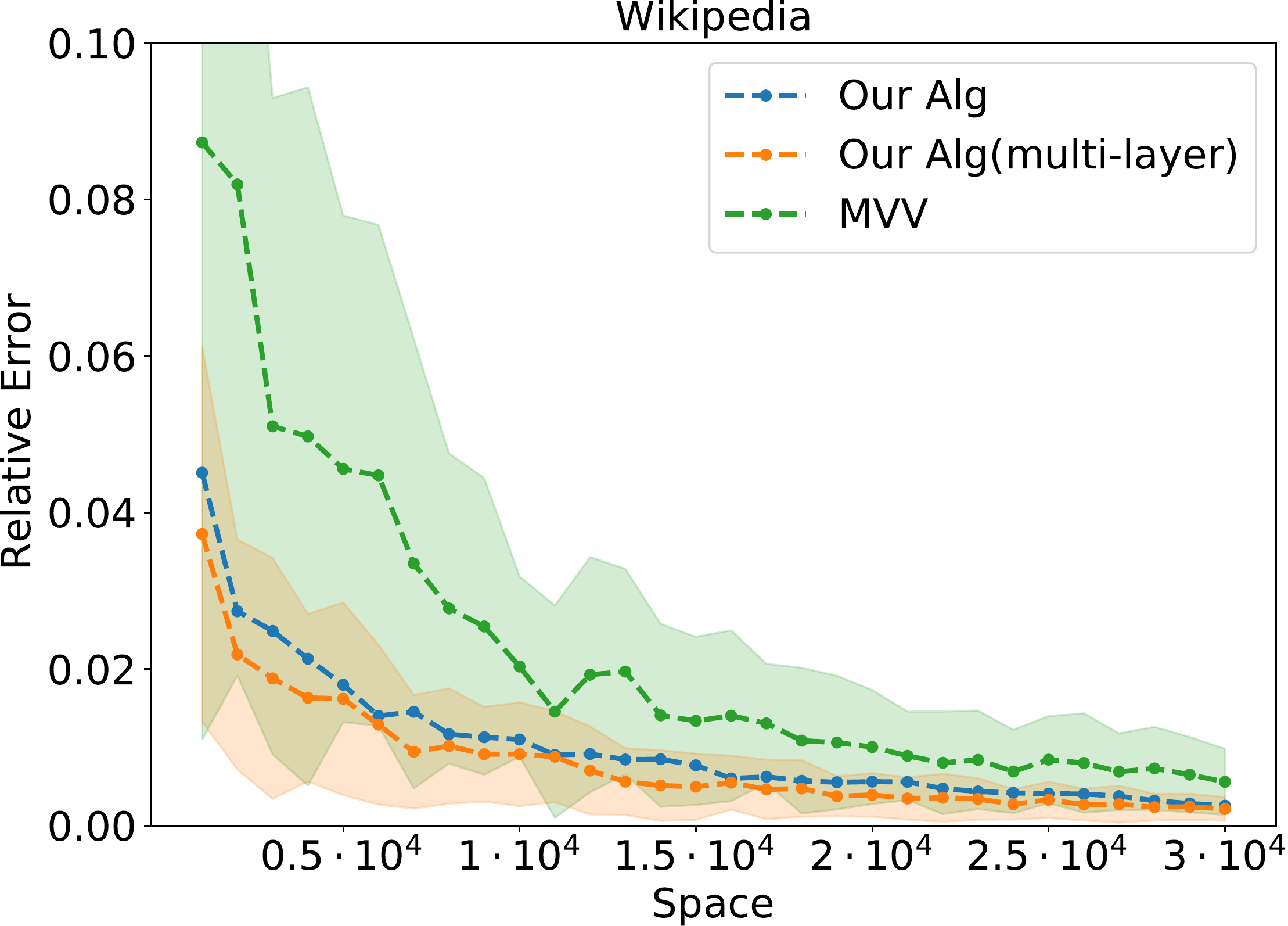}}
    \subfigure[Powerlaw \label{fig:random_space_list}]{\includegraphics[scale=0.18]{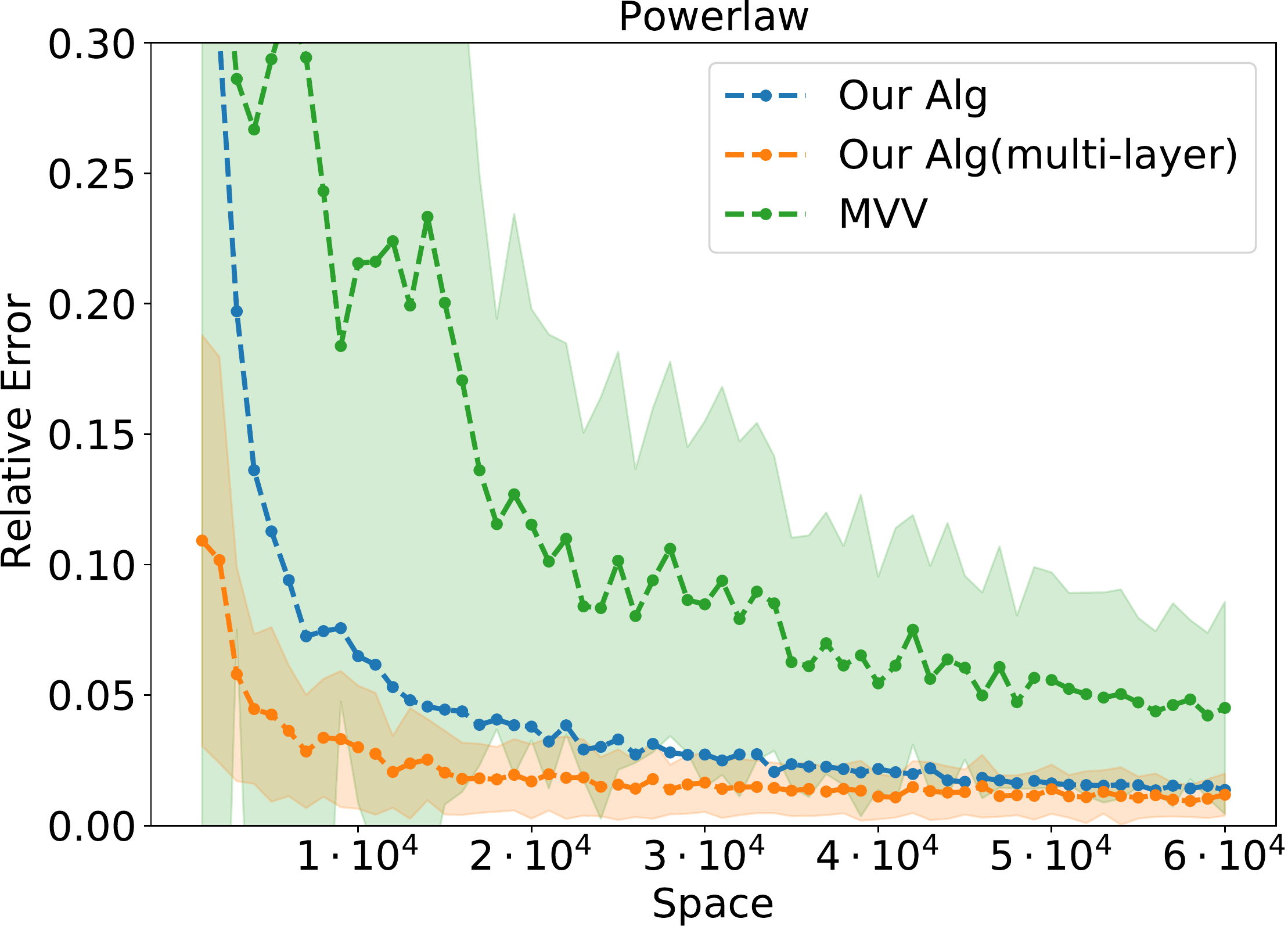}} 
    \setlength{\belowcaptionskip}{-10pt}
    \caption{Error as a function of space in the adjacency list model.  \label{fig:space_adj_list}}
\end{figure}

Additional figures from the adjacency list triangle counting experiments are shown in Figure \ref{fig:space_adj_list_additional}. They are qualitatively similar to the results presented in Figure \ref{fig:space_adj_list} as the multi-layer sampling algorithm is superior over the MVV baseline for all of our datasets.
\begin{figure}[!ht]
    \centering
    \subfigure[Oregon \label{fig:oregon1_space_list}]{\includegraphics[scale=0.2]{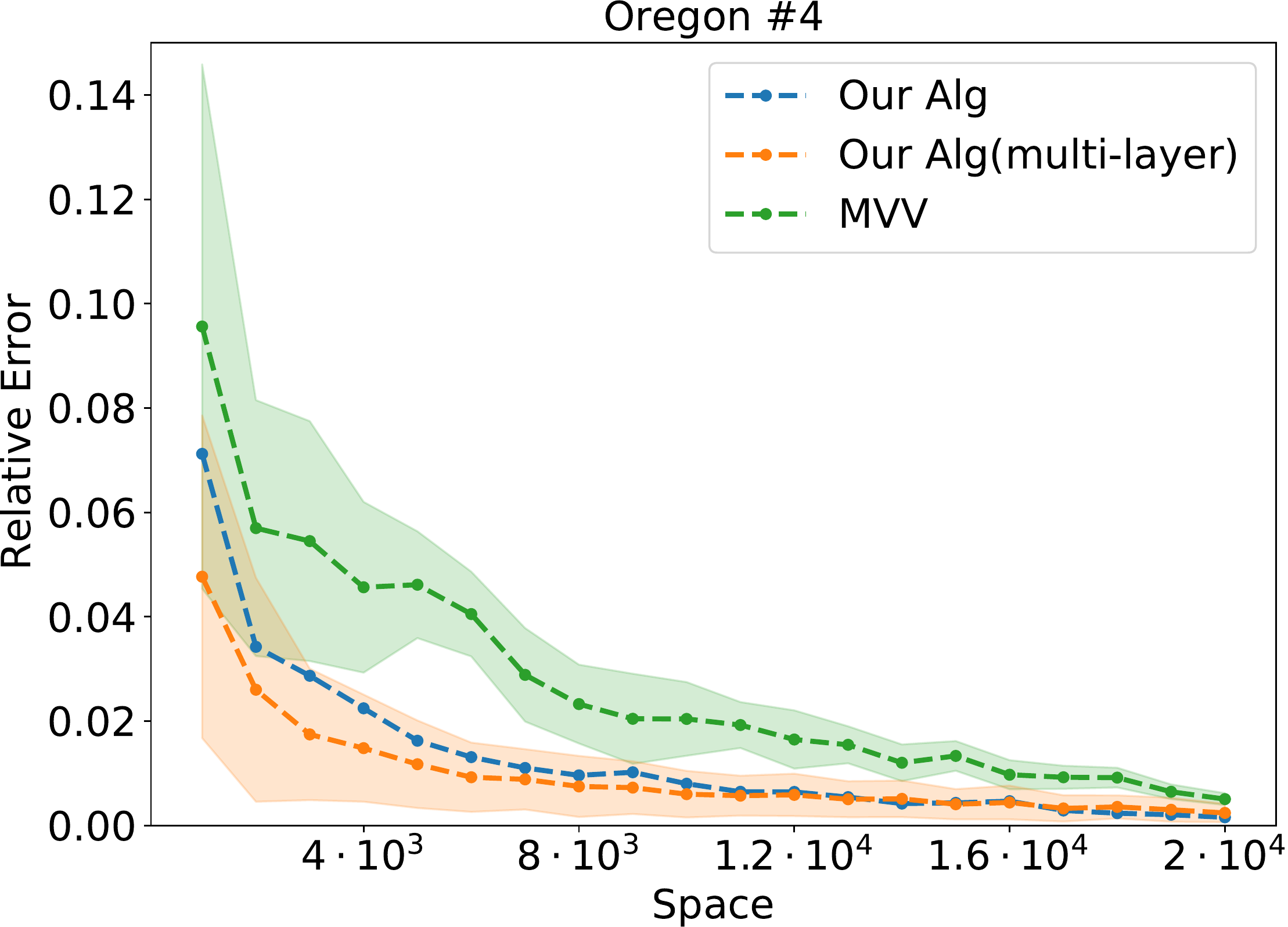}} 
    \subfigure[CAIDA $2007$ \label{fig:caida7_space_list}]{\includegraphics[scale=0.2]{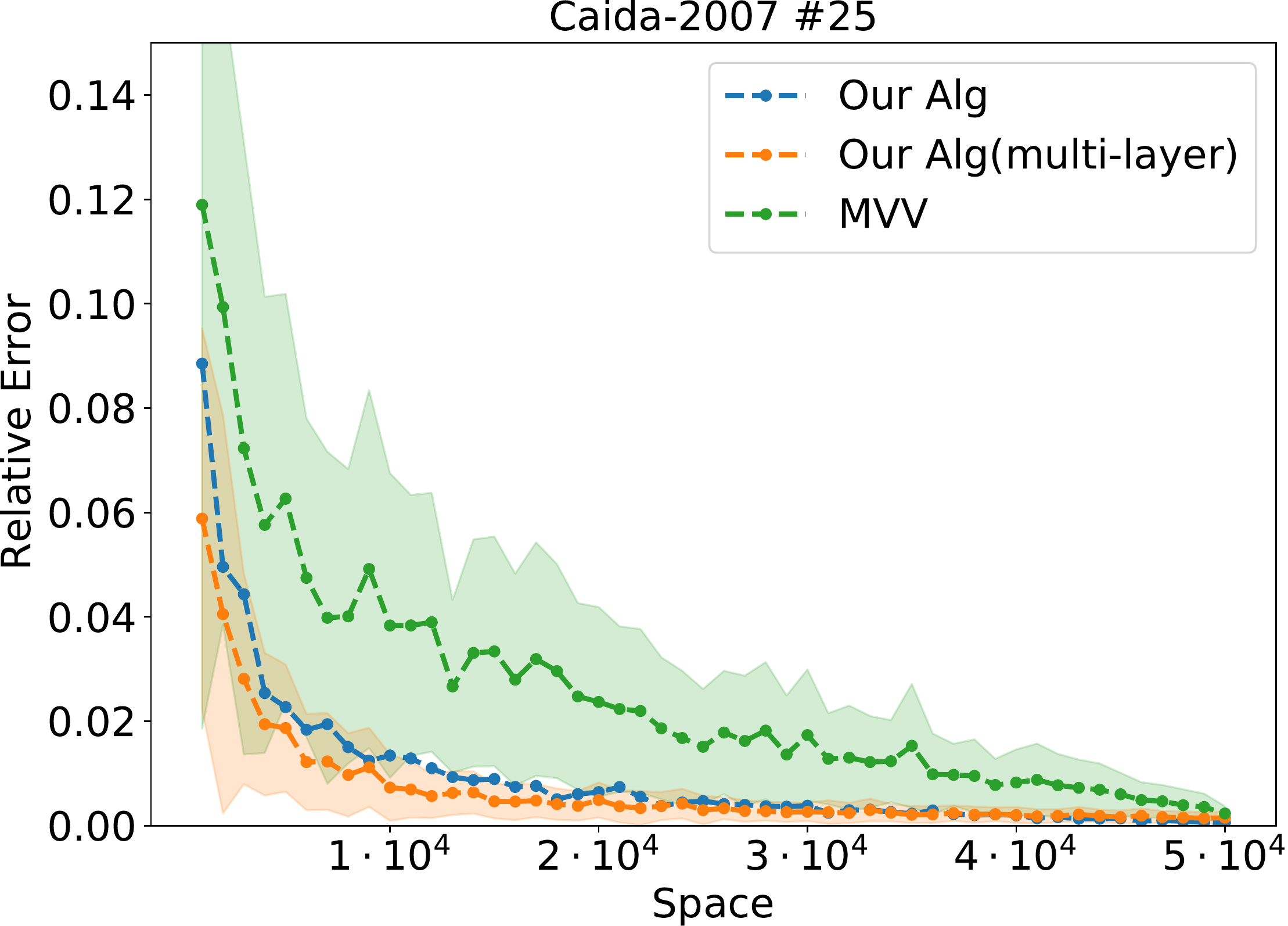}} 
    \subfigure[Wikibooks \label{fig:wiki_space_list}]{\includegraphics[scale=0.2]{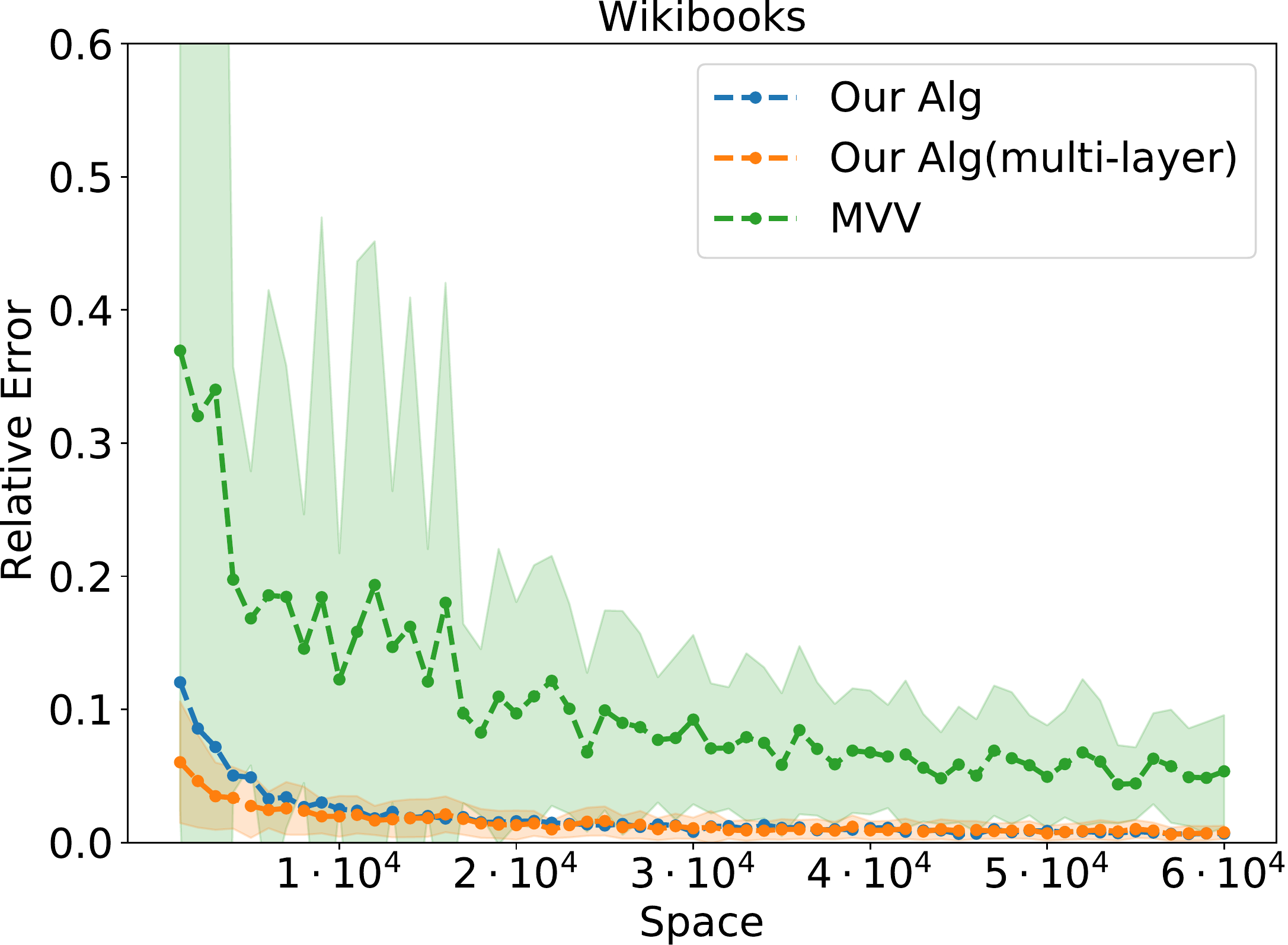}} \\
    \subfigure[Twitch \label{fig:twitch_space_list}]{\includegraphics[scale=0.2]{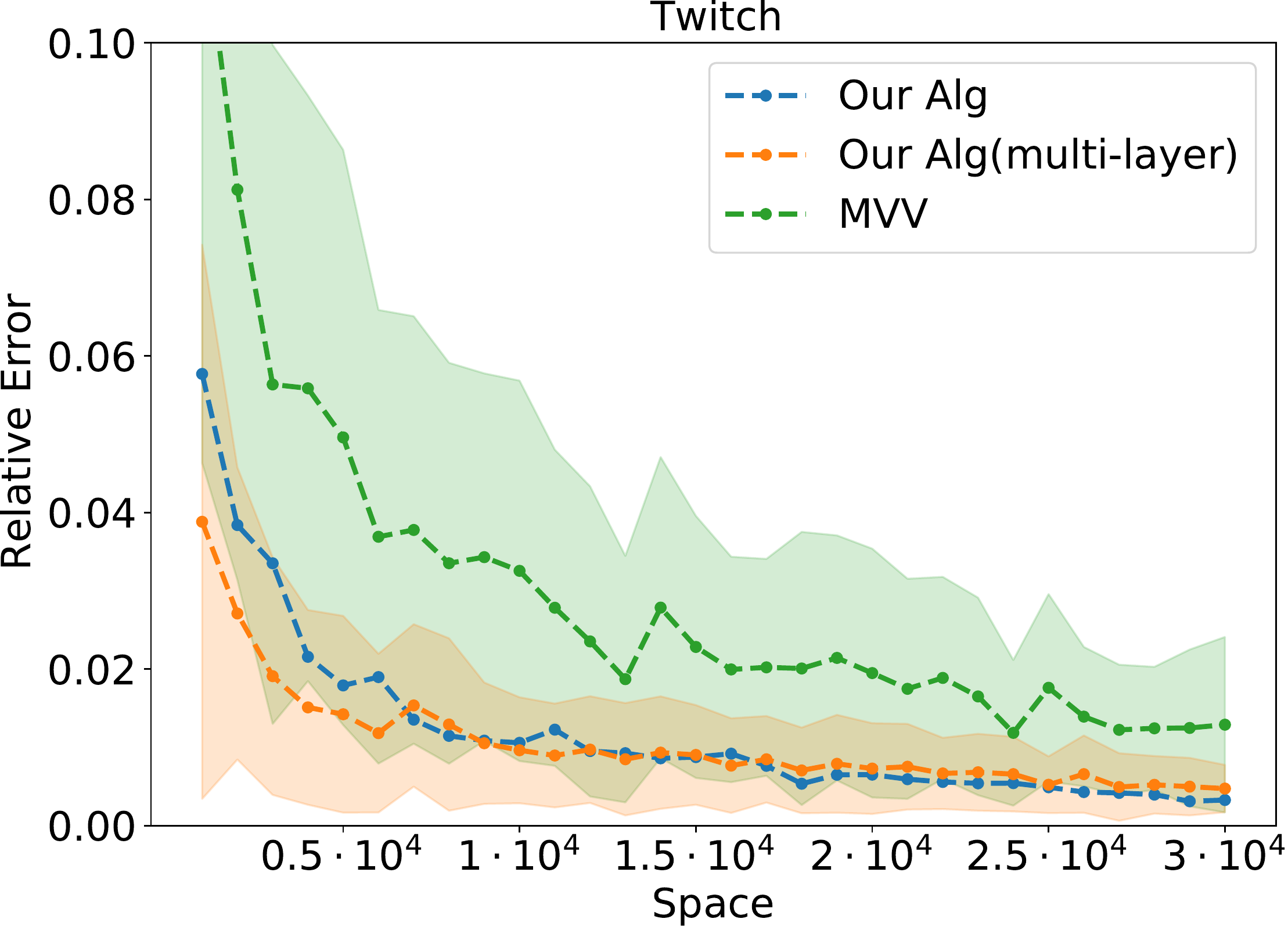}}
    \caption{Error as a function of space for various graph datasets.  \label{fig:space_adj_list_additional}}
\end{figure}

\subsection{Accuracy of the Oracle}

In this section, we evaluate the accuracy of the predictions the oracle gives in our experiments. 
\paragraph{Value Prediction Oracle}: We use the prediction of $N_{xy}$ in Section~\ref{sec: experiments} as a value prediction for $R_{xy}(x \le_s y)$, under a fixed random vertex arrival order. The results are shown in Figure~\ref{fig:adj_oracle_error}. For a fixed approximation factor $k$, we compute the failure probability $\delta$ of the value prediction oracle as follows: $\delta = \# / m$, where $m$ is the number of total edges and $\#$ equals to the number of edges $e$ that $p(e) \ge k \alpha R_e + \beta$ or $p(e) \le \frac{1}{k} R_e - \beta$, respectively. Here we set $\alpha = 1, \beta = 10$ for all graph datasets. 

We can see for the smaller side, there are very few edges $e$ such that $p(e) \le \frac{1}{k}R_e - \beta$. This meets the assumption of the exponential decay tail bound of the error. For the larger side, it also meets the assumption of the linear decay tail bound on most of the graph datasets.

\begin{figure}[!ht]
    \centering
    \subfigure[Oregon \label{fig:oregon_error}]{\includegraphics[scale=0.2]{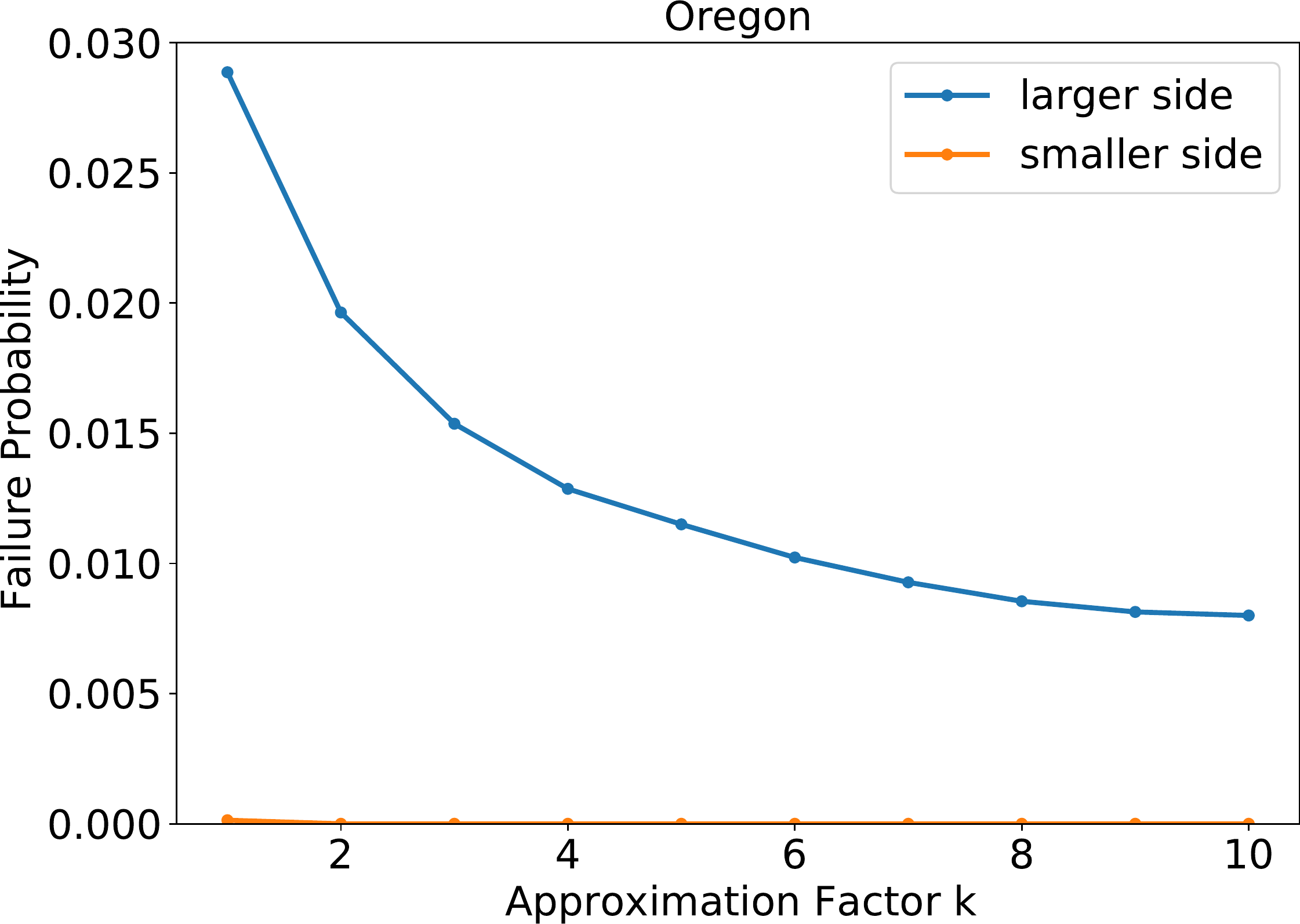}} 
    \subfigure[CAIDA $2006$
    \label{fig:caida6_error}]{\includegraphics[scale=0.2]{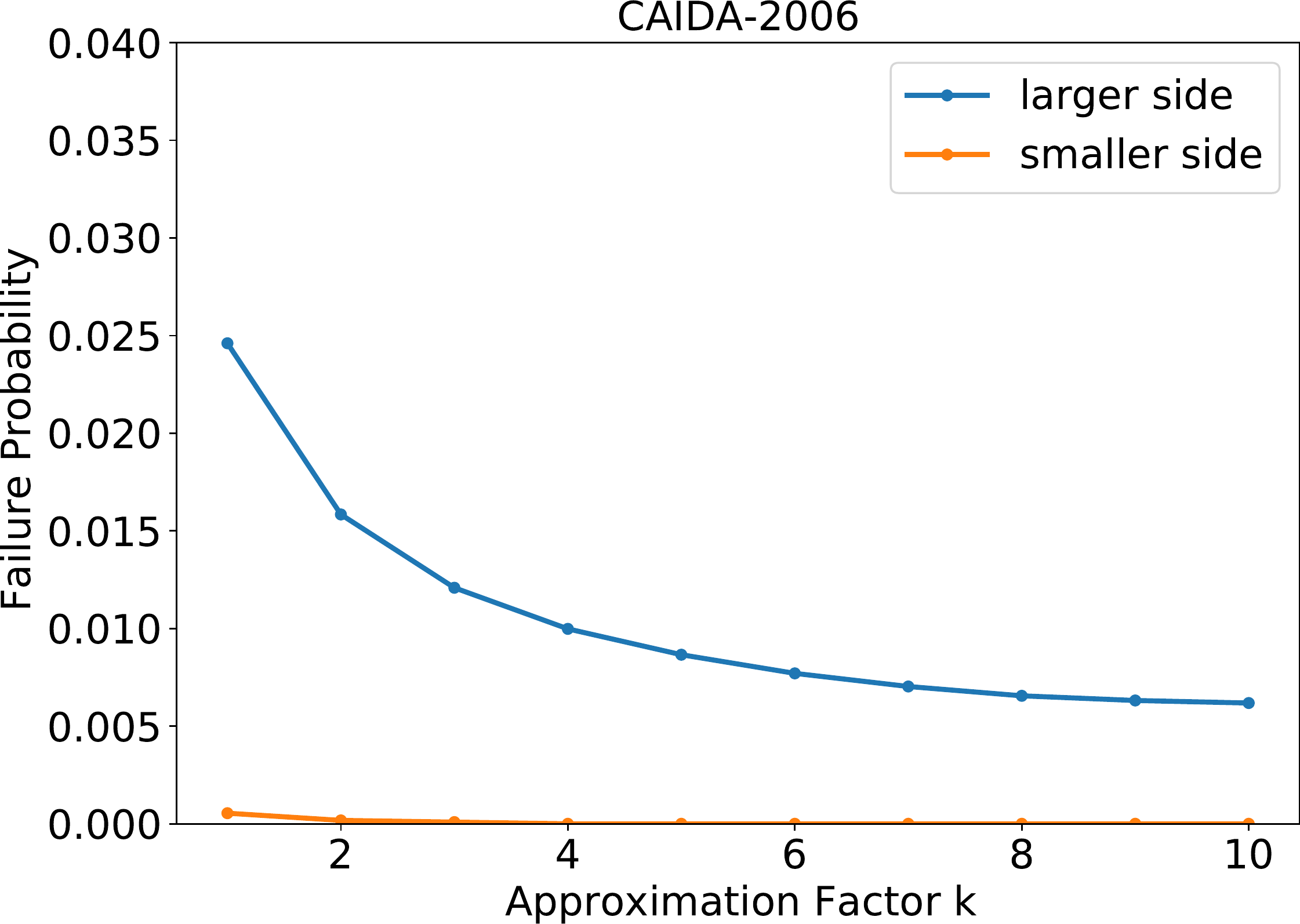}} 
    \subfigure[CAIDA $2007$
    \label{fig:caida7_error}]{\includegraphics[scale=0.2]{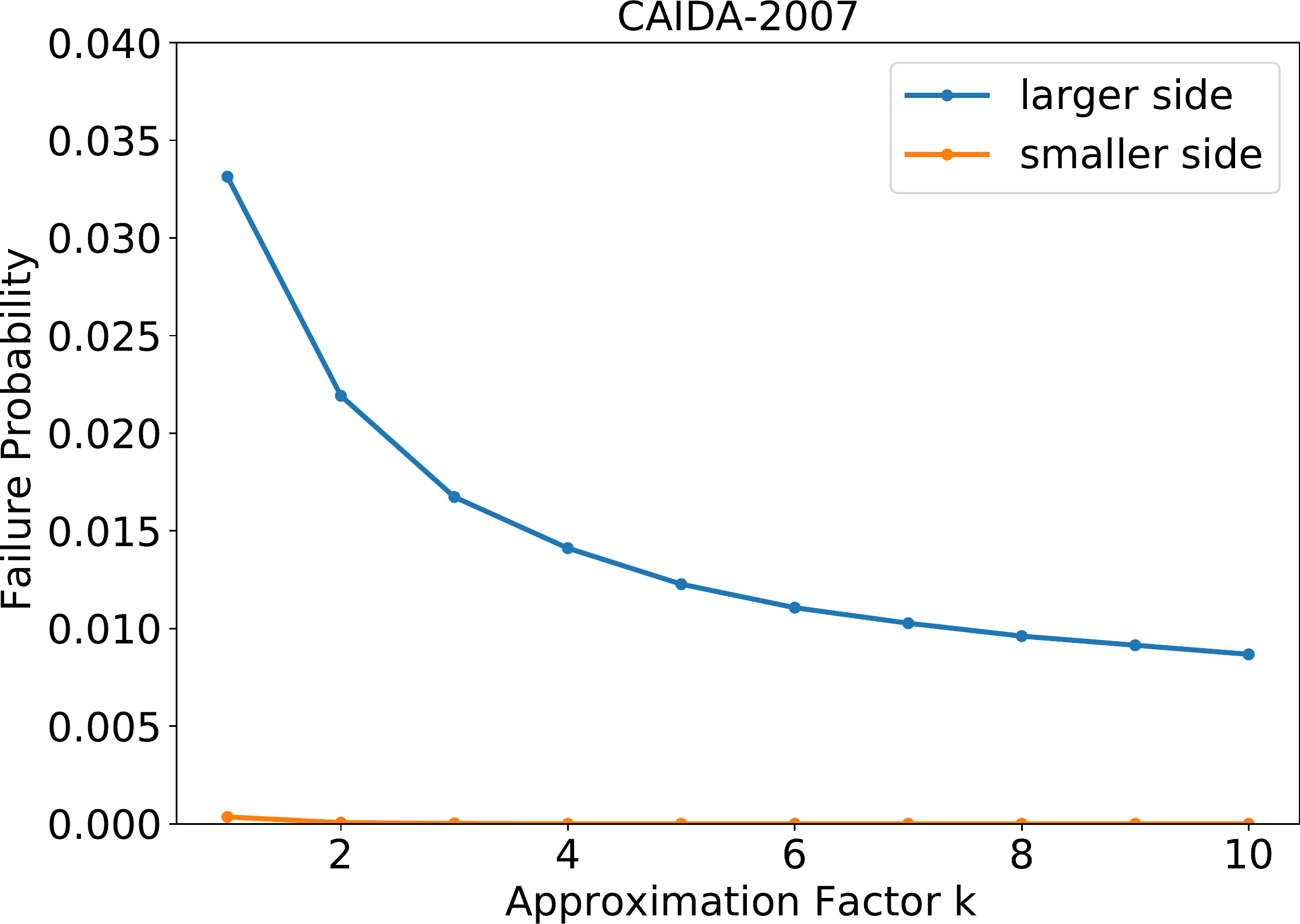}} 
    \subfigure[Wikibooks \label{fig:wikibooks_error}]{\includegraphics[scale=0.2]{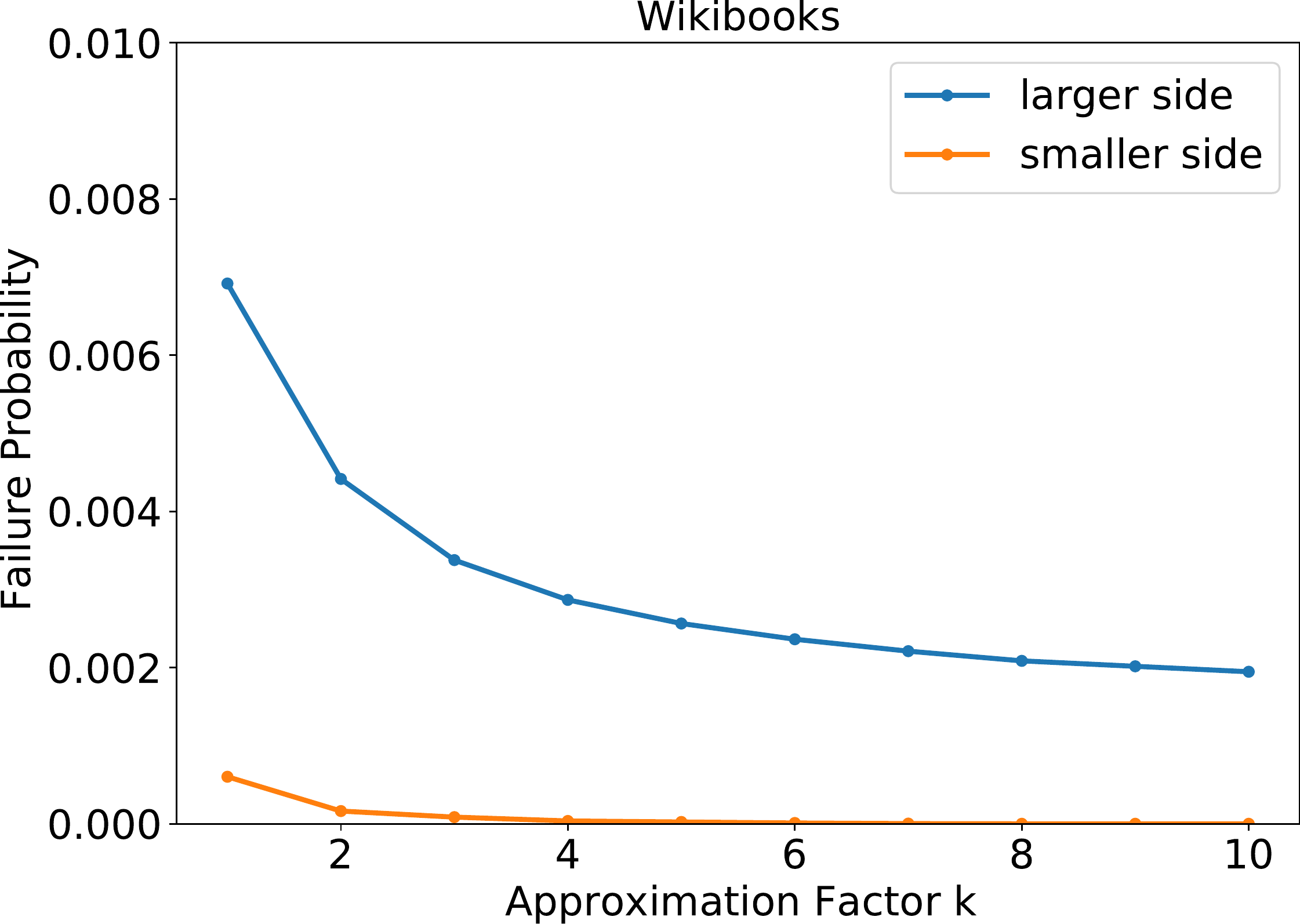}}
    \subfigure[Twitch \label{fig:twitch_error}]{\includegraphics[scale=0.2]{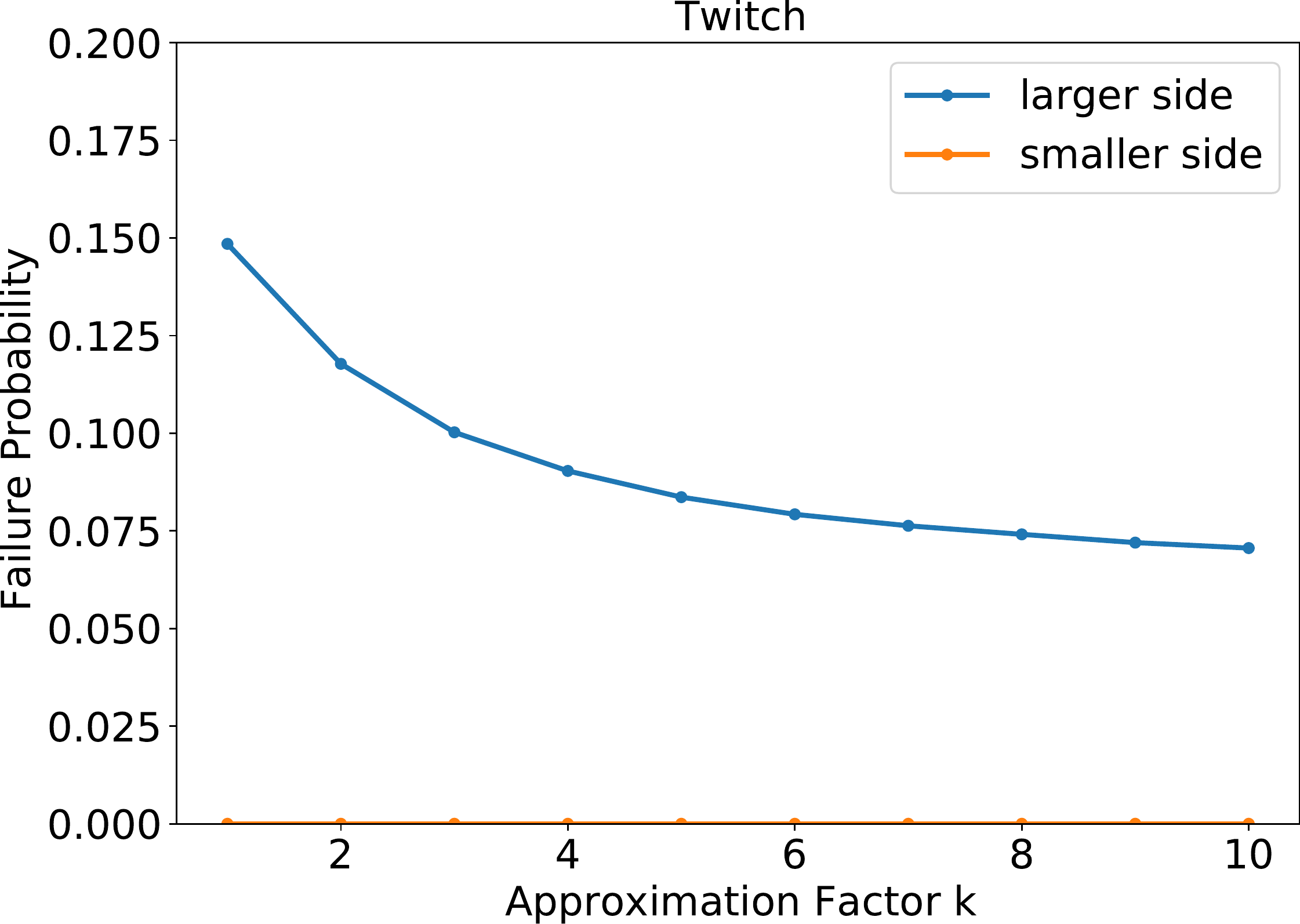}}
    \subfigure[Wikipedia \label{fig:wikipedia_error}]{\includegraphics[scale=0.2]{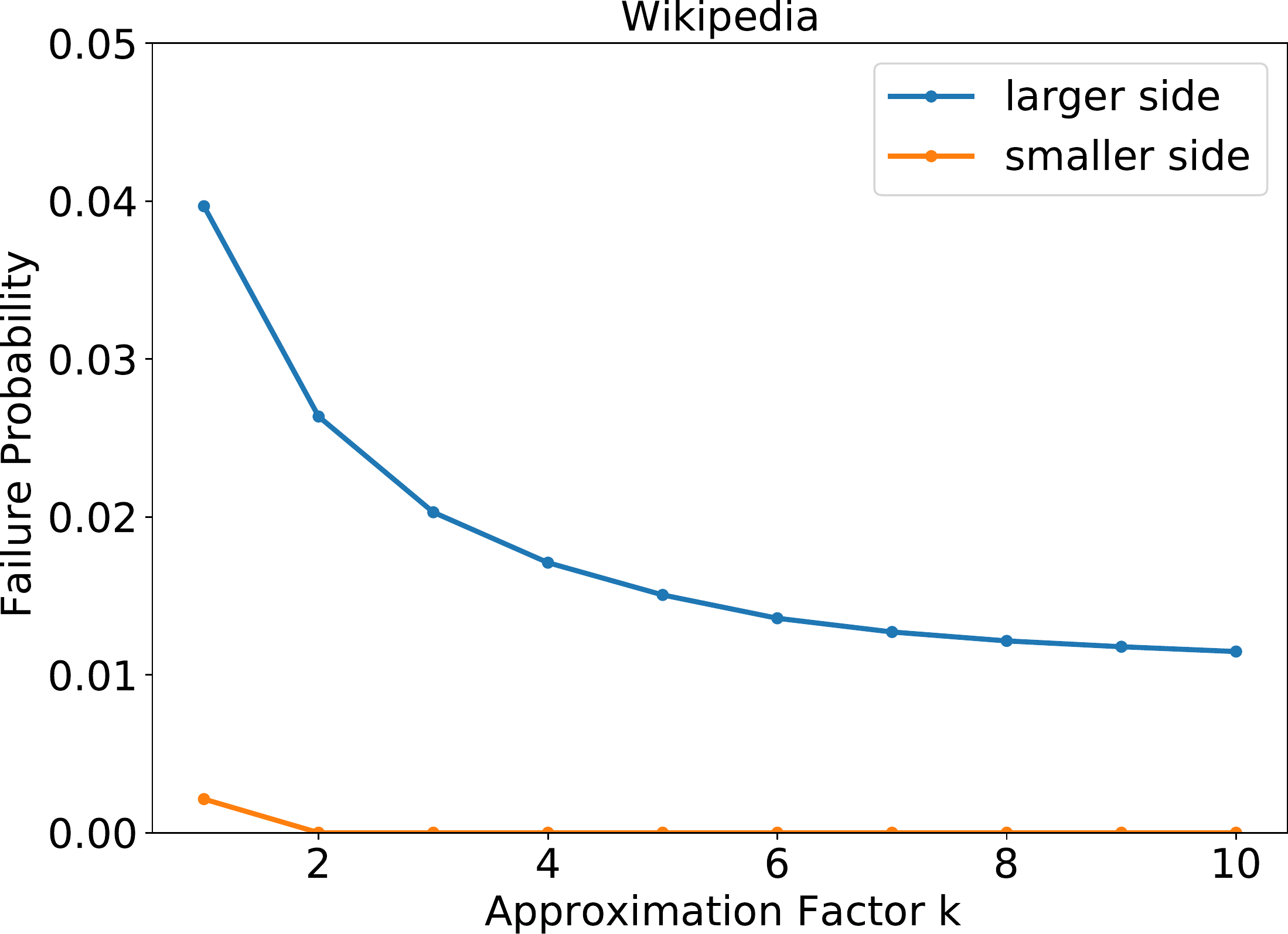}} \\
    \subfigure[Powerlaw \label{fig:random_error}]{\includegraphics[scale=0.2]{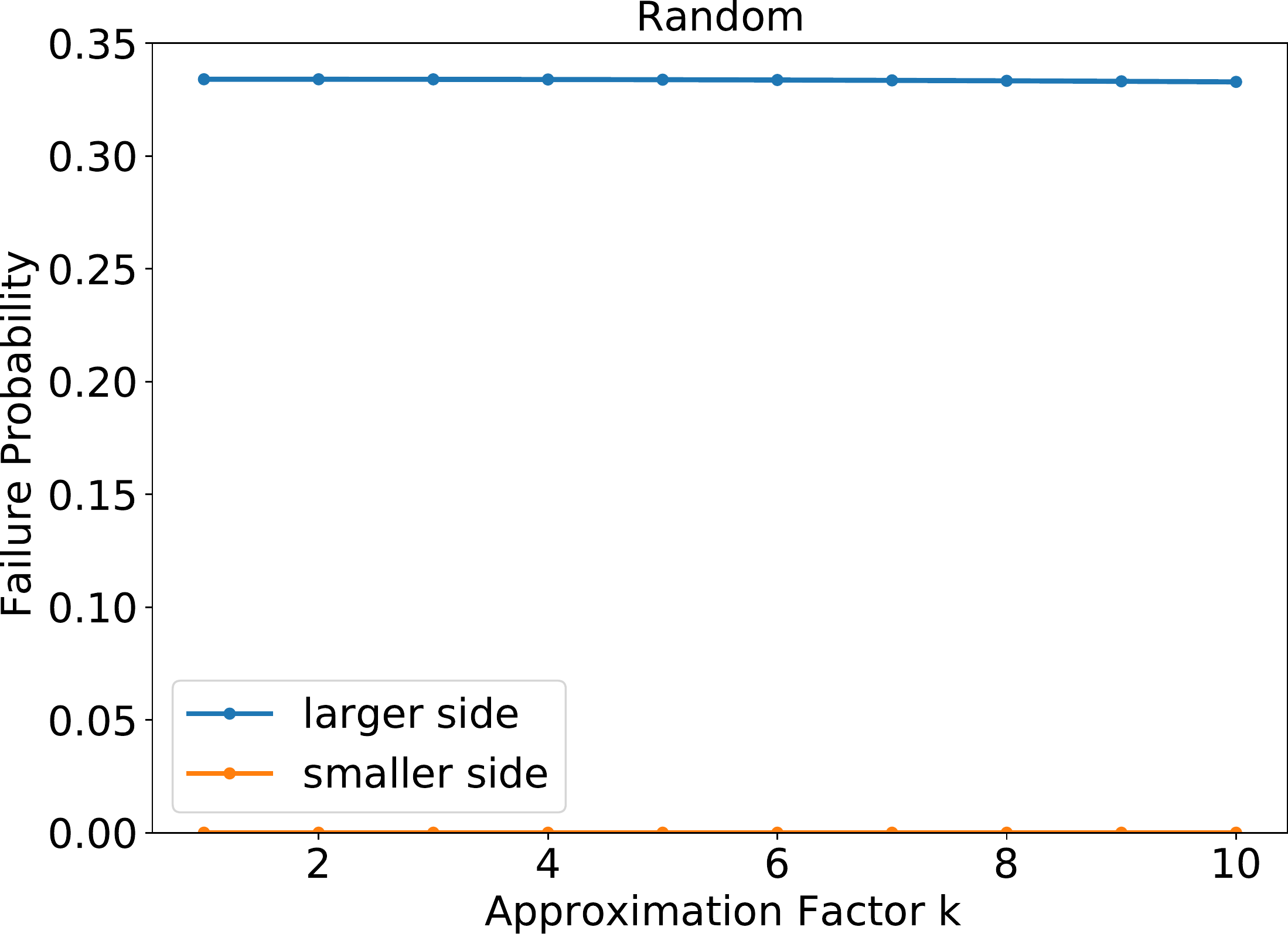}}
    \caption{Failure probability as a function of approximation factor $k$ for various graph datasets.  \label{fig:adj_oracle_error}}
\end{figure}

\label{sec:oracle_accuracy}

\subsection{Details on Oracle Training Overhead}

The overhead of the oracles used in our experiments vary from task to task. For the important use case illustrated by the Oregon and CAIDA datasets in which we are interested in repeatedly counting triangles over many related streams, we can pay a small upfront cost to create the oracle which can be reused over and over again. Thus, the time complexity of building the oracle can be amortized over many problem instances, and the space complexity of the oracle is relatively small as we only need to store the top $10\%$ of heavy edges from the first graph (a similar strategy is used in prior work on learning-augmented algorithms in \cite{hsu2019learning}). To give more details, for this snapshot oracle, we simply calculate the $N_e$ values for all edges only in the first graph. We then keep the heaviest edges to form our oracle. Note that this takes polynomial time to train. The time complexity of using this oracle in the stream is as follows: when an edge in the stream comes, we simply check if it’s among the predicted heavy edges in our oracle. This is a simple lookup which can even be done in constant time using hashing.

For learning-based oracles like the linear regression predictor, we similarly need to pay an upfront cost to train the model, but the space required to store the trained model depends on the dimension of the edge features. For the Reddit dataset with $\sim 300$ features, this means that the storage cost for the oracle is a small fraction of the space of the streaming algorithm. Training of this oracle can be done in polynomial time and can even be computed in a stream via sketching and sampling techniques which reduce the number of constraints from m (number of edges) to roughly linear in the number of features.

Our expected value oracle exploits the fact that our input graph is sampled from the CLV random graph model. Given this, we can explicitly calculate the expected value of $N_e$ for every edge which requires no training time and nearly constant space. For training details for our link prediction model, see Section \ref{sec:prediction_oracle_details}. Note that in general, there is a wide and rich family of predictors which can be trained using sublinear space or in a stream such as regression \cite{Woodruff2014SketchingAA}, classification for example using SVMs \cite{Andoni2020StreamingCO, Rai2009StreamedLO} and even deep learning models\cite{Gomes2019MachineLF}. 

\section{Implicit Predictor in Prior Works }\label{sec:justify_noisy}

We prove that the first pass of the two pass 
triangle counting Algorithm given in Section $3.1$ of \cite{MVV16}, satisfies the conditions of the $K$-noisy oracle in Definition~\ref{dfn:noisy-oracle}. Therefore, our work can be seen as a generalization of their approach when handling multiple related data sets, where instead of performing two passes on each data set and using the first of which to train the heavy edge oracle, we perform the training once according to the first related dataset, and we get a one pass algorithm for all remaining sets.

We first recall the first pass of~\cite[Section 3.1]{MVV16}:
\begin{enumerate}
    \item Sample each node $z$ of the graph with probability $p= C \eps^{-2} \log m / \rho$ for some large constant $C > 0$. Let $Z$ be the set of sampled nodes.
    \item Collect all edges incident on the set $Z$.
    \item For any edge $e=\{u,v\}$, let $\tilde{t}(e)= |\{z \in Z : u,v \in N(z)\}|$ and define the oracle as 
     \[ \text{oracle}(e) =
 \begin{cases} \textsc{light} &\mbox{if } \tilde{t}(e) < p \cdot \rho \\
\textsc{heavy} & \mbox{if } \tilde{t}(e) \ge p \cdot \rho.  \end{cases} 
\]
\end{enumerate}
\begin{lemma}
For any edge $e = (u,v)$, $\textup{oracle}(e) = \textsc{light}$ implies $N_e \le 2 \rho$ and $\textup{oracle}(e) = \textsc{heavy}$ implies $N_e > \rho / \sqrt{2}$ with failure probability at most $1/n^{10}$.
\end{lemma}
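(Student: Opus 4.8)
The plan is to analyze the random variable $\tilde t(e) = |\{z \in Z : u,v \in N(z)\}|$, which counts the common neighbors $z$ of $u$ and $v$ that happen to land in the sampled node set $Z$. Each vertex enters $Z$ independently with probability $p = C\eps^{-2}\log m/\rho$, so $\tilde t(e)$ is a sum of independent Bernoulli indicators with mean $\mathbb{E}[\tilde t(e)] = p\cdot N_e$, where $N_e$ is the true number of triangles incident to $e$ (i.e. the number of common neighbors of $u$ and $v$). The oracle declares $e$ heavy iff $\tilde t(e) \ge p\rho$, i.e. iff the empirical count exceeds $p$ times the threshold. The whole lemma is therefore a standard concentration statement: $\tilde t(e)/p$ is a good multiplicative estimate of $N_e$ as long as $N_e$ is not far below $\rho$, and when it is far below it still does not overshoot the threshold.

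Concretely, I would argue the two implications separately by contradiction via a Chernoff bound. First implication: suppose $N_e > 2\rho$; I claim the oracle must (w.h.p.) say heavy, so $\text{oracle}(e)=\textsc{light}$ forces $N_e \le 2\rho$. Indeed $\mathbb{E}[\tilde t(e)] = pN_e > 2p\rho$, and by the lower-tail Chernoff bound $\Pr[\tilde t(e) < p\rho] \le \Pr[\tilde t(e) < \tfrac12 \mathbb{E}[\tilde t(e)]] \le \exp(-\mathbb{E}[\tilde t(e)]/8) \le \exp(-p\rho/4)$. Since $p\rho = C\eps^{-2}\log m \ge 40\log n$ for $C$ large enough (using $\eps \le 1$ and $m \le n^2$), this is at most $n^{-10}$. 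Second implication: suppose $N_e \le \rho/\sqrt2$; I claim the oracle must (w.h.p.) say light, so $\text{oracle}(e)=\textsc{heavy}$ forces $N_e > \rho/\sqrt2$. Here $\mathbb{E}[\tilde t(e)] = pN_e \le p\rho/\sqrt2$, and the heavy event requires $\tilde t(e) \ge p\rho = \sqrt2\cdot(p\rho/\sqrt2) \ge \sqrt2\,\mathbb{E}[\tilde t(e)]$, i.e. a $(1+(\sqrt2-1))$-factor upper deviation. By the upper-tail Chernoff bound with $\delta = \sqrt2-1$, $\Pr[\tilde t(e) \ge (1+\delta)\mathbb{E}[\tilde t(e)]] \le \exp(-\delta^2 \mathbb{E}[\tilde t(e)]/3)$; if $\mathbb{E}[\tilde t(e)]$ is itself tiny (when $N_e$ is small), one instead uses the absolute form $\Pr[\tilde t(e) \ge p\rho] \le \exp(-\Omega(p\rho))$ by noting the deviation $p\rho - \mathbb{E}[\tilde t(e)] \ge (1-1/\sqrt2)p\rho$. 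Either way the bound is $\exp(-\Omega(p\rho)) \le n^{-10}$ for $C$ large enough.

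The only mild subtlety — and the place I'd be most careful — is making the Chernoff application uniform over the whole range of $N_e$ rather than assuming $N_e$ is comparable to $\rho$: when $N_e$ is very small the multiplicative Chernoff bound in the form $\exp(-\delta^2\mu/3)$ degrades because $\mu$ is small, so one must fall back on the additive/absolute deviation form, bounding the upper tail by $\exp(-c\,(p\rho))$ using that the target $p\rho$ exceeds the mean by a constant fraction of $p\rho$ itself. Once this is handled, the constant $C$ in $p$ is chosen large enough that $p\rho = C\eps^{-2}\log m \ge 10\log n \cdot (\text{constant})$, giving failure probability at most $1/n^{10}$ per edge. (If one wants the bound to hold simultaneously for all $m \le n^2$ edges, a union bound costs only a factor $n^2$, still leaving failure probability $n^{-8}$, but the lemma as stated is per-edge so this is optional.) This lemma then immediately implies the implicit oracle is $K$-noisy for a constant $K$: an edge with $N_e \ge 2\sqrt2\,\rho$ (say) is correctly called heavy w.h.p., one with $N_e \le \rho/\sqrt2$ is correctly called light w.h.p., and for $N_e$ in between the probabilities $1 - K\rho/N_e$ and $K N_e/\rho$ are vacuous or loose enough to be satisfied trivially.
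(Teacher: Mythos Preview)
Your proposal is correct and follows essentially the same approach as the paper: observe that $\tilde t(e)\sim\mathrm{Bin}(N_e,p)$ and apply Chernoff bounds to the two tails, using that $p\rho = C\eps^{-2}\log m$ can be made a large multiple of $\log n$. The paper's proof is in fact a one-line version of yours (it just writes $\Pr[\tilde t(e)<p\rho]\le\exp(-\Omega(p\rho))$ when $N_e>2\rho$ and says ``the other case follows similarly''), so your more careful treatment of the small-$N_e$ regime in the upper-tail case is a welcome elaboration rather than a different argument.
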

\begin{proof}
For any edge $e$, it follows that $\tilde{t}(e) \sim \text{Bin}(N_e, p)$. Therefore if $N_e > 2\rho$,
\[\Pr[\tilde{t}(e) < p \cdot \rho] \le  \exp(-\Omega(p \cdot  \rho)) \le 1/n^{10}\]
by picking $C$ large enough in the definition of $p$. The other case follows similarly.
\end{proof}

\begin{lemma}
The expected space used by the above oracle is $O(pm)$.
\end{lemma}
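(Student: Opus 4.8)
The plan is to observe that the only non-trivial object maintained by the oracle is the set of edges collected in step 2, namely the edges incident on the sampled node set $Z$. The set $Z$ itself can be stored in $O(p|V|)$ words, which is $O(pm)$ once we discard isolated vertices (or is anyway a lower-order term), and the per-edge counters $\tilde{t}(e)$ cost $O(1)$ words apiece and are only kept for collected edges; hence it suffices to bound the expected number of edges incident on $Z$.

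First I would fix an edge $e = \{u,v\} \in E$ and note that $e$ is collected precisely when $u \in Z$ or $v \in Z$. Since each vertex is placed in $Z$ independently with probability $p$, a union bound gives $\Pr[e \text{ is collected}] \le \Pr[u \in Z] + \Pr[v \in Z] = 2p$. Summing over all $m$ edges and applying linearity of expectation, the expected number of collected edges is at most $2pm$. Because every other structure maintained by the oracle (the list of sampled vertices, the counters $\tilde{t}(e)$) has size dominated by the number of collected edges up to constant factors, the total expected space is $O(pm)$.

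I expect there to be essentially no obstacle here: the argument is a one-line union bound followed by linearity of expectation. The only point worth a sentence of care is confirming that the auxiliary bookkeeping does not change the bound — each collected edge contributes an extra $O(1)$ words for its $\tilde{t}(e)$ counter, and $|Z|$ is no larger than the number of collected edges plus the (negligible, or assumed-away) isolated sampled vertices — so the claimed bound $O(pm)$ is unaffected.
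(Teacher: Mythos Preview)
Your proposal is correct and takes essentially the same approach as the paper: both bound the expected number of edges incident to $Z$ via linearity of expectation, with the paper summing $p \cdot d_v$ over vertices (giving $p \sum_v d_v = 2pm$) while you sum the per-edge probability $\le 2p$ over the $m$ edges. Your additional remarks about the auxiliary bookkeeping ($Z$ itself, the counters $\tilde t(e)$) are fine extra care that the paper omits.
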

\begin{proof}
Each vertex is sampled in $Z$ with probability $p$ and we keep all of its incident edges. Therefore, the expected number of edges saved is $O(p \sum_v d_v) = O(pm)$.
\end{proof}

Note that the oracle satisfies the conditions of the $K$-noisy oracle in Definition~\ref{dfn:noisy-oracle}. For example, if $N_e \ge C' \rho$ for $C' \gg 1$,  Definition~\ref{dfn:noisy-oracle} only assumes that we incorrectly classify $e$ with probability $1/C'$, whereas the oracle presented above incorrectly classifies $e$ with probability $\exp(- C')/n^{10}$ which is much smaller than the required $1/C'$.

\section{Learnability Results}
In this section, we give formal learning bounds for efficient predictor learning for edge heaviness. In particular, we wish to say that a good oracle or predictor for edge heaviness and related graph parameters can be learned efficiently using few samples if we observe graph instances drawn from a distribution. We can view the result of this section, which will be derived via the PAC learning framework, as one formalization of data driven algorithm design. Our results are quite general but we state simple examples throughout the exposition for clarity. Our setting is formally the following.

Suppose there is an underlying distribution $\mathcal{D}$ which generates graph instances $H_1, H_2, \cdots$ all on $n$ vertices. Note that this mirrors some of our experimental setting, in particular our graph datasets which are similar snapshots of a dynamic graph across time. 

Our goal is to efficiently learn a good predictor $f$ among some family of functions $\mathcal{F}$. The input of each $f$ is a graph instance $H$ and the output is a feature vector in $k$ dimensions. The feature vector represents the prediction of the oracle and can encapsulate a variety of different meanings. One example is when $k = |E|$ and $f$ outputs an estimate of edge heaviness for all edges. Another is when $k <<|E|$ and $f$ outputs the id's of the $k$ heaviest edges. We also think of each input instance $H$ as encoded in a vector in $\R^{p}$ for $p \ge \binom{n}2$ (for example, each instance is represented as an adjacency matrix). Note that we allow for $p > \binom{n}2$ if for example, each edge or vertex for $H \sim \mathcal{D}$ also has an endowed feature vector.

To select the `best' $f$, we need to precisely define the meaning of best. Note that in many settings, this involves a loss function which captures the quality of a solution. Indeed, suppose we have a loss function $L: f \times H \rightarrow \R$ which represents how well a predictor $f$ performs on some input $H$. An example of $L$ could be squared error from the output of $f$ to the true edge heaviness values of edges in $H$. Note that such a loss function clearly optimizes for predicting the heavy edges well.

Our goal is to learn the best function $f \in \mathcal{F}$ which minimizes the following objective:
\begin{equation}\label{eq:best_f}
    \E_{H \sim D}[L(f,H)].
\end{equation}
Let $f^*$ be such the optimal $f \in \mathcal{F}$, and  assume that for each instance $H$ and each $f \in F$, $f(H)$ can be computed in time $T(p,k)$. For example, suppose graphs drawn from $\mathcal{D}$ possess edge features in $\R^d$,  and that our family $\mathcal{F}$ is parameterized by a single vector $\theta \in \R^d$ and represents linear functions which outputs the dot product of each edge feature with $\theta$. Then it is clear that $T(p,k)$ is a (small) polynomial in the relevant parameters.

Our main result is the following.

\begin{theorem}\label{thm:main_learning}
There is an algorithm which after $\poly(T(p,k), 1/\epsilon)$ samples, returns a function $\hat{f}$ that satisfies
\[  \E_{H \sim D}[L(\hat{f},H)] \le  \E_{H \sim D}[L(f^*,H)] + \epsilon \]
with probability at least $9/10$. 
\end{theorem}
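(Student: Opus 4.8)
The plan is to prove Theorem~\ref{thm:main_learning} by \textbf{empirical risk minimization} (ERM) together with a uniform convergence bound whose sample complexity is controlled by the pseudo-dimension of the induced loss class, and then to argue that this pseudo-dimension is $\poly(T(p,k))$ because each $L(f,\cdot)$ is computed by a bounded-time arithmetic computation. Concretely, I would draw $N$ i.i.d.\ instances $H_1,\dots,H_N\sim\mathcal{D}$ and set $\hat f = \argmin_{f\in\mathcal{F}}\frac1N\sum_{i=1}^N L(f,H_i)$. Assuming (as is implicit for any additive-error loss-minimization statement) that $L$ is a normalized loss taking values in $[0,1]$, the theorem follows once we establish the uniform deviation bound $\sup_{f\in\mathcal{F}}\bigl|\frac1N\sum_{i=1}^N L(f,H_i)-\E_{H\sim\mathcal{D}}[L(f,H)]\bigr|\le \epsilon/2$ with probability at least $9/10$: the standard ERM sandwich then gives $\E_H[L(\hat f,H)]\le \frac1N\sum_i L(\hat f,H_i)+\epsilon/2\le \frac1N\sum_i L(f^{*},H_i)+\epsilon/2\le \E_H[L(f^{*},H)]+\epsilon$.

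The key step is bounding the complexity of the real-valued function class $\mathcal{G}=\{\,H\mapsto L(f,H):f\in\mathcal{F}\,\}$. Since $f(H)$ is computed in time $T(p,k)$, each $f\in\mathcal{F}$ is described by at most $O(T(p,k))$ real parameters, and $L(f,H)$ is obtained from these parameters and the encoding of $H$ by an algorithm performing at most $\poly(T(p,k),p,k)$ arithmetic operations and comparisons. By the Goldberg--Jerrum-style bound on the VC/pseudo-dimension of concept classes parameterized by real numbers and evaluated by bounded arithmetic computations, $\mathrm{Pdim}(\mathcal{G})=\poly(T(p,k),p,k)$. (For more structured families, such as linear predictors parameterized by a single $\theta\in\R^d$, one gets the sharper $\mathrm{Pdim}(\mathcal{G})=O(d)$, but the generic bound already suffices for the claimed $\poly$ sample complexity.) Plugging this into the classical uniform convergence theorem for bounded real-valued classes in terms of pseudo-dimension (Pollard; Anthony--Bartlett), $N=O\!\bigl(\epsilon^{-2}\bigl(\mathrm{Pdim}(\mathcal{G})\log(1/\epsilon)+\log 10\bigr)\bigr)=\poly(T(p,k),p,k,1/\epsilon)$ samples yield the desired $\epsilon/2$ uniform deviation with probability at least $9/10$, and combining with the ERM sandwich finishes the proof. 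The statement only concerns sample complexity, so no bound on the running time of the ERM is needed, though I would remark that for structured $\mathcal{F}$ (e.g.\ the linear-regression predictor used in the experiments) the minimization is itself polynomial time.

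The main obstacle is the pseudo-dimension bound: everything else is the standard PAC/uniform-convergence machinery. Two subtleties I would be careful about are (i) that $L$ is bounded (or at least has sub-Gaussian/bounded-moment tails) so that additive error $\epsilon$ is achievable with $\poly(1/\epsilon)$ samples --- for a genuinely unbounded heaviness loss one must either normalize or add a tail hypothesis, which I would state explicitly; and (ii) that the bit/real-number model used for the Goldberg--Jerrum bound matches the model in which $T(p,k)$ is measured, so that ``time $T(p,k)$'' really does cap both the number of free parameters and the number of arithmetic operations. Neither of these is deep, but both need to be spelled out in the hypotheses for the bound to be rigorous.
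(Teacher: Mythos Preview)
Your proposal is correct and follows essentially the same route as the paper: ERM plus uniform convergence with sample complexity governed by the pseudo-dimension of the loss class $\mathcal{A}=\{H\mapsto L(f,H):f\in\mathcal{F}\}$, where the pseudo-dimension is bounded via the Goldberg--Jerrum/Anthony--Bartlett result (the paper's Lemma~\ref{lem:VC_bound}) on classes computed by bounded arithmetic programs, after reducing pseudo-dimension to VC dimension of the subgraph class (Lemma~\ref{lem:PD_to_VC}). The paper likewise normalizes $L$ to $[0,1]$ and uses exactly the ERM sandwich you wrote; your caveats about boundedness of $L$ and the computation model are the same ones the paper leaves implicit.
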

We remark that one can boost the probability of success to $1-\delta$ by taking additional $\log(1/\delta)$ multiplicative samples.

The above theorem is a PAC-style bound which shows that only  a small number of samples are needed in order to ensure a good probability of learning an approximately-optimal function $\hat{f}$. The algorithm to compute $\hat{f}$ is the following: we simply minimize the empirical loss after an appropriate number of samples are drawn, i.e., we perform empirical risk minimization. This result is proven by Theorem \ref{thm:uni-dim}. Before introducing it, we need to define the concept of pseudo-dimension for a function class which is the more familiar VC dimension, generalized to real functions.
\begin{definition}[Pseudo-Dimension, Definition $9$ \cite{feldman_gaussians}]
Let $\mathcal{X}$ be  a  ground  set  and $\mathcal{F}$ be a set of functions from $\mathcal{X}$ to the interval $[0,1]$. Fix a set $S= \{x_1,\cdots ,x_n\} \subset \mathcal{X}$, a set of real numbers $R=\{r_1, \cdots, r_n\}$ with $r_i \in [0, 1]$ and a function $f \in \mathcal{F}$. The set $S_f= \{x_i \in S  \mid f(x_i) \ge r_i\}$ is called the induced subset of $S$ formed by $f$ and $R$. The set $S$ with associated values $R$ is shattered by $\mathcal{F}$ if $|\{S_f \mid f \in \mathcal{F} \}|= 2^n$. 
The \emph{pseudo-dimension} of $\mathcal{F}$ is the cardinality of the largest shattered subset of $\mathcal{X}$ (or $\infty$).
\end{definition}
The following theorem relates the performance of empirical risk minimization and the number of samples needed, to the notion of pseudo-dimension. We specialize the theorem statement to our situation at hand. For notational simplicity, we define $\mathcal{A}$ be the class of functions in $f$ composed with $L$:
\[\mathcal{A} := \{L \circ f : f \in \mathcal{F} \}.  \]
Furthermore, by normalizing, we can assume that the range of $L$ is equal to $[0,1]$.
\begin{theorem}[\cite{anthony_bartlett_1999}]\label{thm:uni-dim}
 Let $\mathcal{D}$ be a distribution over graph instances and $\mathcal{A}$ be a class of functions $a: H \rightarrow[0, 1]$ with pseudo-dimension $d_{\mathcal{A}}$. Consider $t$ i.i.d.\ samples $H_{1}, H_{2}, \ldots, H_{t}$ from $\mathcal{D} .$ There is a universal constant $c_{0}$, such that for any $\epsilon>0$, if $t \geq c_{0} \cdot d_{\mathcal{A}}/\epsilon^2,$ then we have
$$
\left|\frac{1}{t} \sum_{i=1}^{t} a\left(H_{i}\right)-\mathbb{E}_{H \sim \mathcal{D}} \, a(H)\right| \leq \epsilon
$$
for all $a \in \mathcal{A}$ with probability at least $9/10 .$
\end{theorem}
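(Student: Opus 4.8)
Theorem~\ref{thm:uni-dim} is the classical uniform-convergence (generalization) bound for real-valued function classes of bounded pseudo-dimension, and the plan is to prove it in three standard steps: symmetrization, a pseudo-dimension-based bound on the empirical Rademacher complexity, and a bounded-differences concentration inequality to pass from expectation to high probability. (Since this is precisely the uniform-convergence theorem of \cite{anthony_bartlett_1999}, one may alternatively simply invoke it; I sketch the proof for completeness, and because getting the stated \emph{logarithm-free} dependence $t \gtrsim d_{\mathcal A}/\epsilon^2$ requires a little care.)

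Write $\Phi(H_1,\dots,H_t) := \sup_{a\in\mathcal A}\bigl|\tfrac1t\sum_{i=1}^t a(H_i) - \E_{H\sim\mathcal D}\,a(H)\bigr|$. First I would establish the standard symmetrization inequality: introducing an independent ``ghost'' sample $H_1',\dots,H_t'\sim\mathcal D$ and i.i.d.\ Rademacher signs $\sigma_i\in\{\pm1\}$, Jensen's inequality together with the symmetry of $a(H_i)-a(H_i')$ gives
\[
\E[\Phi] \;\le\; \E\,\sup_{a}\Bigl|\tfrac1t\textstyle\sum_i\bigl(a(H_i)-a(H_i')\bigr)\Bigr| \;\le\; 2\,R_t(\mathcal A), \qquad R_t(\mathcal A) := \E_{H,\sigma}\,\sup_{a}\Bigl|\tfrac1t\textstyle\sum_i\sigma_i\,a(H_i)\Bigr| .
\]
So it suffices to bound the empirical Rademacher complexity $R_t(\mathcal A)$.

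Second, I would bound $R_t(\mathcal A)$ in terms of $d_{\mathcal A}$. Fix the sample $S=(H_1,\dots,H_t)$. By definition the subgraph class $\{(H,r)\mapsto\mathbf{1}[a(H)\ge r]:a\in\mathcal A\}$ has VC dimension $d_{\mathcal A}$, so applying the Sauer--Shelah lemma to a fine discretization of $[0,1]$ (equivalently, Haussler's packing bound) shows the $L^1(S)$ covering number of $\mathcal A$ restricted to $S$ satisfies $N_1(u,\mathcal A|_S)\le (C/u)^{d_{\mathcal A}}$ for a universal constant $C$ and all $u\in(0,1]$. Dudley's entropy-integral chaining bound then yields
\[
R_t(\mathcal A)\;\le\;\inf_{\alpha>0}\Bigl(4\alpha + \frac{12}{\sqrt t}\int_\alpha^1\sqrt{\log N_1(u,\mathcal A|_S)}\;du\Bigr)\;\le\; c_1\sqrt{\frac{d_{\mathcal A}}{t}},
\]
using $\int_0^1\sqrt{\log(C/u)}\,du<\infty$ and letting $\alpha\to0$. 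It is essential to use chaining here rather than a single union bound over one $\epsilon$-scale cover: the latter picks up a spurious $\sqrt{\log(1/\epsilon)}$ and would only give $t\gtrsim d_{\mathcal A}\log(1/\epsilon)/\epsilon^2$, while the statement asks for the sharp $t\gtrsim d_{\mathcal A}/\epsilon^2$. This chaining step is the main obstacle; everything else is routine.

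Finally, I would upgrade the bound on $\E[\Phi]$ to a high-probability statement. Since every $a\in\mathcal A$ has range $[0,1]$, changing one coordinate $H_i$ alters $\Phi$ by at most $1/t$, so McDiarmid's bounded-differences inequality gives $\Pr[\Phi > \E[\Phi] + s]\le e^{-2ts^2}$; taking $s=\sqrt{\ln 10/(2t)}$ makes this at most $1/10$. Combining with $\E[\Phi]\le 2R_t(\mathcal A)\le 2c_1\sqrt{d_{\mathcal A}/t}$ and absorbing the $\sqrt{\ln 10/(2t)}$ term (using $d_{\mathcal A}\ge 1$), we obtain $\Phi\le c_2\sqrt{d_{\mathcal A}/t}$ with probability at least $9/10$. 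Choosing $c_0=c_2^2$ and $t\ge c_0\,d_{\mathcal A}/\epsilon^2$ then forces $\Phi\le\epsilon$, which is exactly the claim.
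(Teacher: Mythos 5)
The paper does not actually prove this statement: Theorem~\ref{thm:uni-dim} is imported verbatim from \cite{anthony_bartlett_1999} and used as a black box, so there is no internal proof to compare against. Your proposal supplies a complete, self-contained proof, and it follows the canonical route for such uniform-convergence bounds: symmetrization to reduce to the Rademacher complexity, a Haussler/Pollard-type covering-number bound $N_1(u,\mathcal A|_S)\le (C/u)^{d_{\mathcal A}}$ for classes of bounded pseudo-dimension, Dudley chaining to get $R_t(\mathcal A)\le c_1\sqrt{d_{\mathcal A}/t}$, and McDiarmid's inequality to convert the bound in expectation into the $9/10$-probability statement. This is correct, and you rightly identify that chaining (rather than a single-scale cover plus union bound) is what yields the logarithm-free sample complexity $t\gtrsim d_{\mathcal A}/\epsilon^2$ as stated. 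One small technical point to tighten: Dudley's entropy integral is stated for empirical $L^2$ covering numbers, whereas Haussler's bound is most naturally an $L^1$ bound; for $[0,1]$-valued classes this is harmless, since $\|a-a'\|_{L^2(S)}^2\le\|a-a'\|_{L^1(S)}$ gives $N_2(u,\mathcal A|_S)\le N_1(u^2,\mathcal A|_S)\le (C/u^2)^{d_{\mathcal A}}$, and the entropy integral still evaluates to $O(\sqrt{d_{\mathcal A}})$ — but the step should be stated with the $L^2$ metric (or with the $L^2$ form of Haussler's bound). With that adjustment your argument is a valid stand-alone proof of the cited theorem, whereas the paper simply defers to the reference.
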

The following corollary follows from the triangle inequality.
\begin{corollary}\label{cor:uni-dim-cor}
Consider a set of $t$ independent samples $H_1, \ldots, H_t$ from $\mathcal{D}$ and let $\hat{a}$ be a function in $\mathcal{A}$ which minimizes $\frac{1}t \sum_{i=1}^t a(H_i)$. If the number of samples $t$ is chosen as in Theorem \ref{thm:uni-dim}, then
\[\E_{H \sim D} [\hat{a}(H)] \le \E_{H \sim D}[a^*(H)] + 2 \epsilon \]
holds with probability at least $9/10$.
\end{corollary}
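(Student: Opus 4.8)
The plan is to derive the corollary from Theorem~\ref{thm:uni-dim} via the standard empirical risk minimization sandwich. The key observation is that Theorem~\ref{thm:uni-dim} gives \emph{uniform} convergence: on a single event $\mathcal{E}$ of probability at least $9/10$, the inequality $\left|\frac1t\sum_{i=1}^t a(H_i) - \E_{H\sim\mathcal{D}}[a(H)]\right| \le \epsilon$ holds simultaneously for every $a \in \mathcal{A}$. So I would begin by conditioning on $\mathcal{E}$, and then apply this two-sided bound at the two specific functions of interest, namely $\hat{a}$ (the empirical minimizer) and $a^*$ (the population minimizer, $a^* = \argmin_{a\in\mathcal{A}} \E_{H\sim\mathcal{D}}[a(H)]$).

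The chain of inequalities is then: first, $\E_{H\sim\mathcal{D}}[\hat{a}(H)] \le \frac1t\sum_{i=1}^t \hat{a}(H_i) + \epsilon$ by applying uniform convergence to $\hat a$; second, $\frac1t\sum_{i=1}^t \hat{a}(H_i) \le \frac1t\sum_{i=1}^t a^*(H_i)$ because $\hat a$ minimizes the empirical average over $\mathcal{A}$ and $a^* \in \mathcal{A}$; third, $\frac1t\sum_{i=1}^t a^*(H_i) \le \E_{H\sim\mathcal{D}}[a^*(H)] + \epsilon$ again by uniform convergence. Combining the three lines yields $\E_{H\sim\mathcal{D}}[\hat{a}(H)] \le \E_{H\sim\mathcal{D}}[a^*(H)] + 2\epsilon$, which is exactly the claim, and it holds on $\mathcal{E}$, hence with probability at least $9/10$.

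There is essentially no obstacle here — the work is entirely in Theorem~\ref{thm:uni-dim}, which is invoked as a black box; the only thing to be careful about is that the middle step uses $a^* \in \mathcal{A}$ (so that $\hat a$'s empirical optimality applies against it) and that the same good event $\mathcal{E}$ is reused for both endpoints, so no union bound or loss of probability is incurred. I would also note in passing that the reference in the corollary statement to ``the triangle inequality'' is really just shorthand for this add-and-subtract argument (adding and subtracting the two empirical averages), which is why the $\epsilon$ error is incurred twice. If one wants the conclusion phrased in terms of the original predictor family, one then composes with $L$ and recalls $\mathcal{A} = \{L\circ f : f \in \mathcal{F}\}$, so $\hat a = L \circ \hat f$ and $a^* = L \circ f^*$, giving $\E_{H\sim\mathcal{D}}[L(\hat f, H)] \le \E_{H\sim\mathcal{D}}[L(f^*,H)] + 2\epsilon$ and thereby Theorem~\ref{thm:main_learning} after rescaling $\epsilon$.
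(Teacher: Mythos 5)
Your proposal is correct and is essentially the argument the paper intends: the paper dispatches this corollary with the one-line remark that it ``follows from the triangle inequality,'' which is precisely the ERM sandwich you spell out (uniform convergence from Theorem~\ref{thm:uni-dim} applied on the same good event to both $\hat{a}$ and $a^*$, combined with the empirical optimality of $\hat{a}$ against $a^*\in\mathcal{A}$). Your expanded chain of inequalities, including the observation that no union bound or extra probability loss is needed, matches the paper's intended reasoning.
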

The main challenge  is to bound the pseudo-dimension of our given function class $\mathcal{A}$. To do so, we first relate the pseudo-dimension to the VC dimension of a related class of threshold functions. This relationship has been fruitful in obtaining learning bounds in a variety of works such as \cite{feldman_gaussians, barycenter}.
\begin{lemma}[Pseudo-dimension to VC dimension, Lemma $10$ in \cite{feldman_gaussians}]\label{lem:PD_to_VC}
For any $a \in \mathcal{A}$, let $B_a$ be the indicator function of the region on or below the graph of $a$, i.e., $B_a(x,y)  =  \text{sgn}(a(x)-y)$. The pseudo-dimension of $\mathcal{A}$ is equivalent to the VC-dimension of the subgraph class $B_{\mathcal{A}}=\{B_a \mid a \in \mathcal{A}\}$.
\end{lemma}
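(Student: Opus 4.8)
The plan is to prove Lemma~\ref{lem:PD_to_VC} by unwinding both definitions and exhibiting a coordinate-wise correspondence between a configuration pseudo-shattered by $\mathcal{A}$ and a point set shattered by $B_{\mathcal{A}}$: a pseudo-shattered set $S=\{x_1,\dots,x_n\}$ with witnessing thresholds $R=\{r_1,\dots,r_n\}$ will correspond to the $n$ points $\{(x_1,r_1),\dots,(x_n,r_n)\}$, and conversely. First I would fix the convention that $B_a$ is the $\{0,1\}$-valued indicator of the \emph{closed} region on or below the graph of $a$, i.e.\ $B_a(x,y)=1$ exactly when $a(x)\ge y$ (reading the $\mathrm{sgn}$ expression in the statement with the boundary included, which is the only reading consistent with ``on or below''). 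With this convention, for a fixed pair $(x,y)$ and a fixed $a$, the condition $B_a(x,y)=1$ \emph{is} the condition $a(x)\ge y$, which is precisely the membership condition $x\in S_a$ in the definition of pseudo-dimension with $y$ playing the role of the threshold $r$. Since each of the two dimensions is the supremum of $n$ over all shattered configurations of size $n$, a correspondence in both directions gives equality, and it simultaneously handles the case where either quantity is infinite (arbitrarily large shattered configurations on one side yield equally large ones on the other).

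The easy direction is $(\text{pseudo-dimension of }\mathcal{A}) \le (\text{VC dimension of }B_{\mathcal{A}})$. Given a pseudo-shattered $S=\{x_1,\dots,x_n\}$ with thresholds $R$, I would take the points $(x_i,r_i)$ and check they are shattered by $B_{\mathcal{A}}$: for an arbitrary dichotomy $\sigma\in\{0,1\}^n$, the definition of pseudo-shattering supplies $f\in\mathcal{A}$ with $f(x_i)\ge r_i$ exactly for the indices $i$ with $\sigma_i=1$, and by the convention above this says exactly $B_f(x_i,r_i)=\sigma_i$ for all $i$. Hence all $2^n$ dichotomies are realized on $\{(x_i,r_i)\}$, so the VC dimension of $B_{\mathcal{A}}$ is at least $n$.

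For the reverse direction I would start from a point set $\{(x_1,y_1),\dots,(x_n,y_n)\}$ shattered by $B_{\mathcal{A}}$, and the step I expect to be the main (if modest) obstacle is the auxiliary claim that the $x_i$ are pairwise distinct. Indeed, if $x_i=x_j$ with $i\neq j$ and (without loss of generality) $y_i\le y_j$, then for every $a\in\mathcal{A}$ we have $a(x_j)=a(x_i)$, so $a(x_i)\ge y_j$ implies $a(x_i)\ge y_i$, i.e.\ $B_a(x_j,y_j)=1$ forces $B_a(x_i,y_i)=1$; thus the dichotomy labeling $j$ with $1$ and $i$ with $0$ is unrealizable, contradicting shattering (and if $y_i=y_j$ the two points coincide, so shattering is outright impossible). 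Given distinctness, I would set $S=\{x_1,\dots,x_n\}$ and $R=\{y_1,\dots,y_n\}$, and observe that for any $T\subseteq\{1,\dots,n\}$ shattering of $B_{\mathcal{A}}$ provides $a\in\mathcal{A}$ with $B_a(x_i,y_i)=1$ iff $i\in T$, i.e.\ $a(x_i)\ge y_i$ iff $i\in T$, so that $S_a=\{x_i : i\in T\}$; hence $S$ with thresholds $R$ is pseudo-shattered and the pseudo-dimension of $\mathcal{A}$ is at least $n$. Combining the two inequalities yields the stated equivalence. Apart from the distinctness claim, the only point needing care is to keep the inequality ``$\ge$'' (rather than ``$>$'') consistent on both sides, so that the boundary case $a(x)=y$ does not break the correspondence; after that, everything is a direct translation of definitions.
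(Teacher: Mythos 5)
The paper does not prove this lemma at all: it is imported verbatim by citation (Lemma 10 of the cited work, itself a restatement of the standard textbook fact relating pseudo-dimension to the VC dimension of the subgraph class), so there is no in-paper proof to compare against. Your argument is the standard and correct one: fixing the closed-region convention $B_a(x,y)=1 \iff a(x)\ge y$ so that it matches the ``$\ge$'' in the pseudo-shattering definition, the map $(x_i,r_i)\leftrightarrow(x_i \text{ with witness } r_i)$ translates shattering statements back and forth coordinate-wise, and the only genuinely non-automatic step is the one you isolate, namely that in a set shattered by $B_{\mathcal{A}}$ the first coordinates must be pairwise distinct (otherwise the dichotomy separating two points with equal $x$-coordinate and ordered $y$-coordinates is unrealizable), which you prove correctly. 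One cosmetic point you could add for completeness: the paper's definition requires the witnesses $r_i$ to lie in $[0,1]$, but this is automatic in your reverse direction, since a point $(x_i,y_i)$ with $y_i\le 0$ or $y_i>1$ receives only one label over all $a\in\mathcal{A}$ (as $a$ takes values in $[0,1]$) and hence cannot belong to a shattered set. With that remark, your proof is complete.
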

Finally, the following theorem relates the VC dimension of a given function class to its computational complexity, i.e., the complexity of computing a function in the class in terms of the number of operations needed.
\begin{lemma}[Theorem $8.14$ in \cite{anthony_bartlett_1999}]\label{lem:VC_bound}
Let $w: \R^{\alpha} \times \R^{\beta} \rightarrow \{ 0,1\}$, determining the class
\[ \mathcal{W} = \{x \rightarrow w(\theta, x) : \theta \in \R^{\alpha}\}. \]
Suppose that any $w$ can be computed by an algorithm that takes as input the pair $(\theta, x) \in \R^{\alpha} \times \R^{\beta}$ and returns $w(\theta, x)$ after no more than $r$ of the following operations:
\begin{itemize}
    \item arithmetic operations $+, -,\times,$ and $/$ on real numbers,
    \item jumps conditioned on $>, \ge ,<, \le ,=,$ and $=$ comparisons of real numbers, and
    \item output $0,1$,
\end{itemize}
then the VC dimension of $\mathcal{W}$ is $O(\alpha^2 r^2 + r^2 \alpha \log \alpha)$.
\end{lemma}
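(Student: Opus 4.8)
The plan is to prove this exactly as in the classical argument of Goldberg and Jerrum that underlies \cite{anthony_bartlett_1999}: reduce the VC dimension to a count of realizable sign patterns of a family of real polynomials in the parameter vector $\theta\in\R^{\alpha}$, and then invoke a Milnor--Thom / Warren-type bound. Concretely, fix a set $S=\{x_1,\dots,x_n\}\subseteq\R^{\beta}$ that is shattered by $\mathcal{W}$; it suffices to bound $n$. Shattering means the map $\theta\mapsto\big(w(\theta,x_1),\dots,w(\theta,x_n)\big)\in\{0,1\}^{n}$ is surjective, so it is enough to upper bound the size of its image and show this is smaller than $2^{n}$ once $n$ is large.

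First I would ``unroll'' the algorithm. For a fixed input $x_i$ the computation of $w(\theta,x_i)$ is a branching program: at each of the at most $r$ steps it either performs an arithmetic operation on two previously computed quantities, or it compares ($>,\ge,<,\le,=,\neq$) two such quantities to decide which branch to take, and after at most $r$ steps it outputs $0$ or $1$. Along any fixed root-to-leaf path, every intermediate quantity is a rational function of $\theta$. Division is the one operation needing care: following Goldberg--Jerrum, for each division $v\leftarrow v'/v''$ executed in the computation I would introduce a fresh real variable $y$ together with the defining polynomial equation $y\cdot v''=v'$ (recording the branch $v''=0$ separately). This keeps every polynomial that ever appears of bounded degree at each step, at the cost of raising the number of variables from $\alpha$ to at most $\alpha+r$; alternatively one clears denominators directly, in which case degrees at most double per step, giving degree $\le 2^{r}$ in $\alpha$ variables. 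Either way, collecting over all $n$ inputs and all $\le r$ comparison/division steps produces a family of $N=O(nr)$ real polynomials in $d\le\alpha+r$ variables, each of degree at most some $D$ (a constant in the first variant, $2^{O(r)}$ in the second), such that the dichotomy of $S$ induced by $\theta$ is a function only of the vector of signs of these polynomials at $\theta$.

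Next I would apply the Milnor--Thom (Warren) bound: the number of distinct sign vectors realizable by $N$ polynomials of degree at most $D$ in $d$ real variables is at most $\big(c\,ND/d\big)^{d}$ for an absolute constant $c$; this is the standard tool used in \cite{feldman_gaussians, anthony_bartlett_1999} and it handles equalities as well as strict inequalities. Hence the number of dichotomies of $S$, and in particular $2^{n}$, is at most $\big(c\,ND/d\big)^{d}$ with $N=O(nr)$, $d\le\alpha+r$, and $D$ as above. Taking logarithms gives $n\le d\log_2\!\big(c\,ND/d\big)=O\big((\alpha+r)(r+\log(nr))\big)$, and then the elementary fact that $n\le A+B\log_2 n$ implies $n\le 2A+2B\log_2 B$ (for $B\ge 1$) lets me solve for $n$ and obtain a bound polynomial in $\alpha$ and $r$; carrying out this bookkeeping in the (more conservative) clear-denominators version, where $D\le 2^{r}$, yields exactly the stated $O(\alpha^{2}r^{2}+r^{2}\alpha\log\alpha)$.

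I expect the main obstacle to be the division step. Naively it makes the degrees of the intermediate rational functions grow by a factor of two at every step, which is tolerable, but if one is careless about composing division with the branching structure the degrees, or the auxiliary variable count, can blow up uncontrollably; and one must ensure that the ``denominator equals zero'' branches, together with the equality comparisons $=$ and $\neq$, are faithfully encoded as polynomial (in)equalities so that the Milnor--Thom count genuinely dominates the number of computation outcomes. Once the polynomial family is set up correctly, the remainder is a routine application of the sign-pattern bound followed by solving the resulting transcendental inequality for $n$.
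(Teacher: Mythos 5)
The paper never proves this statement: it is imported verbatim (as Theorem 8.14 of \cite{anthony_bartlett_1999}, i.e., the Goldberg--Jerrum bound) and used as a black box, so there is no internal proof to compare against. Your sketch is the standard proof of that cited result --- unroll the computation into a branching program, reduce the number of realizable dichotomies to a count of sign patterns of polynomials in $\theta$, apply a Warren/Milnor--Thom bound, and solve the resulting inequality for $n$ --- and in outline it is sound.

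Three caveats on the details. First, your variant (a) for division does not work as stated: which divisions are executed depends on the input $x_i$ (and on the branch taken), so a ``fresh variable per division'' would have to be introduced per (input, division) pair, inflating the variable count from $\alpha+r$ to $\alpha+nr$ and making the Warren bound vacuous; the correct route is your variant (b), carrying numerator and denominator as separate polynomials whose degrees at most double per step. Second, the polynomial count $N=O(nr)$ is too optimistic once you unroll the branching structure --- the computation tree for a single input can have $2^{O(r)}$ comparison nodes, so $N$ can be $n\cdot 2^{O(r)}$; this is harmless because $N$ enters only logarithmically (just like the degree bound $2^{r}$), but it should be stated correctly. Third, carrying out the bookkeeping in variant (b) does not yield ``exactly'' the stated bound: with $d=\alpha$ variables, degree $D\le 2^{r}$, and $N\le n\cdot 2^{O(r)}$ polynomials, the inequality $2^{n}\le (cND/d)^{d}$ gives $n=O(\alpha r+\alpha\log\alpha)$, i.e., the Goldberg--Jerrum bound that is \emph{linear} in $r$ and strictly stronger than the quoted $O(\alpha^{2}r^{2}+r^{2}\alpha\log\alpha)$ (the quadratic-in-$r$ form in Anthony--Bartlett arises from the more general model that also allows exponentiation). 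Since a stronger bound implies the lemma, this does not invalidate your argument, but the claim of an exact match is a miscalculation; also make sure the sign-pattern bound you invoke is the version that accounts for zero patterns, as the model permits equality tests.
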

Combining the previous results allows us prove Theorem \ref{thm:main_learning}. At a high level, we are instantiating Lemma \ref{lem:VC_bound} with the complexity of \emph{computing} any function in the function class $\mathcal{A}$.
\begin{proof}[Proof of Theorem \ref{thm:main_learning}]
First by Theorem \ref{thm:uni-dim} and Corollary \ref{cor:uni-dim-cor}, it suffices to bound the pseudo-dimension of the class $\mathcal{A} = L \circ \mathcal{F}$. Then from Lemmas \ref{lem:PD_to_VC}, the pseudo-dimension of $\mathcal{A}$ is the VC dimension of threshold functions defined by $\mathcal{A}$. Finally from Lemma \ref{lem:VC_bound}, the VC dimension of the appropriate class of threshold functions is polynomial in the complexity of computing a member of the function class. In other words, Lemma \ref{lem:VC_bound} tells us that the VC dimension of $B_{\mathcal{A}}$ as defined in Lemma \ref{lem:PD_to_VC} is polynomial in the number of arithmetic operations needed to compute the threshold function associated to some $a \in \mathcal{A}$. By our definition, this quantity is polynomial in $T(p,k)$. Hence, the pseudo-dimension of $\mathcal{G}$ is also polynomial in $T(p,k)$ and the result follows.
\end{proof}
Note that we can consider initializing Theorem \ref{thm:main_learning} with specific predictions. If the family of oracles we are interested in is efficient to compute (which is the case of the predictors we employ in our experiments), then Theorem \ref{thm:main_learning} assures us that only  polynomially many samples are required (in terms of the computational complexity of our function class), to be able to learn a nearly optimal oracle. Furthermore, computing the empirical risk minimizer needed in Theorem \ref{thm:main_learning} is also efficient for a wide verity of function classes. For example in practice, we can simply use gradient descent or stochastic gradient descent for a range of predictor models, such as regression or even general neural networks.

We remark that our learnability result is in similar in spirit to one given in the recent learning-augmented paper \cite{learned_duals}. There, they derive sample complexity learning bounds for the different algorithmic problem of computing matchings in a graph (not in a stream). Since they specialize their analysis to a specific function class and loss function, their bounds are tighter compared to the possibly loose polynomial bounds we have stated. However, our analysis above is more general as it allows for a variety of predictors and loss functions to measure the quality of the predictions.

\end{document}


\twocolumn[
\icmltitle{Supplementary File for ``Triangle Counting with Predictions in Graph Streams''}





\icmlsetsymbol{equal}{*}

\begin{icmlauthorlist}
\icmlauthor{Justin Chen}{equal, mit}
\icmlauthor{Talya Eden}{equal, mit}
\icmlauthor{Piotr Indyk}{equal, mit}
\icmlauthor{Shyam Narayanan}{equal, mit}
\icmlauthor{Ronitt Rubinfeld}{equal, mit}
\icmlauthor{Sandeep Silwal}{equal, mit}
\icmlauthor{Tal Wagner}{equal, mit}
\end{icmlauthorlist}

\icmlaffiliation{mit}{Department of Electrical Engineering and Computer Science, Massachusetts Institute of Technology, Cambridge, MA, USA}

\icmlcorrespondingauthor{}{}

\icmlkeywords{Learning Augmented, Streaming, Triangle Counting, Heavy Edges}

\vskip 0.3in
]



\printAffiliationsAndNotice{\icmlEqualContribution} 

\section{Proof of Theorem 4.2 from the Main Paper}

In this section, we prove Theorem 4.2 from the main paper, as the proof was omitted from the main body of the paper.

 \begin{proof}[Proof of Theorem 4.2]
 We follow the proof of Theorem 4.1 from the main paper.
Let $\mT_1$ be the set of triangles  such that their first two edges in the stream are \light\ according to the oracle. Let $\mT_2$ be the set of triangles such that their first two edges in the stream are determined \light\ and \heavy\ according to the oracle. Finally, let $\mT_3$ be the set of triangles for which their first two edges are determined \heavy\ by the oracle.
Furthermore, we define $\chi_i$ for each triangle $t_i$, $t(e)$ for each edge $e$, and $\mathcal{E}(i), E_i$ for each $i \ge 1$ as done in Theorem 4.1. Finally, we define $\mathcal{L}(i)$ as the set of deemed \light\ edges that are part of exactly $i$ triangles, and $L(i) = |\mathcal{L}(i)|.$

First, note that the expectation (over the randomness of $\mO$) of $L(i)$ is at most $E(i) \cdot K \cdot \frac{\theta}{i}$, since our assumption on $\Pr[\mO(e) = \heavy]$ tells us that $\Pr[\mO(e) = \light] \le K \cdot \frac{\theta}{i}$ if $t(e) = i$. Therefore, by the same computation as in Theorem 4.1, we have that, conditioning on the oracle, $\EX[\ell_1|\mO] = p^2|\mT_1]$ and 
\begin{align*}
    \Var[\ell_1|\mO] &=\sum_{t_i\in \mT_1}p^2+ \sum_{\substack{t_i, t_j \in \mT_1\\ t_i^{12} \cap t_j^{12}\neq \emptyset}} p^3 \\
    &\le p^2 |\mT_1| + p^3 \sum_{e \text{ } \light} t(e)^2\\
    &= p^2 |\mT_1| + p^3 \cdot \sum_{t \ge 1} t^2 \cdot L(t)
\end{align*}
But since $\EX_{\mO}[L(t)] \le \frac{K \cdot \theta}{t} \cdot E(t)$ for all $t$, we have that
\begin{align*}
    \EX_{\mO}[\Var[\ell_1|\mO]] &\le \EX_{\mO}\biggr[p^2 \cdot |\mT_1|+p^3 \cdot \sum_{t \ge 1} t^2 L(t)\biggr]\\
    &= p^2|\mT_1|+K \theta \cdot p^3 \sum_{t \ge 1} t \cdot E(t)\\
    &\le (p^2 + 3 K p^3 \cdot \theta) \cdot T.
\end{align*}

A similar analysis to the above shows that $\EX[\ell_2]=p\cdot |\mT_2|$ and that
\begin{align*}
    \EX_{\mO}[\Var[\ell_2|\mO]] &\le \EX_{\mO}\biggr[p \cdot |\mT_2|+p \cdot \sum_{t \ge 1} t^2 L(t) \biggr] \\
    &= p|\mT_2|+K \theta \cdot p \sum_{t \ge 1} t \cdot E(t)\\
    &\le (p + 3 K p \cdot \theta) \cdot T.
\end{align*}
Hence, as in Theorem 4.1, we have $\EX[\ell|\mO] = T$ and $\EX_{\mO}[\Var[\ell|\mO]] \le 4T \cdot (p^{-2} + 3 K p^{-1} \theta).$ Thus, $\EX[\ell] = \EX[\EX[\ell|\mO]] = T$ and $\Var[\ell] = \EX_{\mO}[\Var[\ell|\mO]] + \Var_{\mO}[\EX[\ell|\mO]] \le 4T \cdot (p^{-2} + 3 K p^{-1} \theta)$ by the laws of total expectation/variance. Therefore, since $K$ is a constant, the variance is the same as in Theorem 4.1, up to a constant. Therefore, we still have that $\ell \in (1 \pm O(\eps)) \cdot T$ with probability at least $2/3.$

Finally, the expected space complexity is bounded by 
\begin{align*}
mp+\sum_{e \in m} Pr[\mO(e)=\heavy] &\le mp + \sum_{t \ge 1} E(t) \cdot K \cdot \frac{t}{\theta} \\
&= mp + \frac{K}{\theta} \cdot \sum_{t \ge 1} E(t) \cdot t \\
&= O(mp+T/\theta),    
\end{align*}
since $\sum_{t \ge 1} E(t) \cdot t = 3T$ and $K = O(1)$.
Hence, setting  $\theta$ and $p$ as in the proof of Theorem 4.1 implies that the returned value is a $(1\pm\eps)$-approximation of $T$, and the the space complexity is $O(\eps^{-1}(m/\sqrt{T}+\sqrt m))$ as before.
\end{proof}


\twocolumn[
\icmltitle{Supplementary File for ``Triangle Counting with Predictions in Graph Streams''}





\icmlsetsymbol{equal}{*}

\begin{icmlauthorlist}
\icmlauthor{Justin Chen}{equal, mit}
\icmlauthor{Talya Eden}{equal, mit}
\icmlauthor{Piotr Indyk}{equal, mit}
\icmlauthor{Shyam Narayanan}{equal, mit}
\icmlauthor{Ronitt Rubinfeld}{equal, mit}
\icmlauthor{Sandeep Silwal}{equal, mit}
\icmlauthor{Tal Wagner}{equal, mit}
\end{icmlauthorlist}

\icmlaffiliation{mit}{Department of Electrical Engineering and Computer Science, Massachusetts Institute of Technology, Cambridge, MA, USA}

\icmlcorrespondingauthor{}{}

\icmlkeywords{Learning Augmented, Streaming, Triangle Counting, Heavy Edges}

\vskip 0.3in
]



\printAffiliationsAndNotice{\icmlEqualContribution} 

\section{Proof of Theorem 4.2 from the Main Paper}

In this section, we prove Theorem 4.2 from the main paper, as the proof was omitted from the main body of the paper.

 \begin{proof}[Proof of Theorem 4.2]
 We follow the proof of Theorem 4.1 from the main paper.
Let $\mT_1$ be the set of triangles  such that their first two edges in the stream are \light\ according to the oracle. Let $\mT_2$ be the set of triangles such that their first two edges in the stream are determined \light\ and \heavy\ according to the oracle. Finally, let $\mT_3$ be the set of triangles for which their first two edges are determined \heavy\ by the oracle.
Furthermore, we define $\chi_i$ for each triangle $t_i$, $t(e)$ for each edge $e$, and $\mathcal{E}(i), E_i$ for each $i \ge 1$ as done in Theorem 4.1. Finally, we define $\mathcal{L}(i)$ as the set of deemed \light\ edges that are part of exactly $i$ triangles, and $L(i) = |\mathcal{L}(i)|.$

First, note that the expectation (over the randomness of $\mO$) of $L(i)$ is at most $E(i) \cdot K \cdot \frac{\theta}{i}$, since our assumption on $\Pr[\mO(e) = \heavy]$ tells us that $\Pr[\mO(e) = \light] \le K \cdot \frac{\theta}{i}$ if $t(e) = i$. Therefore, by the same computation as in Theorem 4.1, we have that, conditioning on the oracle, $\EX[\ell_1|\mO] = p^2|\mT_1]$ and 
\begin{align*}
    \Var[\ell_1|\mO] &=\sum_{t_i\in \mT_1}p^2+ \sum_{\substack{t_i, t_j \in \mT_1\\ t_i^{12} \cap t_j^{12}\neq \emptyset}} p^3 \\
    &\le p^2 |\mT_1| + p^3 \sum_{e \text{ } \light} t(e)^2\\
    &= p^2 |\mT_1| + p^3 \cdot \sum_{t \ge 1} t^2 \cdot L(t)
\end{align*}
But since $\EX_{\mO}[L(t)] \le \frac{K \cdot \theta}{t} \cdot E(t)$ for all $t$, we have that
\begin{align*}
    \EX_{\mO}[\Var[\ell_1|\mO]] &\le \EX_{\mO}\biggr[p^2 \cdot |\mT_1|+p^3 \cdot \sum_{t \ge 1} t^2 L(t)\biggr]\\
    &= p^2|\mT_1|+K \theta \cdot p^3 \sum_{t \ge 1} t \cdot E(t)\\
    &\le (p^2 + 3 K p^3 \cdot \theta) \cdot T.
\end{align*}

A similar analysis to the above shows that $\EX[\ell_2]=p\cdot |\mT_2|$ and that
\begin{align*}
    \EX_{\mO}[\Var[\ell_2|\mO]] &\le \EX_{\mO}\biggr[p \cdot |\mT_2|+p \cdot \sum_{t \ge 1} t^2 L(t) \biggr] \\
    &= p|\mT_2|+K \theta \cdot p \sum_{t \ge 1} t \cdot E(t)\\
    &\le (p + 3 K p \cdot \theta) \cdot T.
\end{align*}
Hence, as in Theorem 4.1, we have $\EX[\ell|\mO] = T$ and $\EX_{\mO}[\Var[\ell|\mO]] \le 4T \cdot (p^{-2} + 3 K p^{-1} \theta).$ Thus, $\EX[\ell] = \EX[\EX[\ell|\mO]] = T$ and $\Var[\ell] = \EX_{\mO}[\Var[\ell|\mO]] + \Var_{\mO}[\EX[\ell|\mO]] \le 4T \cdot (p^{-2} + 3 K p^{-1} \theta)$ by the laws of total expectation/variance. Therefore, since $K$ is a constant, the variance is the same as in Theorem 4.1, up to a constant. Therefore, we still have that $\ell \in (1 \pm O(\eps)) \cdot T$ with probability at least $2/3.$

Finally, the expected space complexity is bounded by 
\begin{align*}
mp+\sum_{e \in m} Pr[\mO(e)=\heavy] &\le mp + \sum_{t \ge 1} E(t) \cdot K \cdot \frac{t}{\theta} \\
&= mp + \frac{K}{\theta} \cdot \sum_{t \ge 1} E(t) \cdot t \\
&= O(mp+T/\theta),    
\end{align*}
since $\sum_{t \ge 1} E(t) \cdot t = 3T$ and $K = O(1)$.
Hence, setting  $\theta$ and $p$ as in the proof of Theorem 4.1 implies that the returned value is a $(1\pm\eps)$-approximation of $T$, and the the space complexity is $O(\eps^{-1}(m/\sqrt{T}+\sqrt m))$ as before.
\end{proof}